\theoremstyle{definition}
\newtheorem{definition}{Definition}
\theoremstyle{plain}
\newtheorem{theorem}{Theorem}
\newtheorem{lemma}{Lemma}
\newtheorem{corollary}{Corollary}
\newtheorem{proposition}[theorem]{Proposition}
\theoremstyle{remark}
\newtheorem{remark}{Remark}
\newtheorem{example}{Example}
\newcommand{\LK}{\textbf{LK}}
\newcommand{\seq}{\vdash}
\newcommand{\Gplus}{\textbf{G3c}$^+$}
\newcommand{\all}{\ensuremath{\forall}}
\newcommand{\alll}{\ensuremath{\forall\colon \text{l}}}
\newcommand{\allr}{\ensuremath{\forall\colon \text{r}}}
\newcommand{\ex}{\ensuremath{\exists}}
\newcommand{\exl}{\ensuremath{\exists\colon \text{l}}}
\newcommand{\exr}{\ensuremath{\exists\colon \text{r}}}
\newcommand{\Gcal}{{\cal G}}
\newcommand{\Ccal}{\mathcal{C}}
\newcommand{\Ehat}{\hat{E}}
\newcommand{\DNF}{\textit{DNF}}
\newcommand{\UC}{\textit{UC}}
\newcommand{\AllCl}[1]{\ensuremath{\mathcal{I}(#1)}}
\newcommand{\seHs}{SEHS }
\newcommand{\EH}[1]{\textit{EH}_{#1}}
\newcommand{\AllMul}{$\forall$-multiplicity }
\newcommand{\ExMul}{$\exists$-multiplicity }
\newcommand{\NA}{\textit{NA}}
\newcommand{\NB}{\textit{NB}}
\newcommand{\RI}[1]{\ensuremath{\mathcal{J}(#1)}}
\newcommand{\tuples}[1]{\ensuremath{\vec{#1}}}
\newcommand{\lengthOfTuple}[1]{\ensuremath{\textit{ar}(#1)}}
\journal{Theoretical Computer Science}
\begin{document}

\begin{frontmatter}



\title{\tnoteref{ProjektMichael}The problem of $\Pi_2$-cut-introduction}
\tnotetext[ProjektMichael]{Funded by FWF project W1255-N23.}

\author[Alex]{Alexander Leitsch}
\ead{leitsch@logic.at}

\author[Michael]{Michael Lettmann}
\ead{michael.lettmann@tuwien.ac.at}

\address[Alex]{Institute of Computer Languages, Technische Universit\"at Wien, Vienna, 
Austria}
\address[Michael]{Institute of Information Systems, Technische Universit\"at Wien, 
Vienna, Austria}

\begin{abstract}
We describe an algorithmic method of proof compression based on the introduction of 
$\Pi_2$-cuts into a cut-free LK-proof. The current approach is based on an inversion of Gentzen's cut-elimination method and extends former methods for introducing $\Pi_1$-cuts. The Herbrand instances of a cut-free proof $\pi$ of a sequent $S$ are described by a grammar $G$ which encodes  substitutions defined in the elimination of quantified cuts. We present an algorithm which, given a grammar $G$, constructs a $\Pi_2$-cut formula $A$ and a proof $\pi'$ of $S$ with one cut on $A$. It is shown that, by this algorithm, we can achieve an exponential proof compression. 
\end{abstract}

\begin{keyword}
Proof Theory \sep Automated Deduction \sep Proof Grammar \sep Sequent Calculus \sep 
Lemma Generation
\end{keyword}

\end{frontmatter}

\section{Introduction}\label{sec.intro}

The backbone of mathematical theories are theorems and their proofs.\! Proofs are typically structured in a way that they are using lemmas, auxiliary results proven in advance. The interdependence of proofs and lemmas define a structure which makes up the true theory of a mathematical subject. In proof theory the {\em elimination} of lemmas (appearing in the form of cut-elimination) is a key technique for proving consistency~\cite{Gentzen.1935} and proof mining~\cite{Leitsch.2015}. One of the most prominent results in proof theory is Gentzen's cut-elimination theorem~\cite{gentzenG1934-35aa}; its proof is based on a procedure of stepwise simplification of cut formulas and permutation of inferences, ending up in a final elimination of ``simple'' cut-derivations. This procedure yields cut-free proofs (proofs without cut rules) which are somehow artificial objects in the sense that they hardly appear in real mathematics. However, there exists a rich source of cut-free proofs - automated theorem provers. Theorem proving programs based on the tableau method~\cite{Haehnle.2001} produce proofs which can be interpreted as cut-free proofs in sequent calculus. On the other hand, resolution theorem provers are capable of producing lemmas; but these are simple universally closed disjunctions of literals. In particular resolution provers {\em cannot} produce lemmas with alternating quantifiers, e.g. of $\Pi_2$-type. In fact also resolution proofs allow an easy extraction of Herbrand conjunctions  from which cut-free proofs can be easily constructed. Proof search profits from the use of analytic calculi as there is no need to construct formulas which are not subformulas of the problem (there is, however, the possibility to use analytic cuts in combination with structural clause form transformation~\cite{BEL.2001}). The price to be payed for efficient proof search are long and unstructured proofs which are hard to interpret. This suggests a postprocessing of mechanically generated proofs by introducing additional structure. One method of structuring these proofs consists in the introduction of lemmas. 

\smallskip

Work on cut-introduction can be found at various places in the literature.
Closest to our work are other approaches which abbreviate or structure
{\em given input proofs}: in~\cite{WoltzenlogelPaleo10Atomic} an
algorithm for the introduction of atomic cuts is developed that is capable of exponential proof compression.
There exist several contributions to proof compression by cut-introduction in propositional logic: a method defined in~\cite{Finger07Equal} is
shown to never increase the size of proofs more than polynomially, in~\cite{dagostinoM2008aa} compression by cut-introduction is described in the more general context of cut-based abduction; the paper~\cite{dagostinoM2013aa} presents a general framework for theorem proving with analytic and bounded cut. 
Another approach to the compression of first-order proofs is based on 
introduction of definitions for abbreviating terms and can be found in~\cite{Vyskocil10Automated}.

\smallskip

This paper should be also considered part of a large
body of work on the generation of non-analytic formulas that has been carried out
by numerous researchers in various communities. Methods for lemma generation are of crucial importance
in inductive theorem proving where, frequently, generalization is needed~\cite{Bundy01Automation};
see e.g.~\cite{Ireland96Productive} for a method in the context of rippling~\cite{Bundy05Rippling}
which is based on failed proof attempts.
In automated theory formation~\cite{Colton01Automated,Colton02Automated}, an eager
approach to lemma generation is adopted. 

\smallskip

A method of lemma generation based on proof theoretic techniques (algorithmic introduction of $\Pi_1$-cuts) has been defined in~\cite{hetzlS2014aa}. This method evolved from an investigation of the sets of terms generated by substitutions defined by the elimination of $\Pi_1$-cuts via Gentzen's procedure. These sets of terms, which characterize the instantiations needed for a Herbrand sequent, are specified by so-called $\Pi_1$-grammars (these are totally rigid acyclic tree grammars). The method of cut-introduction roughly works as follows: given the set of Herbrand instantiation terms $H$ a (compressing) $\Pi_1$-grammar $G$ is defined which generates $H$. $G$ contains potential information about the substitutions generated by cut-elimination. In the next step a solution of a second-order problem (defined by the concept of  schematic extended Herbrand sequent) yields the corresponding universal cut formulas $A_1,\ldots,A_n$. From these formulas and the quantifier instantiations defined by the grammar a proof with the cut formulas $A_1,\ldots,A_n$ is eventually constructed. In~\cite{hetzlS2014aa} it is also proven that the method is capable of compressing proofs exponentially (which is the maximal compression possible via $\Pi_1$-cuts). In a recent publication~\cite{afshariB2015aa} a new grammar type ($\Pi_2$-grammar) is defined which characterizes the elimination of $\Pi_2$-cuts. $\Pi_2$-lemmas are more expressive than $\Pi_1$-lemmas: consider e.g. the property that a one-place predicate $I$ holds for infinitely many arguments which can be formalized by $\forall x \exists y.I(y) \land x<y$, $<$ being a transitive, irreflexive relation. Moreover, the complexity of cut-elimination in proofs with $\Pi_2$-cuts is superexponential. Therefore an algorithmic method for introducing $\Pi_2$-cuts via $\Pi_2$-grammars would be of major mathematical interest and could yield stronger means of proof compression.

\smallskip

In this paper we generalize the method for introducing $\Pi_1$-cuts described above to the introduction of a (single) $\Pi_2$-cut. Our starting point is a $\Pi_2$-grammar $G$ generating the Herbrand instantiation terms $H$ defined by a cut-free proof $\varphi$. From $G$ and $\varphi$ we define a unification problem which (under conditions to be defined below) is solvable and yields a $\Pi_2$-cut formula $A$ and a proof with one cut on $A$  as a solution. We prove that the method is capable of compressing proofs exponentially and discuss some experiments with the implementation of the method.

\smallskip

The paper is organized as follows: in Section~\ref{sec.prooftinfra} we describe the proof-theoretic infrastructure and recall the most important results from~\cite{hetzlS2014aa}. In a next step we extend the terminology of the $\Pi_1$ case to $\Pi_2$-cut introduction. In particular we adapt the concepts of extended Herbrand sequents and schematic extended Herbrand sequents accordingly. In Section~\ref{sec.grammars} we define schematic $\Pi_2$-grammars, a simplified version of the grammars defined in~\cite{afshariB2015aa}. A characterization of the solvability of the schematic extended Herbrand sequent, which is the key step in our method of cut-introduction, is given in Section~\ref{sec.cutintro}. This characterization admits an algorithmic generation of  $\Pi_2$-cuts (in case that a $\Pi_2$-cut corresponding to the given grammar can be introduced at all). We also show that there are schematic $\Pi_2$-grammars which do not yield $\Pi_2$-cuts at all, in contrast to the $\Pi_1$ case where {\em every} $\Pi_1$-grammar leads to a solution. Given a so called \emph{starting set} of atoms, the solvability of the $\Pi_2$-cut-introduction problem is shown to be decidable. However, the decidability of the general problem remains an open problem. The characterization of solvability described in  Section~\ref{sec.cutintro} yields a rather inefficient algorithm for cut-introduction. In Section~\ref{sec.Gunify} we define a method of constructing $\Pi_2$-cut formulas via a unification procedure based on a grammar $\Gcal$ (we call it $\Gcal^*$ unification); this method is more efficient and works whenever so called \emph{balanced solutions} of the problem exist. The methods defined so far work for the introduction of cut-formulas of the form $\forall x \exists y.A(x,y)$ ($A$ quantifier-free), $x$ and $y$ being (single) variables. In Section~\ref{sec.tuples} we generalize the methods to the construction of general $\Pi_2$-cuts defined by quantifier blocks. In Section~\ref{sec:app example} we construct an infinite sequence of proofs for which the 
method of $\Gcal^*$-unifiability can achieve an exponential proof compression by the introduction of $\Pi_2$-cuts. Finally we present some experiments with an implementation of the method in Section~\ref{sec.experiments}.

\smallskip

This work can be seen as a first step in the algorithmic introduction of cuts beyond 
$\Pi_1$. A full characterization of $\Pi_2$-cut introduction for a single cut and the 
generalization to the introduction of several $\Pi_2$-cuts are left to future work. Moreover, extending the method of $\Pi_2$-cut-introduction to proofs with analytic and bounded cuts would be desirable and would make cut-introduction more interesting for mathematical applications and in the context of proof theory.
The development of an algorithm computing maximally compressing schematic $\Pi_2$-grammars for a given set of Herbrand terms is planned for the near future.

\section{Preliminaries and Notation}
\label{sec:notation}

Most of the symbols in the following paper are explained when they are introduced the first time. But due to the high number of different symbols used in this paper we describe below those which are most frequently used. 

\smallskip

The capital letters $F,G$ are quantifier-free formulas which occur in the 
end-sequents. While $L$ is denoting a literal, $P$ denotes an atom and $Q$ 
might be a literal or an atom (if a second literal or atom is needed). Small letters, 
such as $f$ and $g$, are function symbols, $r,s,$ and $t$ are usually representations 
of terms, and the characters $r_1,\ldots ,r_m$ and $t_1,\ldots ,t_p$ are the designated terms of the observed schematic $\Pi_2$-grammar $\mathcal{G}$ (see Definition \ref{def.schem-pi2-grammar}) which are substitution instances of the designated variables (if not defined otherwise) $x,y$ of the binary cut formula $C$. The small letters $m,p$ are the corresponding natural numbers. For running indices we use $i,j,k,l,$ or $q$ and for fixed natural numbers we tend to use $n$. By an overline $\overline{\cdot}$ we denote the negation of a literal $\overline{L}$, atom $\overline{P}$, or sequent $\overline{S}$ (see Definition \ref{def:dual sequent operator}). By \tuples{t} we denote a tuple of terms. The Greek characters $\alpha, \beta_1,\ldots ,\beta_m$ denote the eigenvariables of the binary cut formula in a proof $\pi$. An arbitrary sequent is denoted by $S$, while we are denoting special sequents by $H$ (Herbrand sequent), $\textit{EH}_A$ (extended Herbrand sequent with the cut-formula $A$), $S(X)$ (schematic extended Herbrand sequent with the formula variable $X$), 
and $R$ (reduced representation). If we consider a non-tautological leaf of a proof, we use the symbols $S,S'$ or $J$. 

\smallskip

Furthermore, we will abbreviate the sets $\{ 1,\ldots ,n\}$ by $\mathbb{N}_n$. For 
terms $t$ or formulas $F$ we denote by V$(t)$ or V$(F)$ the set of free variables in $t$ and $F$ respectively. For a tuple of terms $\tuples{t}=(t_1,\ldots ,t_n)$, V$(\tuples{t})$ denotes $\bigcup_{i=1}^n\text{V}(t_i)$. For a set of tuples $T$, V$(T)$ denotes the union over all sets of free variables of the elements of $T$. If $f$ is an $n$-ary function symbol and $t_1,\ldots,t_n$ are terms we frequently write $ft_1\ldots t_n$ for $f(t_1,\ldots,t_n)$. So if $f$ is binary and g is unary $fgxa$ stands for $f(g(x),a)$. The transformation of sets of clauses $\mathcal{C}$ into formulas in disjunctive normal form will be denoted by $\DNF (\mathcal{C})$. For a given tuple of variables $\tuples{x}=(x_1,\ldots 
,x_n)$ and a tuple of terms $\tuples{t}=(t_1,\ldots ,t_n)$ we denote by $F[\tuples{x}
\backslash\tuples{t}]$ the substitution of all occurrences of $x_i$ with $i\in\mathbb{N}_n$ in a given formula (or term) $F$ by $t_i$. For abbreviation we write instead of $F[\tuples{x}\backslash\tuples{t}_1],\ldots,F[\tuples{x}\backslash\tuples{t}_k]$ simply $F[\tuples{x}\backslash T]$ with $T=\{\tuples{t}_1,\ldots ,\tuples{t}_k\}$ and $k$ being an arbitrary natural number. Production rules of grammars are of the form $\tau\to t$ and correspond to a substitution $[\tau\backslash t]$ that replaces $\tau$ with $t$ at a single position. For a list of production rules $\tau\to t_1,\ldots ,\tau\to t_n$ with the same non-terminal $\tau$ we write $\tau\to t_1\; |\;\ldots\; |\; t_n$. For sequents $S_1\colon \Gamma_1\seq\Delta_1$ and $S_2\colon\Gamma_2\seq\Delta_2$, $S_1\circ S_2$ denotes the concatenation $\Gamma_1,\Gamma_2\seq\Delta_1,\Delta_2$ of $S_1$ and $S_2$.

\section{A Motivating Example}
\label{sec:Example}

To illustrate the effect of $\Pi_2$-cuts, we present the following sequence of sequents. 
Let $n\ge 2$ be a natural number and 
\begin{align*}
A_n:= &\; \forall x.\bigl( P(x,f_1x)\lor\ldots\lor P(x,f_nx)\bigr) , \\
B:= &\; \forall x,y.\bigl( P(x,y)\to P(x,fy)\bigr) , \\
Z_n:= &\; P(x_1,fx_2)\land P(fx_2,fx_3)\land\ldots\land P(fx_{n-1},fx_n), \\ 
C_n:= &\; \forall x_1,\ldots ,x_n.\bigl( Z_n\to P(x_1,gx_n)\bigr) \text{, and} \\ 
D:= &\; \exists u,v. P(u,gv).
\end{align*}
Then the sequents $S_n\colon A_n,B,C_n\seq D $
are provable. All cut-free {\LK}-proofs of the sequents $S_n$ require more than 
$n^{n}$ quantifier inferences of $A_n,B,C_n$ and $D$. But note that there are proofs 
of the sequents 
\begin{align*}
A_n,B\seq\forall x\exists y.P(x,fy) \;\text{ and }\;
\forall x\exists y.P(x,fy),C_n\seq D
\end{align*}
using only $O(n)$ quantifier inferences. As a consequence we can prove the sequents $S_n$ 
by a linear number of instances; this compression can be achieved by the 
$\Pi_2$-cut-formula $\forall x\exists y.P(x,fy)$ (see Section \ref{sec:app example}).

\section{Proof-Theoretic Infrastructure}\label{sec.prooftinfra}

A {\em sequent} $S$ is an ordered pair of sets of formulas, written as 
$\Gamma \seq \Delta$; we call $\Gamma$ the {\em antecedent} and $\Delta$ the 
{\em succedent} of $S$. The sequent calculus we use is \textbf{G3c} together with the 
cut-rule, which we call {\Gplus};  note that \textbf{G3c} is an invertible version 
of cut-free {\LK} (see \cite{troelstraA1996aa}). In this paper we 
only consider proofs of prenex skolemized end-sequents 
(the antecedents are only universal, the succedents existential); note that this 
restriction is not essential as every sequent is provability-equivalent to such a 
normal form. Every cut-free proof $\varphi$ of a prenex end-sequent $S$ can be 
transformed into a cut-free proof $\psi$ of $S$ (without increase of proof length) 
s.t.\ $\psi$ contains a {\em midsequent} $S^*$, i.e. a sequent in $\psi$ such that all quantifier inferences in $\psi$ are below $S^*$ and all propositional ones above~\cite{gentzenG1934-35aa}. Let $S'$ be the sequent that contains only the quantifier-free formulas of $S^*$. We call $S'$ a {\em Herbrand sequent} 
of $S$. Herbrand sequents of cut-free proofs will play a crucial role in our 
cut-introduction method.
\begin{example}
Let $S\colon P(a)\lor P(b)\lor P(c)\seq \exists x.P(x)$ be a sequent where $P$ is a unary predicate symbol and $a,b,c$ are terms. A cut-free proof $\psi$ is 
\begin{center}
\begin{fCenter}
\Axiom$ \fCenter $
\LeftLabel{\textit{Ax}}
\UnaryInf$P(a) \;\fCenter\seq P(a),P(b),P(c),\Delta$
\Axiom$ \;\fCenter\pi $
\UnaryInf$P(b)\lor P(c) \;\fCenter\seq P(a),P(b),P(c),\Delta$
\LeftLabel{$\lor\colon$l}
\BinaryInf$P(a)\lor P(b)\lor P(c) \;\fCenter\seq P(a),P(b),P(c),\Delta$
\LeftLabel{\exr}
\UnaryInf$P(a)\lor P(b)\lor P(c) \;\fCenter\seq P(a),P(b),\Delta$
\LeftLabel{\exr}
\UnaryInf$P(a)\lor P(b)\lor P(c) \;\fCenter\seq P(a),\Delta$
\LeftLabel{\exr}
\UnaryInf$P(a)\lor P(b)\lor P(c) \;\fCenter\seq \Delta$
\DisplayProof
\end{fCenter}
\end{center}
where $\pi=$
\begin{center}
\begin{fCenter}
\Axiom$ \fCenter $
\LeftLabel{\textit{Ax}}
\UnaryInf$P(b) \;\fCenter\seq P(a),P(b),P(c),\Delta$
\Axiom$ \fCenter $
\RightLabel{\textit{Ax}}
\UnaryInf$P(c) \;\fCenter\seq P(a),P(b),P(c),\Delta$
\LeftLabel{$\lor\colon$l}
\BinaryInf$P(b)\lor P(c) \;\fCenter\seq P(a),P(b),P(c),\Delta$
\DisplayProof
\end{fCenter}
\end{center}
and $\Delta :=\exists x.P(x)$. The midsequent $M$ of $\psi$ is 
\[P(a)\lor P(b)\lor P(c)\seq P(a),P(b),P(c),\Delta\]
and therefore, 
\[P(a)\lor P(b)\lor P(c)\seq P(a),P(b),P(c)\]
is a Herbrand sequent of $S$.
\end{example}
Let $\tuples{x}=(x_1,\ldots ,x_n)$ and $\tuples{t}=(t_1,\ldots ,t_n)$ are tuples of terms;  
then $[\tuples{x}\backslash\tuples{t}]$ denotes the substitution $[x_1\backslash t_1, 
\ldots ,x_n\backslash t_n]$. Furthermore, we denote by $|_i$ for $i\in\mathbb{N}_n$ 
the projection that gives the $i$-th element of a tuple $\tuples{t}|_i:=t_i$. For the 
complexity measurement of Herbrand sequents we have to count the minimal number of instantiations needed to introduce all terms.
\begin{definition}[Instantiation complexity]
Let $\{\vec{t}_1,\ldots,\vec{t}_k\}$ be a set of $n$-tuples of ground terms and let $P$ be an $n$-ary predicate symbol. Then we define the instantiation complexity of $T$, denoted by $\sharp T$, as the minimal number of quantifier inferences in a cut-free proof of the sequent 
$$\forall x_1 \cdots x_n.P(x_1,\ldots,x_n) \seq P(\vec{t}_1) \land \cdots \land P(\vec{t}_k).$$
\end{definition}
\begin{example}
Let $\vec{a}=(r,s,t)$ and $\vec{b}=(r,s,s)$ be tuples of terms $r,s,t$ and $T=\{\vec{a},\vec{b}\}$. Then 
\[\sharp T = 4\]
because of 
\begin{center}
\begin{fCenter}
\Axiom$ \fCenter $
\RightLabel{\textit{Ax}}
\UnaryInf$P(r,s,t),P(r,s,s),\Gamma_3 \;\fCenter\seq P(\vec{a})$
\Axiom$ \fCenter $
\RightLabel{\textit{Ax}}
\UnaryInf$P(r,s,t),P(r,s,s),\Gamma_3 \;\fCenter\seq P(\vec{b})$
\RightLabel{$\land\colon$r}
\BinaryInf$P(r,s,t),P(r,s,s),\Gamma_3 \;\fCenter\seq P(\vec{a})\land P(\vec{b})$
\RightLabel{\alll}
\UnaryInf$P(r,s,t),\forall x_3.P(r,s,x_3),\Gamma_2 \;\fCenter\seq P(\vec{a})\land P(\vec{b})$
\RightLabel{\alll}
\UnaryInf$\forall x_3.P(r,s,x_3),\Gamma_2 \;\fCenter\seq P(\vec{a})\land P(\vec{b})$
\RightLabel{\alll}
\UnaryInf$\forall x_2,x_3.P(r,x_2,x_3),\Gamma_1 \;\fCenter\seq P(\vec{a})\land P(\vec{b})$
\RightLabel{\alll}
\UnaryInf$\forall x_1,x_2,x_3.P(x_1,x_2,x_3) \;\fCenter\seq P(\vec{a})\land P(\vec{b})$
\DisplayProof
\end{fCenter}
\end{center}
\begin{align*}
\Gamma_1 := &\; \{\forall x_1,x_2,x_3.P(x_1,x_2,x_3)\} \\
\Gamma_2 := &\; \{\forall x_2,x_3.P(r,x_2,x_3)\}\cup\Gamma_1 \\
\Gamma_3 := &\; \{\forall x_3.P(r,s,x_3)\}\cup\Gamma_2
\end{align*}
being a proof with the minimal number of quantifier inferences.
\end{example}
The capital letters $F$ and $G$ are quantifier free formulas used to define the end sequent $\forall\vec{x}.F\seq\exists\vec{y}.G$ for the rest of this paper.
\begin{definition}[Herbrand sequent]\label{def.Herbrandsequent}
Let $S\colon \forall \tuples{x}.F\seq\exists\tuples{y}.G$ be a given sequent, where 
$\tuples{x} = (x_1,\ldots,x_m)$, $\tuples{y} = (y_1,\ldots,y_l)$ and let
\begin{align*}
H:=F[\tuples{x}\backslash\tuples{t}_1],\ldots ,F[\tuples{x}\backslash\tuples{t}_k]\seq 
G[\tuples{y}\backslash\tuples{t}_{k+1}],\ldots ,G[\tuples{y}\backslash\tuples{t}_n]
\end{align*}
be a valid sequent 
where $F[\tuples{x}\backslash\tuples{t}_i]$ for $i\in\mathbb{N}_k$ are instances of $F$ 
and $G[\tuples{y}\backslash\tuples{t}_j]$ for $j\in\{ k+1,\ldots ,n\}$ are instances of $G$. 
Then we call $H$ a {\em Herbrand sequent} of $S$. The {\em complexity} of $H$ is 
defined as 
\begin{align*}
|H|:=\sharp\{\tuples{t}_1,\ldots ,\tuples{t}_k\} +\sharp\{\tuples{t}_{k+1},\ldots ,\tuples{t}_n\} .
\end{align*}
\end{definition}

We could further simplify the sequents in Definition \ref{def.Herbrandsequent} to 
sequents with an empty succedent, since every sequent is provability-equivalent to 
such a normal form as well. But for the sake of readability of the definitions in the 
following sections we will always consider sequents of this form. Otherwise, we would 
have to include a discussion about when two subformulas are \emph{separable} within a 
sequent. To give an notion of separable subformulas assume a formula $A$ with 
subformulas $B$ and $C$ within a sequent $S$. We would call $B$ and $C$ separable 
if there are formulas $A_1$ and $A_2$ such that $B$ is a subformula of $A_1$, 
$C$ is a subformula of $A_2$ and there is a unary rule in \textbf{G3c} where $A_1$ and 
$A_2$ are the active parts of the premise and $A$ is the active part of the conclusion. 
In the sequent $\seq B\lor (C\land D)$ are $B$ and $C$ separable but not $C$ and $D$. 
\begin{definition}\label{de.wsquant}
$\forall$-left inferences $\alll$ and $\exists$-right inferences $\exr$ in a proof are 
called {\em weak} quantifier inferences, $\forall$-right inferences 
$\allr$ and $\ex$-left inferences $\exl$ are called {\em strong}.
\end{definition}

We measure the quantifier-complexity of a proof $\varphi$ by the number of weak 
quantifier-inferences in $\varphi$. Note that all quantifier inferences on ancestors 
of the end-sequent are weak, and multiple uses of quantified formulas in cuts is 
necessary only for formulas starting with weak quantifiers. Hence in any such proof 
the number of strong quantifier-inferences is less or equal to the number of weak 
quantifier-inferences (see \cite{takeutiG1987aa}). 

\begin{definition}
\label{def.quantifiercomp}
Let $\pi$ be a proof in {\Gplus}; then the {\em quantifier-complexity} of $\pi$ 
is defined as the number of weak quantifier inferences in $\pi$. We write $|\pi |_q=n$ 
if $\pi$ has quantifier-complexity $n$. 
\end{definition}
The following theorem is a variant of Theorem 1 in \cite{hetzlS2014aa}.
\begin{theorem}
Assume a sequent $S:\forall\tuples{x}.F\seq\exists\tuples{y}.G$. There is a Herbrand-sequent $H$ of $S$ with $|H|=n$ iff there exists a cut-free proof $\pi$ of $S$ such that $|\pi |_q = n$.
\end{theorem}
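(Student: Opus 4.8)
The plan is to prove both directions of the equivalence by relating a Herbrand sequent to the midsequent of a cut-free proof, using Gentzen's midsequent theorem (already quoted in the excerpt) as the bridge.

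First I would prove the direction from a proof to a Herbrand sequent. Suppose $\pi$ is a cut-free proof of $S\colon\forall\bar{x}.F\vdash\exists\bar{y}.G$ with $|\pi|_q=n$. By the midsequent theorem stated in Section~\ref{sec.prooftinfra}, I may assume $\pi$ contains a midsequent $S^*$ such that all quantifier inferences lie below $S^*$ and all propositional ones above. The subproof above $S^*$ is then a purely propositional (quantifier-free) derivation, so $S^*$ is a valid quantifier-free sequent. Because the end-sequent is prenex with a universal antecedent and existential succedent, every formula in $S^*$ is an instance $F[\bar{x}\backslash\bar{t}_i]$ or $G[\bar{y}\backslash\bar{t}_j]$ introduced by a weak quantifier inference ($\alll$ on the antecedent, $\exr$ on the succedent). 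Reading off these substitutions gives exactly a Herbrand sequent $H$ in the sense of Definition~\ref{def.Herbrandsequent}. The number of instances equals the number of weak quantifier inferences, which is $|\pi|_q=n$; after taking $\pi$ to be minimal, the instances are non-redundant, so the count $\sharp\{\bar{t}_1,\ldots,\bar{t}_k\}+\sharp\{\bar{t}_{k+1},\ldots,\bar{t}_n\}$ defining $|H|$ coincides with $n$. Here I would lean on the complexity measure $\sharp T$ counting genuinely distinct term-components, matching the observation that only distinct weak instances require distinct quantifier inferences.

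Conversely, given a Herbrand sequent $H$ with $|H|=n$, I would build a cut-free proof of $S$ with $|\pi|_q=n$. Since $H$ is a valid quantifier-free sequent, propositional completeness of \textbf{G3c} yields a cut-free, quantifier-free derivation of $H$. I then append, below this derivation, the weak quantifier inferences $\alll$ and $\exr$ that reintroduce the quantifiers of $F$ and $G$, contracting multiple instances of the same formula as needed, to arrive at $S$. Each distinct instance contributes exactly one weak quantifier inference per quantifier, so the resulting proof has quantifier-complexity equal to $|H|=n$. Choosing a non-redundant $H$ and a minimal propositional derivation makes the constructed proof minimal, establishing $|\pi|_q=n$.

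The main obstacle I anticipate is the precise bookkeeping that makes the counts match exactly rather than merely up to a constant. Two subtleties need care: first, aligning the inductive complexity measure $\sharp T$ (which counts distinct components across tuples) with the raw number of weak quantifier inferences, so that shared subterms are not double-counted and the minimality hypothesis genuinely forces equality; and second, handling the interplay of contraction and the set-based (rather than multiset-based) treatment of sequents in \textbf{G3c}, ensuring that repeated instances collapse correctly in both directions. Establishing that a \emph{minimal} proof induces a Herbrand sequent whose complexity cannot be lowered—and symmetrically that a minimal Herbrand sequent yields a minimal proof—is the delicate part; the rest is the routine application of the midsequent theorem and propositional completeness.
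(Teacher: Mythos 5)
Your proposal is correct and follows essentially the same route as the paper's own proof: the midsequent theorem to read off a Herbrand sequent from a minimal proof, and the ``propositional inferences first, then weak quantifier inferences'' construction for the converse. You are in fact more explicit than the paper about the bookkeeping issues (the $\sharp T$ measure and contraction), which the paper's two-paragraph argument leaves implicit.
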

\begin{proof}
A Herbrand sequent describes exactly the terms we have to introduce by weak 
quantifier inferences. Let $H$ be a Herbrand sequent of $S$ with $|H|=n$. Then a 
cut-free proof $\pi$ with $|\pi |_q=n$ can be constructed in the following way: 
apply first all propositional inferences and afterwards all quantifier rules. 

\smallskip

Let $\pi$ be a cut-free proof of $S$. Then different terms for a given position 
of an atom can only be produced by weak quantifier inferences. Hence, the number of 
weak quantifier inferences in $\pi$ is equal to the number of different terms obtained 
by substitution, and therefore $|\pi |_q=|H|$ for $H$ being the Herbrand sequent 
obtained from $\pi$. 
\end{proof}

We define the notion of an extended Herbrand sequent as in~\cite{hetzlS2014aa}; for 
simplicity we do not consider blocks of quantifiers in the cuts, but only formulas 
of the form $\all x \ex y.A$ where $A$ is quantifier-free, $V(A) \subseteq \{x,y\}$, and $V(A)$ denotes the set of variables in $A$. As in the case of $\all$-cuts, extended Herbrand sequents represent proofs with cuts by encoding the cuts by implication formulas. As we consider only the introduction of a single $\Pi_2$ cut, we need only one formula for coding the cut.

If $U$ is a set of term tuples $\{\tuples{t}_1,\ldots , 
\tuples{t}_m\}$ and $F$ is a formula $F[\tuples{x}\backslash U]$ stands for the set of instances $\{ F[\tuples{x}\backslash\tuples{t}_1],\ldots ,F[\tuples{x}\backslash\tuples{t}_m]\}$. 
\begin{definition}[Extended Herbrand-sequent]\label{def.extHseq}
Let $S$ be a sequent of the form $\forall\tuples{x}.F\seq\exists\tuples{y}.G$ (with $\tuples{x}= 
(x_1,\ldots ,x_k)$ and $\tuples{y}=(y_1,\ldots ,y_l)$) and $A$ be a 
quantifier-free formula with $V(A) \subseteq \{x,y\}$. Let $U_1:=\{\tuples{u}_1,\ldots 
,\tuples{u}_{c}\}$ be a set of term tuples of the length $k$, $U_2:=\{\tuples{v}_1 
,\ldots ,\tuples{v}_d\}$ be a set of term tuples of the length $l$. Let 
$\beta_1,\ldots ,\beta_m,\alpha$ be variables and $t_i$ for 
$i\in\mathbb{N}_p$, $r_j$ for $j\in\mathbb{N}_m$ be terms s.t.\ 
\begin{align*}
& V(U_1) \subseteq \{\alpha\},\ V(U_2) \subseteq \{\beta_1,\ldots,\beta_m\},\\
& V(t_i)\subseteq\{\alpha\}\text{ for all }i, \ V(r_j)\subseteq\{\beta_1,\ldots 
,\beta_{j-1}\}\text{ for }j\ge 2,\text{ and } \ V(r_1)=\emptyset .
\end{align*}
Then the sequent 
\begin{align*}
& \EH{A} := F[\tuples{x}\backslash U_1],\bigvee\limits_{i=1}^pA[x\backslash\alpha 
,y\backslash t_i]\to 
\bigwedge\limits_{j=1}^mA[x\backslash r_j,y\backslash \beta_j]\seq 
G[\tuples{y}\backslash U_2]
\end{align*}
is called an \emph{extended Herbrand-sequent} of $S$ if $\EH{A}$ is a tautology. \\ 
The complexity of an extended Herbrand sequent $\EH{A}$ is 
defined as $|\EH{A}|=\sharp U_1+\sharp U_2+p+m$.
\end{definition}
For simplicity we will assume that the formulas $F$ and $G$ are constructed such that 
\begin{align*}
& F[\tuples{x}\backslash U_1],\bigvee\limits_{i=1}^pA[x\backslash\alpha 
,y\backslash t_i]\to 
\bigwedge\limits_{j=1}^mA[x\backslash r_j,y\backslash \beta_j]\seq 
G[\tuples{y}\backslash U_2]
\end{align*}
being a tautology implies the provability of the sequents 
\begin{align*}
F[\tuples{x}\backslash U_1]\seq\bigvee\limits_{i=1}^pA[x\backslash\alpha 
,y\backslash t_i]
\end{align*}
and
\begin{align*}
\bigwedge\limits_{j=1}^mA[x\backslash r_j,y\backslash \beta_j]\seq G[\tuples{y}\backslash U_2].
\end{align*}
This can be done by extending the formulas $F$ and $G$. This is only done for simpler reasoning. All statements and definitions also hold for the original case.
\begin{example}
\label{exa:extended Herbrand}
Let $u,v,w,x,y$ be variables, $\alpha ,\beta_1,\beta_2$ be eigenvariables, $c$ be a constant, $f,g,h$ be unary function symbols, and $P$ be a binary predicate symbol. Consider the following proof with a single $\Pi_2$ cut. 
\begin{center}
\begin{fCenter}
\Axiom$ \;\fCenter \pi_l $
\Axiom$ \;\fCenter \pi_r $
\LeftLabel{\textit{Cut}}
\BinaryInf$\forall u.P(u,fu)\lor P(u,gu) 
\;\fCenter\seq 
\exists v,w.P(c,v)\land P(hv,w)$
\DisplayProof
\end{fCenter}
\end{center}
with $\pi_l:=$ 
\begin{center}
\begin{fCenter}
\Axiom$ \fCenter $
\LeftLabel{\textit{Ax}}
\UnaryInf$P(\alpha ,f\alpha ),\Gamma \;\fCenter\seq 
P(\alpha ,f\alpha ),\Delta_4$
\Axiom$ \fCenter $
\RightLabel{\textit{Ax}}
\UnaryInf$P(\alpha ,g\alpha ),\Gamma \;\fCenter\seq 
P(\alpha ,g\alpha ),\Delta_5$
\LeftLabel{$\lor\colon$l}
\BinaryInf$P(\alpha ,f\alpha )\lor P(\alpha ,g\alpha ) ,\Gamma
\;\fCenter\seq 
P(\alpha ,f\alpha ),P(\alpha ,g\alpha ),\Delta_3$
\LeftLabel{$\alll$}
\UnaryInf$\forall u.P(u,fu)\lor P(u,gu) 
\;\fCenter\seq 
P(\alpha ,f\alpha ),P(\alpha ,g\alpha ),\Delta_3$
\LeftLabel{$\exr$}
\UnaryInf$\forall u.P(u,fu)\lor P(u,gu) 
\;\fCenter\seq 
P(\alpha ,f\alpha ),\exists y.P(\alpha ,y),\Delta_2$
\LeftLabel{$\exr$}
\UnaryInf$\forall u.P(u,fu)\lor P(u,gu) \;\fCenter\seq \exists y.P(\alpha ,y),\Delta_2$
\LeftLabel{$\allr$}
\UnaryInf$\forall u.P(u,fu)\lor P(u,gu) \;\fCenter\seq \forall x\exists y.P(x,y),\Delta_1$
\DisplayProof
\end{fCenter}
\end{center}
$\pi_r:=$ 
\begin{center}
\begin{fCenter}
\Axiom$ \fCenter $
\LeftLabel{\textit{Ax}}
\UnaryInf$\Lambda_5,P(c,\beta_1) 
\,\fCenter\seq\! 
P(c,\beta_1),\Xi_2$
\Axiom$ \fCenter $
\RightLabel{\textit{Ax}}
\UnaryInf$\Lambda_6,P(h\beta_1,\beta_2) 
\,\fCenter\seq\! 
P(h\beta_1,\beta_2),\Xi_2$
\LeftLabel{$\land\colon$r}
\BinaryInf$\Lambda_4,P(c,\beta_1),P(h\beta_1,\beta_2) 
\;\fCenter\seq 
P(c,\beta_1)\land P(h\beta_1,\beta_2),\Xi_2$
\LeftLabel{$\exr$}
\UnaryInf$\Lambda_4,P(c,\beta_1),P(h\beta_1,\beta_2) 
\;\fCenter\seq 
\exists w.P(c,\beta_1)\land P(h\beta_1,w),\Xi_1$
\LeftLabel{$\exr$}
\UnaryInf$\Lambda_4,P(c,\beta_1),P(h\beta_1,\beta_2) 
\;\fCenter\seq 
\exists v,w.P(c,v)\land P(hv,w)$
\LeftLabel{$\exl$}
\UnaryInf$\Lambda_3,P(c,\beta_1),\exists y.P(h\beta_1,y) 
\;\fCenter\seq 
\exists v,w.P(c,v)\land P(hv,w)$
\LeftLabel{$\alll$}
\UnaryInf$\Lambda_3,P(c,\beta_1) 
\;\fCenter\seq 
\exists v,w.P(c,v)\land P(hv,w)$
\LeftLabel{$\exl$}
\UnaryInf$\Lambda_2,\exists y.P(c,y) 
\;\fCenter\seq 
\exists v,w.P(c,v)\land P(hv,w)$
\LeftLabel{$\alll$}
\UnaryInf$\Lambda_1,\forall x\exists y.P(x,y) 
\;\fCenter\seq 
\exists v,w.P(c,v)\land P(hv,w)$
\DisplayProof
\end{fCenter}
\end{center}
and  
\begin{center}
\begin{minipage}{0.49\textwidth}
\begin{align*}
\Gamma := &\; \{\forall u.P(u,fu)\lor P(u,gu)\} \\
\Delta_1 := &\; \{\exists v,w.P(c,v)\land P(hv,w)\} \\
\Delta_2 := &\; \{\forall x\exists y.P(x,y)\}\cup\Delta_1 \\
\Delta_3 := &\; \{\exists y.P(\alpha ,y)\}\cup\Delta_2 \\
\Delta_4 := &\; \{P(\alpha ,g\alpha )\}\cup\Delta_3 \\
\Delta_5 := &\; \{P(\alpha ,f\alpha )\}\cup\Delta_3 
\end{align*}
\end{minipage}
\begin{minipage}{0.49\textwidth}
\begin{align*}
\Lambda_1 := &\; \Gamma \\
\Lambda_2 := &\; \{\forall x\exists y.P(x,y)\}\cup\Lambda_1 \\
\Lambda_3 := &\; \{\exists y.P(c,y)\}\cup\Lambda_2 \\
\Lambda_4 := &\; \{\exists y.P(h\beta_1,y)\}\cup\Lambda_3 \\
\Lambda_5 := &\; \{P(h\beta_1,\beta_2)\}\cup\Lambda_4 \\
\Lambda_6 := &\; \{P(c,\beta_1)\}\cup\Lambda_4 \\
\Xi_1 := &\; \Delta_1 \\
\Xi_2 := &\; \{\exists w.P(c,\beta_1)\land P(h\beta_1,w)\}\cup\Xi_1.
\end{align*}
\end{minipage}
\end{center}
Then the extended Herbrand sequent $\EH{P(x,y)}$ is given by 
\begin{align*}
&\; P(\alpha ,f\alpha )\lor P(\alpha ,g\alpha ), &\; \\
&\; (P(\alpha ,f\alpha )\lor P(\alpha ,g\alpha ))
\to (P(c,\beta_1)\land P(h\beta_1,\beta_2)) & \seq P(c,\beta_1)\land 
P(h\beta_1,\beta_2)
\end{align*}
where $U_1=\{\alpha\}$ and $U_2=\{ (\beta_1,\beta_2)\}$. Considering Definition \ref{def.extHseq}, the formula $P(\alpha ,f\alpha )\lor P(\alpha ,g\alpha )$ corresponds to $F[\vec{x}\backslash U_1]$, the formula $P(c,\beta_1)\land P(h\beta_1,\beta_2)$ corresponds to $G[\vec{y}\backslash U_2]$, and $P(x,y)$ corresponds to $A$. Therefore, $c$ and $h\beta_1$ are the analogue of $r_1$ and $r_2$, $m=2$, $p=2$, and $f$ and $g$ are the analogue of $t_1$ and $t_2$.

We will not give a cut-free proof because of its  size. Instead, we 
define a Herbrand sequent $H$, which provides the quantifier information of a corresponding cut-free proof. 
\begin{align*}
& P(c,fc)\lor P(c,gc), &\; \\ 
& P(hfc,fhfc)\lor P(hfc,ghfc), &\; \\ 
& P(hgc,fhgc)\lor P(hgc,ghgc) & \seq P(c,fc)\land P(hfc,fhfc), \\
& &\; P(c,fc)\land P(hfc,ghfc), \\
& &\; P(c,gc)\land P(hgc,fhgc), \\
& &\; P(c,gc)\land P(hgc,ghgc) . 
\end{align*}
Both sequents, $\EH{P(x,y)}$ and $H$, are tautological as one can easily verify. 
\end{example}
We obtain a result analogous to that in $\Pi_1$-cut-introduction~\cite{hetzlS2014aa}:
\begin{theorem}
\label{the:Pi2-cut extended Herbrand sequent}
The sequent $\forall\tuples{x}F\seq\exists\tuples{y}G$ has a proof $\pi$ with a single 
$\Pi_2$-cut $\forall x\exists y.A$ such that $|\pi |_q=n$ iff it has an extended 
Herbrand-sequent $\EH{A}$ with $|\EH{A}|=n$. 
\end{theorem}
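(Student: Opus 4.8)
The statement is an equivalence between a proof-theoretic object (a proof with one $\Pi_2$-cut) and a semantic one (a tautological extended Herbrand sequent), so the plan is to prove the two directions by exhibiting explicit, complexity-preserving translations each way. In both directions the bridge is the same bookkeeping principle already used in the $\Pi_1$-case of~\cite{hetzlS2014aa}: every weak quantifier inference in $\pi$ corresponds to exactly one instance term recorded in $\EH{A}$, and the cut on $\all x\ex y.A$ is encoded by the implication formula $\bigvee_{i=1}^p A[x\backslash\alpha,y\backslash t_i]\to\bigwedge_{j=1}^m A[x\backslash r_j,y\backslash\beta_j]$. Accordingly the complexity $k\cdot N+l\cdot M+p+m$ of Definition~\ref{def.extHseq} is set up to equal $|\pi|_q$, and the whole theorem amounts to making this correspondence precise while respecting the eigenvariable conditions.

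For the direction from $\pi$ to $\EH{A}$ I would first apply the midsequent transformation~\cite{gentzenG1934-35aa} to each of the two cut-premises separately, so that in each premise all weak quantifier inferences stand below the propositional ones. Reading off the substitution terms then yields all the data of Definition~\ref{def.extHseq}: the $\alll$-instances of $\forall\bar x F$ give $U_1$, the $\exr$-instances of $\exists\bar y G$ give $U_2$, the $\exr$-witnesses for $y$ in the left premise give $t_1,\dots,t_p$, the $\alll$-instances for $x$ in the right premise give $r_1,\dots,r_m$, while the eigenvariables of the strong cut-inferences give $\alpha$ (the $\allr$ on the left) and $\beta_1,\dots,\beta_m$ (the $\exl$'s on the right). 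The eigenvariable conditions of $\pi$ translate directly into the restrictions $V(U_1)\subseteq\{\alpha\}$, $V(U_2)\subseteq\{\beta_1,\dots,\beta_m\}$, $V(t_i)\subseteq\{\alpha\}$ and $V(r_j)\subseteq\{\beta_1,\dots,\beta_{j-1}\}$. The propositional skeleton of the left premise is then a tautology $F[\bar x\backslash U_1]\seq\bigvee_i A[x\backslash\alpha,y\backslash t_i]$ and that of the right premise a tautology $\bigwedge_j A[x\backslash r_j,y\backslash\beta_j]\seq G[\bar y\backslash U_2]$; gluing them through the implication shows $\EH{A}$ is a tautology, since under any assignment either the disjunction is false (and the left tautology already discharges the succedent) or it is true (and the implication together with the right tautology discharges it). Counting the weak inferences gives $|\EH{A}|=|\pi|_q$.

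For the converse I would invert this construction. From the tautology $\EH{A}$ I read off two premises: the left premise folds the $F$-instances and proves $C=\all x\ex y.A$ in the succedent, with propositional core $F[\bar x\backslash U_1]\seq\bigvee_i A[x\backslash\alpha,y\backslash t_i]$; the right premise consumes $C$ from the antecedent and folds the $G$-instances, with core $\bigwedge_j A[x\backslash r_j,y\backslash\beta_j]\seq G[\bar y\backslash U_2]$ (exactly as in $\pi_l$ and $\pi_r$ of Example~\ref{exa:extended Herbrand}). On the left I attach the $\exr$-inferences for the $t_i$, a contraction into $\ex y.A[x\backslash\alpha,y]$, and an $\allr$ on $\alpha$; crucially the $\alll$-inferences folding $F[\bar x\backslash U_1]$ must be placed \emph{above} this $\allr$, so that $\alpha$ has already been removed from the antecedent when the eigenvariable condition is checked. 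On the right I build $C$ in the antecedent by the interleaved block of $\alll$-on-$r_j$ and $\exl$-on-$\beta_j$ inferences, folding each $G$-instance by $\exr$ before the $\exl$ that would otherwise expose its $\beta_j$. A single cut on $C$, followed by the contractions of the duplicated end-sequent formulas in \Gplus{}, then yields a proof of $\forall\bar x F\seq\exists\bar y G$ whose number of weak inferences is by construction $k\cdot N+l\cdot M+p+m=|\EH{A}|$.

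The step I expect to be the main obstacle is the strong quantifier that is new compared with the $\Pi_1$-case: on the right premise the eigenvariables $\beta_1,\dots,\beta_m$ and the weak terms $r_1,\dots,r_m$ are mutually dependent, and it is precisely the condition $V(r_j)\subseteq\{\beta_1,\dots,\beta_{j-1}\}$ that must be exploited to realise the nested derivation of $C$ in the antecedent. I would prove its correctness by induction on $j$, eliminating $\beta_m,\beta_{m-1},\dots,\beta_1$ from the outside inward and verifying that at the moment each $\exl$ is applied its eigenvariable no longer occurs in the remaining context. A secondary delicacy, needed so that $|\pi|_q$ equals $|\EH{A}|$ exactly rather than up to a constant factor, is to guarantee that the end-sequent instances are folded in only one premise each — the $F$-instances (which may contain $\alpha$) on the left and the $G$-instances (which may contain the $\beta_j$) on the right — so that no weak inference is duplicated across the cut; this placement is in fact forced by the eigenvariable conditions above, which is why the two one-sided cores, rather than the naive case split of the implication, are the right objects to reconstruct.
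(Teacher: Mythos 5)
Your overall strategy is the paper's: read off the instances of the end-formulas and of the cut formula to obtain $\EH{A}$ in one direction, and in the other direction recover the two cut-premises from $\EH{A}$, rebuild the quantifier inferences around them respecting $V(U_1)\subseteq\{\alpha\}$, $V(t_i)\subseteq\{\alpha\}$, $V(U_2)\subseteq\{\beta_1,\ldots,\beta_m\}$, $V(r_j)\subseteq\{\beta_1,\ldots,\beta_{j-1}\}$, and glue with a single cut. But one step fails as stated: the claim that the two ``one-sided cores'' $F[\bar{x}\backslash U_1]\seq\bigvee_{i=1}^p A[x\backslash\alpha,y\backslash t_i]$ and $\bigwedge_{j=1}^m A[x\backslash r_j,y\backslash\beta_j]\seq G[\bar{y}\backslash U_2]$ are tautologies, and that this is ``forced by the eigenvariable conditions.'' From $\EH{A}$ being a tautology, invertibility of \textbf{G3c} yields only the two-sided premises $F[\bar{x}\backslash U_1]\seq\bigvee_i A[x\backslash\alpha,y\backslash t_i],G[\bar{y}\backslash U_2]$ and $F[\bar{x}\backslash U_1],\bigwedge_j A[x\backslash r_j,y\backslash\beta_j]\seq G[\bar{y}\backslash U_2]$. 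If, for instance, the reduced representation $F[\bar{x}\backslash U_1]\seq G[\bar{y}\backslash U_2]$ happens to be valid on its own, then $\EH{A}$ is a tautology for an arbitrary $A$ and neither one-sided core need be valid. The eigenvariable conditions constrain only the \emph{order} of inferences (the $F$-instances must be folded below the $\allr$ on $\alpha$, each $G$-instance below the $\exl$ on the $\beta_j$ it contains); they do not expel the $G$-instances from the left premise or the $F$-instances from the right one, and in an arbitrary $\pi$ both end-formulas may be active on both sides of the cut. The paper accordingly works with the two-sided premises throughout, in both directions.

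The gap is reparable: your cut-construction and your semantic gluing argument survive with the two-sided premises essentially unchanged (if the disjunction is false, the left two-sided premise gives that $F[\bar{x}\backslash U_1]$ is false or $G[\bar{y}\backslash U_2]$ is true, and either way $\EH{A}$ holds; symmetrically when it is true). But as written the intermediate claim is false and the justification offered for it is unsound, and your exact complexity count leans on it. Two smaller remarks: reading $\EH{A}$ off from $\pi$ only yields $|\EH{A}|\le|\pi|_q$, since distinct weak inferences may introduce identical instances --- the paper restores equality by padding with dummy instances; and your worry about the nested $\exl$/$\alll$ block on the right is legitimate but handled exactly as you propose, the condition $V(r_j)\subseteq\{\beta_1,\ldots,\beta_{j-1}\}$ being precisely what makes the alternating derivation of $\forall x\exists y.A$ in the antecedent well-formed.
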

\begin{proof}
For the left-to-right direction we pass through the proof $\pi$ and read off the instances of quantified formulas (of both the end-formula and the cut). We obtain an extended Herbrand sequent $\EH{A}$ with $|\EH{A}|\le |\pi |_q$ (which can be padded with dummy instances if necessary in order to obtain $|\EH{A}|=|\pi |_q$). \\ 
Let 
\begin{center}
\begin{minipage}{.39\textwidth}
\begin{align*}
\Gamma:=&\; F[\tuples{x}\backslash U_1], \\
\Delta:=&\; G[\tuples{y}\backslash U_2], \\
\Lambda:=&\; \forall\tuples{x}.F, \\
\Xi:=&\; \exists\tuples{y}.G,
\end{align*}
\end{minipage}
\begin{minipage}{.59\textwidth}
\begin{align*}
\Xi^l_1:=&\; \{\forall x\exists y.A\}\cup\Xi, \\
\Xi^l_2:=&\; \{\exists y.A[x\backslash\alpha ]\}\cup\Xi^l_1, \\
\Xi^l_{i+1}:=&\; \{A[x\backslash\alpha ,y\backslash t_{i-1}]\}\cup\Xi^l_i\text{ for }i\ge 2, \\
\Lambda^r_1:=&\; \{\forall x\exists y.A\}\cup\Lambda, \\
\Lambda^r_{2i}:=&\; \{\exists y.A[x\backslash r_i]\}\cup\Lambda^r_{2i-1}\text{ for }i\ge 1,\\
\Lambda^r_{2i+1}:=&\; \{A[x\backslash r_i,y\backslash\beta_i]\}\cup\Lambda^r_{2i}\text{ for }i\ge 1.
\end{align*}
\end{minipage}
\end{center}
For the right-to-left direction we conclude from 
\begin{center}
\begin{fCenter}
\Axiom$\Gamma \;\fCenter\seq A[x\backslash\alpha ,y\backslash t_i] 
_{i=1}^p,\Delta$
\LeftLabel{$\lor\colon$r}
\UnaryInf$ \fCenter \vdots $
\LeftLabel{$\lor\colon$r}
\UnaryInf$\Gamma \;\fCenter\seq \bigvee\limits_{i=1}^pA[x\backslash 
\alpha ,y\backslash t_i],\Delta$
\Axiom$\Gamma,A[x\backslash r_j,y\backslash \beta_j]_{j=1}^m 
\;\fCenter\seq \Delta$
\RightLabel{$\land\colon$l}
\UnaryInf$ \fCenter \vdots $
\RightLabel{$\land\colon$l}
\UnaryInf$\Gamma,\bigwedge\limits_{j=1}^mA[x\backslash r_j, 
y\backslash \beta_j] \;\fCenter\seq \Delta$
\LeftLabel{$\to\colon$l}
\BinaryInf$\Gamma,\bigvee\limits_{i=1}^pA[x\backslash\alpha 
,y\backslash t_i]\to 
\bigwedge\limits_{j=1}^mA[x\backslash r_j,y\backslash \beta_j] \;\fCenter\seq 
\Delta$
\DisplayProof
\end{fCenter}
\end{center}
that 
\begin{center}
\begin{fCenter}
\Axiom$\fCenter\vdots $
\UnaryInf$\Lambda^l,\Gamma \;\fCenter\seq A[x\backslash\alpha ,y\backslash t_i]_{i=1}^p, 
\Xi^l_{p+1}$
\LeftLabel{$\alll$}
\UnaryInf$ \fCenter \vdots $
\LeftLabel{$\alll$}
\UnaryInf$\Lambda \;\fCenter\seq A[x\backslash\alpha ,y\backslash t_i] 
_{i=1}^p,\Xi^l_{p+1}$
\LeftLabel{$\exr$}
\UnaryInf$ \fCenter \vdots $
\LeftLabel{$\exr$}
\UnaryInf$\Lambda \;\fCenter\seq A[x\backslash\alpha ,y\backslash t_1],\Xi^l_2$
\LeftLabel{$\exr$}
\UnaryInf$\Lambda \;\fCenter\seq \exists y.A[x\backslash\alpha ],\Xi^l_1$
\LeftLabel{$\allr$}
\UnaryInf$\Lambda \;\fCenter\seq \forall x\exists y.A,\Xi$
\Axiom$\fCenter\vdots $
\UnaryInf$\Lambda^r_{2m},A[x\backslash r_j,y\backslash \beta_j]_{j=1}^m 
\;\fCenter\seq \Delta,\Xi^r$
\RightLabel{$\exr$}
\UnaryInf$ \fCenter \vdots $ 
\RightLabel{$\exr$}
\UnaryInf$\Lambda^r_{2m},A[x\backslash r_j,y\backslash \beta_j]_{j=1}^m 
\;\fCenter\seq \Xi$
\RightLabel{$\exl$}
\UnaryInf$ \fCenter \vdots $
\RightLabel{$\alll$}
\UnaryInf$\Lambda^r_2,A[x\backslash r_1,y\backslash\beta_1] 
\;\fCenter\seq \Xi$
\RightLabel{$\exl$}
\UnaryInf$\Lambda^r_1,\exists y.A[x\backslash r_1] 
\;\fCenter\seq \Xi$
\RightLabel{$\alll$}
\UnaryInf$\Lambda,\forall x\exists y.A \;\fCenter\seq \Xi$
\LeftLabel{\textit{Cut}}
\BinaryInf$\Lambda \;\fCenter\seq \Xi$
\DisplayProof
\end{fCenter}
\end{center}
is a valid proof with a single $\Pi_2$-cut. The provability of 
\[\Gamma \seq A[x\backslash\alpha ,y\backslash t_i] _{i=1}^p,\Delta\] 
is given by the extended Herbrand sequent being a tautology and it implies the provability of 
\[\Lambda^l,\Gamma \seq A[x\backslash\alpha ,y\backslash t_i]_{i=1}^p, \Xi^l_{p+1}.\] 
Note that we defined $\Gamma$ and $\Delta$ such that 
\[\Gamma \seq A[x\backslash\alpha ,y\backslash t_i] _{i=1}^p\] 
is provable. The reasoning for the right branch is analogous. The notation $A[x\backslash \alpha ,y\backslash t_i]_{i=1}^p$ is an abbreviation for $A[x\backslash\alpha ,y \backslash t_1],$ $\ldots ,A[x\backslash\alpha ,y\backslash t_p]$. The labels $\lor\colon$r, $\to\colon$l, and $\land\colon$l denote the used rule, i.e.\ the {\em right-disjunction rule}, the {\em left-implication rule}, and the {\em left-conjunction rule}. In the latter variant the dots represent a multiple application of $\allr$ and $\exr$. In the particular case between the sequents $\Lambda^r_2,A[x\backslash r_1,y\backslash\beta_1] \seq \Xi$ and $\Lambda^r_{2m},A[x\backslash r_j,y\backslash \beta_j]_{j=1}^m \seq \Xi$ the dots denote an alternating application of $\alll$ and $\exl$ ($m-1$ times). 

Given that every term of $\EH{A}$ is used exactly once in a quantifier rule the quantifier complexity is equal to $|\EH{A}|$. 
\end{proof}

\section{Grammars}\label{sec.grammars}

The way variables are replaced in the procedure of cut-elimination can be defined by 
grammars modeling substitutions of terms. A characterization of the substitutions 
defining the Herbrand instances of a proof after cut-elimination of $\Pi_1$-cuts can 
be found in~\cite{hetzlS2012ab}. Below we give some necessary definitions.
\begin{definition}[Regular tree grammar]\label{def.reg-tree-grammar}
A {\em regular tree grammar} $\Gcal$ is a tuple $\langle\tau ,N,\Sigma ,\Pr\rangle$ where $N$ is a 
finite set of non-terminal symbols with arity $0$ such that $\tau\in N$. Furthermore, 
$\Sigma$ is a finite set of function symbols of arbitrary arities, i.e.\ a term 
signature, satisfying $N\cap\Sigma =\emptyset$. The productions $\Pr$ are a finite set 
of rules of the form $\gamma\to t$ where $\gamma\in N$ and $t\in T(\Sigma\cup N)$, where
$T(\Sigma\cup N)$ denotes the set of all terms definable from symbols in $\Sigma\cup 
N$. As usual $L(\Gcal)$, the {\em language} defined by $\Gcal$, is the set of all 
terminal strings (ground terms) derivable in $\Gcal$.
\end{definition}
The languages of grammars specifying Herbrand instances are finite (see~\cite{hetzlS2014aa}) and therefore their productions are w.l.o.g.\ acyclic.
\begin{definition}[Acyclic tree grammar]\label{def.acyclic-grammar}
We call a regular tree grammar {\em acyclic} if there is a total order $<$ on the 
non-terminals $N$ such that for each rule $\gamma\to t$ in $\Pr$, only non-terminals 
smaller than $\gamma$ occur in $t$. 
\end{definition}
We are interested in grammars specifying substitutions. As substitutions are 
homomorphic mappings on terms, variables have to be replaced only by single 
terms within a derivation. Therefore we need a restriction of derivations, \emph{rigid derivations}.
\begin{definition}[Rigid derivation]\label{def.der-rigid}
We call a derivation {\em rigid} with respect to a non-terminal $\gamma$ if only a 
single rule for $\gamma$ is allowed to occur in the derivation. 
\end{definition}
\begin{remark}
This allows us to consider production rules as substitutions.
\end{remark}
The following type of grammar describes the substitutions generated in the 
elimination of a $\Pi_2$ cut; this type grammar is a special case of more general 
grammars defined in~\cite{afshariB2015aa}. Assume a subproof of the form
\begin{center}
\begin{fCenter}
\Axiom$\Gamma \;\fCenter\seq \Delta'' ,A[x\backslash\alpha ,y\backslash t]$
\LeftLabel{\exr}
\UnaryInf$\;\fCenter\vdots $
\UnaryInf$\Gamma \;\fCenter\seq \Delta' ,\exists y.A[x\backslash \alpha ]$
\LeftLabel{\allr}
\UnaryInf$\Gamma \;\fCenter\seq \Delta ,\forall x\exists y.A$
\Axiom$A[x\backslash r,y\backslash\beta],\Gamma'' \;\fCenter\seq \Delta$
\RightLabel{\exl}
\UnaryInf$\exists y.A[x\backslash r],\Gamma' \;\fCenter\seq \Delta$
\RightLabel{\alll}
\UnaryInf$\;\fCenter\vdots $
\UnaryInf$\forall x\exists y.A,\Gamma \;\fCenter\seq \Delta$
\LeftLabel{\textit{Cut}}
\BinaryInf$\Gamma \;\fCenter\seq \Delta$
\DisplayProof
\end{fCenter}
\end{center}
Then we extract the production rules $\alpha\to r$ and $\beta\to t[\alpha\backslash r]$. For a better understanding we extract a \emph{schematic $\Pi_2$-grammar} corresponding to the substitutions in the elimination of the $\Pi_2$ cut of Example \ref{exa:extended Herbrand} and afterwards give the formal definition.
\begin{example}
\label{exa:schematic Pi2-grammar}
Let $\EH{P(x,y)}$ be as in Example \ref{exa:extended Herbrand}. Assume a unary term $h_F$ and a binary term $h_G$. While $h_F$ is a representation of the formula $P(u,fu)\lor P(u,gu)$ with the argument $u$, the term $h_G$ represents $P(c,v)\land P(hv,w)$ with the arguments $v,w$. As long as there is an unique identification between the term representation and the formula, the choice of the term symbols $h_F,h_G$ is not restricted. It follows that we start with the production rules $\tau\to h_F(\alpha )$ and $\tau\to h_G(\beta_1,\beta_2)$ where $\tau$ is the designated starting symbol to get a representation of the context 
\begin{align*}
P(\alpha ,f\alpha )\lor P(\alpha ,g\alpha )\seq P(c,\beta_1)\land P(h\beta_1,\beta_2).
\end{align*} 
The substitutions generated in the cut-elimination procedure correspond to the substitutions of the eigenvariables $\alpha ,\beta_1,$ and $\beta_2$. So we consider them as non-terminals in our grammar and examine the cut encoded by the implication 
\begin{align*}
(P(\alpha ,f\alpha )\lor P(\alpha ,g\alpha ))\to (P(c,\beta_1)\land P(h\beta_1,\beta_2)). 
\end{align*}
If we compare the substitutions of $x$ and $y$ of $P(x,y)$ in $P(\alpha ,f\alpha ), P(\alpha ,g\alpha )$ with the substitutions in $P(c,\beta_1),P(h\beta_1,\beta_2)$, we can derive the production rules $\alpha\to c|h\beta_1,\beta_2\to fh\beta_1|gh\beta_1,$ and $\beta_1\to fc|gc$. This is due to the cut-elimination procedure in which at first $\exists y.P(\alpha ,y)$ would be gradually replaced by $\exists y.P(c,y)$, $\exists y.P(hfc,y)$, and $\exists y.P(ghc,y)$, such that there is instead of a single subproof with $\exists y.P(\alpha ,y)$ tree structurally equivalent subproofs. One might assume that $\beta_i$ has to map to $f\alpha$ and $g\alpha$ for $i\in\mathbb{N}_2$ instead. Note that not only the terms substituted for $x$ belong together, but the pairs substituted for $x$ and $y$. For instance, if we substitute $(\alpha ,f\alpha )$ and $(c,\beta_1)$ for $x$ and $y$, $\beta_1$ might map to $f\alpha$, but also $\alpha$ maps to $c$ at the same time. \\
Altogether, we can define a grammar $\Gcal= \langle\tau ,N,\Sigma ,\Pr\rangle$ where $N=\{\tau ,\alpha ,\beta_1,\beta_2\}$, $\Sigma =\{ h_F,h_G,c,h,f,g\}$, and $\Pr=$
\begin{align*}
\{\;\;\tau \to &\; h_F(\alpha )|h_G(\beta_1,\beta_2), \\
\alpha \to &\; c|h\beta_1, \\
\beta_2 \to &\; fh\beta_1|gh\beta_1, \\
\beta_1 \to &\; fc|gc \;\}.
\end{align*}
If we allow only rigid derivations the language $L(\Gcal )$ of $\Gcal$ is 
\begin{align*}
\{ &\; h_F(c ),h_F(hfc ),h_F(hgc ), \\
&\; h_G(fc ,fhfc ),h_G(gc ,fhgc ), \\
&\; h_G(fc ,ghfc ),h_G(gc ,ghgc ) \;\} .
\end{align*}
By reconsidering the formulas that are represented by $h_F$ and $h_G$, we obtain that the language represents the Herbrand sequent of Example \ref{exa:extended Herbrand}. In general, we consider grammars whose language corresponds to a superset of a Herbrand sequent.  
\end{example}
\begin{definition}[Schematic $\Pi_2$-grammar]\label{def.schem-pi2-grammar}
Let $\Gcal = \langle\tau ,N,\Sigma ,\Pr\rangle$ be an acyclic tree grammar, $N = 
\{\tau ,\alpha ,\beta_1,\ldots ,\beta_m\}$, and $s_i,w_j,r_k,t_l$ for $i\in\mathbb{N}_c,j\in\mathbb{N}_d,k\in\mathbb{N}_m,l\in\mathbb{N}_p$ be terms. 
Let the non-terminals be ordered according to $\beta_1<\ldots <\beta_m<\alpha <\tau$. We call $\Gcal$ a {\em schematic $\Pi_2$-grammar} if the production rules are of the following form: 
\begin{align*}
\tau \to &\; s_1|\ldots |s_c | w_1|\ldots |w_d\  \text{ with }V(s_i)\subseteq\{\alpha\} 
\text{ for }1\le i\le c
\text{ and }\\
&\; V(w_j)\subseteq\{\beta_1,\ldots ,\beta_m\}\text{ for } 1\le j\le d \\ 
\alpha \to &\;  r_1|\ldots |r_m \text{ with }V(r_j)\subseteq\{\beta_1,\ldots 
,\beta_{j-1}\} \text{ for }2\le j\le m\text{ and }V(r_1)=\emptyset \\
\beta_j \to &\;  t_1r_j|\ldots |t_pr_j \text{ for }1\le j\le m
\end{align*}
where $ts$ stands for $t[\alpha\backslash s]$ with $t$ being a term (possibly) containing the variable $\alpha$.
We call $m$ the \AllMul and $p$ the \ExMul\!\!.
\end{definition}
Let $\EH{A}$ be an extended Herbrand sequent as defined in 
Definition~\ref{def.extHseq}; 
every such $\EH{A}$ defines a schematic $\Pi_2$-grammar. As the sequent $S\colon 
\forall\tuples{x}F\seq\exists\tuples{y}G$ contains blocks of quantifiers and we want to 
use an ordinary term grammar, we generate function symbols $h_F,h_G$ 
where $h_F$ is of the arity of the length of $\tuples{x}$, $h_G$ of the arity of the 
length of $\tuples{y}$. So every term tuple $\tuples{u}_i \in U_1$ is represented by 
$h_F(\tuples{u}_i)$, and every term tuple $\tuples{v}_j \in U_2$ by $h_G(\tuples{v}_j)$.  
\begin{definition}\label{def.extHseq-grammar}
Let $\EH{A}$ as in Definition~\ref{def.extHseq}. We define 
$\Gcal(\EH{A}) = \langle\tau 
,N,\Sigma ,\Pr\rangle$, the {\em schematic $\Pi_2$-grammar corresponding to 
$\EH{A}$}, 
where $N = \{\tau ,\alpha ,\beta_1,\ldots ,\beta_m\}$ and the variables are ordered 
as in Definition~\ref{def.schem-pi2-grammar}; the production rules are as in 
Definition~\ref{def.schem-pi2-grammar}, except for the variable $\tau$ where we have 
\begin{align*}
\tau \to &\; h_F(\tuples{u}_1)|\ldots |h_F(\tuples{u}_N)\ |  \  h_G(\tuples{v}_1)
|\ldots |h_G(\tuples{v}_M).
\end{align*}
We call the production rules $\tau\to h_F(\tuples{u}_1)|\ldots |h_F(\tuples{u}_N)$ {\em $F$-productions} and the production rules $\tau\to h_G(\tuples{v}_1)|\ldots |
h_G(\tuples{v}_M)$ {\em $G$-productions}. 
\end{definition}
At this point it becomes apparent why we have chosen the form 
$\forall\tuples{x}.F\seq\exists\tuples{y}.G$ as end sequent. In a schematic $\Pi_2$-grammar 
we have terms depending on $\alpha$ and terms depending on some $\beta_i$ with $i\in 
\mathbb{N}_m$. These terms correspond to the function symbols $h_F$ and $h_G$, i.e.\ 
we implicitly ask for formulas that can be separated within one sequent (by a comma on 
the right side, a comma on the left side, or the sequent symbol $\seq$). This separated 
formulas depend either on $\alpha$ or on some $\beta_i$ with $i\in\mathbb{N}_m$. 
Hence, there are no atoms that depend on both, $\alpha$ and $\beta_i$ for $i\in 
\mathbb{N}_m$.

\section{Cut-Introduction}\label{sec.cutintro}

We have shown that from any proof with a $\Pi_2$-cut we can extract a schematic $\Pi_2$-grammar.
The language of this grammar covers the so-called 
\emph{Herbrand term set}, a representation of the instantiations defining a Herbrand 
sequent. Now the question arises, whether we can invert this step, i.e. to construct 
a proof with a $\Pi_2$-cut from a cut-free proof $\varphi$ and a given schematic 
$\Pi_2$-grammar specifying the Herbrand instances of $\varphi$.  

\begin{definition}[Herbrand term set]\label{def.Htset}
Let $S\colon \all\tuples{x}F \seq \ex \tuples{y}G$ be  a sequent and $H$ be a Herbrand 
sequent of $S$ of the form  
\begin{align*}
H:=F[\tuples{x}\backslash\tuples{t}_1],\ldots ,F[\tuples{x}\backslash\tuples{t}_k]\seq 
G[\tuples{y}\backslash\tuples{t}_{k+1}],\ldots ,G[\tuples{y}\backslash\tuples{t}_n]
\end{align*}
as in Definition~\ref{def.Herbrandsequent}, and $h_F,h_G$ function symbols as defined 
above. Then the set 
\begin{align*}
H_s(S)\colon \{h_F(\tuples{t}_1),\ldots,h_F(\tuples{t}_k),h_G(\tuples{t}_{k+1}),\ldots, 
h_G(\tuples{t}_n)\}
\end{align*}
is called a {\em Herbrand term set} of $S$.
\end{definition}
While Herbrand sequents represent cut-free proofs, extended Herbrand sequents represent 
proofs with cuts. To introduce (yet unknown) cut-formulas 
we consider the Herbrand sequent of a cut-free proof and specify the Herbrand term 
set by a schematic $\Pi_2$-grammar. The unknown cut formula is represented by a 
second-order variable $X$.
 
\begin{definition}[Schematic extended Herbrand sequent] 
\label{def:Schematic extended Herbrand sequent}
Let $S\colon \forall\tuples{x}F\seq\exists\tuples{y}G$ be a provable sequent and $H_s(S)$ 
be a Herbrand term set of $S$.
Let  $\Gcal\colon \langle\tau ,N,\Sigma ,\Pr\rangle$ be a schematic $\Pi_2$-grammar 
with $N=\{\tau ,\alpha ,\beta_1, 
\ldots ,\beta_m\},$ $\beta_1<\ldots <\beta_m<\alpha <\tau$, and the production rules 
\begin{align*}
\tau \to &\; h_F(\tuples{u}_1)|\ldots |h_F(\tuples{u}_N)\ |  \  h_G(\tuples{v}_1)|\ldots |h_G(\tuples{v}_M) \\ 
&\; \text{ with }V(\tuples{u}_i)\subseteq\{\alpha\}\text{ for }1\le i\le N \\ 
&\; \text{ and }V(\tuples{v}_j)\subseteq\{\beta_1,\ldots ,\beta_m\}\text{ for } 
1\le j\le M \\ 
\alpha \to &\; r_1|\ldots |r_m \text{ with }V(r_j)\subseteq\{\beta_1,\ldots 
,\beta_{j-1}\} \text{ for }2\le j\le m\text{ and }V(r_1)=\emptyset \\
\beta_j \to &\;  t_1r_j|\ldots |t_pr_j \text{ for }1\le j\le m.
\end{align*}
Let $L(\Gcal)$ be the language of $\Gcal$ generated only by rigid derivations 
with respect to all non-terminals, and 
$H_s(S) \subseteq L(\Gcal)$. Let $U_1:=\{\tuples{u}_1,\ldots ,\tuples{u}_N\}$ and 
$U_2:=\{\tuples{v}_1,\ldots ,\tuples{v}_M\}$. We call the sequent 
$$S(X)\colon F[\tuples{x}\backslash U_1],\bigvee\limits_{i=1}^pX\alpha t_i\to 
\bigwedge\limits_{j=1}^mXr_j\beta_j\seq 
G[\tuples{y}\backslash U_2],$$
where $X$ is a two-place predicate variable, a {\em schematic extended Herbrand 
sequent} corresponding to $\Gcal$ and $S$ (in the following abbreviated by {\em SEHS}). 

Furthermore, we call $F[\tuples{x}\backslash U_1]\seq G[\tuples{y}\backslash U_2]$ the 
{\em reduced representation} of $S(X)$. 
\end{definition}
The schematic extended Herbrand sequent is an abstraction of an extended Herbrand sequent as well as the extended Herbrand sequent is a schematic extended Herbrand sequent $S(X)$ for which a \emph{solution} (see Definition \ref{def:solution of an sehs}) has been found, i.e.\ a formula $C$ such that $S(C)$ is a tautology. Further explanation can be found in Example \ref{exa:schematic extended Herbrand sequent}.

Note that we did not require $L(\Gcal) = H_s(S)$; indeed if we generate a proper 
superset of $H_s(S)$ we still obtain a Herbrand sequent of $S$ (but not a minimal one). 
Generating supersets can be beneficial to the construction of cut-formulas.

A solution of an \seHs gives us a cut formula for a proof with a $\Pi_2$-cut. 
\begin{definition}
\label{def:solution of an sehs}
Let $S$ be a provable sequent, $\mathcal{G}$ a schematic $\Pi_2$-grammar with the non-terminals $\{\tau ,\alpha ,\beta_1,\ldots ,\beta_m\}$ as in Definition \ref{def.schem-pi2-grammar}, and $S(X)$ the corresponding \seHs\!\!. Let $S(X)[X \setminus \lambda x y.A]$ be a tautology where $A$ may not contain $\alpha$ and $\beta_j$ with $j\in\mathbb{N}_m$. Then we call $A$ a \emph{solution} of the \seHs\!\! $S(X)$.
\end{definition}
\begin{example}
\label{exa:schematic extended Herbrand sequent}
In Example \ref{exa:extended Herbrand}, we extracted an extended Herbrand sequent $\EH{P(x,y,)}$ from a proof with a single $\Pi_2$ cut: 
\begin{align*}
&\; P(\alpha ,f\alpha )\lor P(\alpha ,g\alpha ), &\; \\
&\; (P(\alpha ,f\alpha )\lor P(\alpha ,g\alpha ))
\to (P(c,\beta_1)\land P(h\beta_1,\beta_2)) & \seq P(c,\beta_1)\land 
P(h\beta_1,\beta_2).
\end{align*}
Later, we generated the schematic $\Pi_2$-grammar $\Gcal= \langle\tau ,N,\Sigma ,\Pr\rangle$ for $\EH{P(x,y)}$ (see Example \ref{exa:schematic Pi2-grammar}). $\Gcal$ is defined by $N=\{\tau ,\alpha ,\beta_1,\beta_2\}$, $\Sigma =\{ h_F,h_G,c,h,f,g\}$, and $\Pr=$
\begin{align*}
\{\;\;\tau \to &\; h_F(\alpha )|h_G(\beta_1,\beta_2), \\
\alpha \to &\; c|h\beta_1, \\
\beta_2 \to &\; fh\beta_1|gh\beta_1, \\
\beta_1 \to &\; fc|gc \;\}.
\end{align*}
The order of the non-terminals was not discussed, is already intrinsic, and reads $\beta_1<\beta_2<\alpha <\tau$. Then the corresponding \seHs\!\! can intuitively be described as $\EH{X}$ plus the grammar $\Gcal$ where $X$ is a two-place predicate variable. More precisely, $S(X)=$
\begin{align*}
&\; P(\alpha ,f\alpha )\lor P(\alpha ,g\alpha ), \\
&\; (X\alpha f\alpha \lor X\alpha g\alpha )
\to (Xc\beta_1\land Xh\beta_1\beta_2) \seq P(c,\beta_1)\land 
P(h\beta_1,\beta_2)
\end{align*}
is an \seHs\!\! corresponding to $\Gcal$ and $\forall u.P(u,fu)\lor P(u,gu)\seq\exists v,w.P(c,v)\land P(hv,w)$. The solution of $S(X)$ is $\lambda xy.P(x,y)$.
\end{example}
In the following we will think of proofs as trees. This will facilitate the description 
of our approaches to find a solution for the \seHs\!\!. Hence, the leaves of a proof 
represent tautological or non-tautological axioms. 
\begin{definition}
Let $S$ be a sequent. We call an arbitrary tree a \textbf{G3c}-derivation 
if it has only sequents as nodes, has $S$ as lowest element such that each edge corresponds to a rule of \textbf{G3c}, i.e.\ each node is a (tautological or non-tautological) axiom or a conclusion of a rule of \textbf{G3c}, and the immediate successors are the premise of that rule. 
\end{definition}
\begin{definition}
Let $S$ be a quantifier-free sequent. We call a \textbf{G3c}-derivation of $S$ \emph{maximal} if the leaves of the tree cannot be conclusions of rules. 
\end{definition}
Before we discuss a characterization of the solvability of the \seHs we show that in general it is not solvable. 
\begin{lemma}
\label{lem:counterExample}
Let $F:=P(x,fx)\land Q(x,gx)$ and $G:=P(c,x)\land Q(d,y)$. Assume the sequent $\forall x.F\seq \exists x,y.G$, the \seHs $S(X)$ 
\begin{align*}
&\; P(\alpha , f\alpha )\land Q(\alpha , g\alpha ), \\ 
&\; (X\alpha f\alpha \lor X\alpha g\alpha )\to (Xc\beta_1\land 
Xd\beta_2) \seq P(c,\beta_1) \land Q(d,\beta_2) , 
\end{align*}
and the schematic $\Pi_2$-grammar $\Gcal=\langle \tau ,N,\Sigma ,\Pr\rangle$ 
where $N=\{\tau ,\alpha ,\beta_1,\beta_2\}$ and 
\begin{align*}
\Pr = \{ &\tau\to h_{F}\alpha ,\tau\to 
h_{G}\beta_1\beta_2, \\ 
& \alpha\to c\; |\; d,\beta_2\to fd\; |\; gd, 
\beta_1\to fc\; |\; gc\} .
\end{align*}
Then the \seHs does not have a solution, i.e.\ there is no $C$ such that $S(C)$ 
is a tautology. 
\end{lemma}
To prove the given lemma we have to make a detour. First we prove a simpler case. 
\begin{lemma}
\label{lem:counterExamplesimple}
Let $F:=P(x,fx)\land Q(x,gx)$ and $G:=P(c,x)\land Q(c,y)$. Assume the sequent $\forall x.F\seq \exists x,y.G$, the \seHs $S(X)$ 
\begin{align*}
&\; P(\alpha , f\alpha )\land Q(\alpha , g\alpha ), \\ 
&\; (X\alpha f\alpha \lor X\alpha g\alpha )\to (Xc\beta_1\land 
Xc\beta_2) \seq P(c,\beta_1) \land Q(c,\beta_2) 
\end{align*}
and the schematic $\Pi_2$-grammar $\Gcal=\langle \tau ,N,\Sigma ,\Pr\rangle$ where $N=\{\tau ,\alpha ,\beta_1,\beta_2\}$ and 
\begin{align*}
\Pr = \{ &\tau\to h_{F}\alpha ,\tau\to 
h_{G}\beta_1\beta_2, \\ 
& \alpha\to c\; |\; c,\beta_2\to fc\; |\; gc, 
\beta_1\to fc\; |\; gc\} .
\end{align*}
Then the \seHs does not have a solution. 
\end{lemma}
\begin{proof}
We prove the lemma by contradiction. Let us assume a valid cut-formula $E$ that corresponds to the grammar $\mathcal{G}$. 
A maximal \textbf{G3c}-derivation $\psi$ of the reduced representation produces the 
leaves 
\begin{align*}
\{ & P(\alpha ,f\alpha ),Q(\alpha ,g\alpha )\seq P(c,\beta_1); \\ 
& P(\alpha ,f\alpha ),Q(\alpha ,g\alpha )\seq Q(c,\beta_2)\} .
\end{align*}
Given that $E$ is a valid cut-formula the following sequents have to be tautologies 
\begin{align*}
\mathcal{B}_1:=\;\;\{ & P(\alpha ,f\alpha ),Q(\alpha ,g\alpha )\seq P(c,\beta_1)
,E(\alpha ,f\alpha ),E(\alpha ,g\alpha ); \\ 
& P(\alpha ,f\alpha ),Q(\alpha ,g\alpha )\seq Q(c,\beta_2)
,E(\alpha ,f\alpha ),E(\alpha ,g\alpha ); \\ 
& E(c,\beta_1),E(c,\beta_2),P(\alpha ,f\alpha ),Q(\alpha ,g\alpha )\seq 
P(c,\beta_1); \\ 
& E(c,\beta_1),E(c,\beta_2),P(\alpha ,f\alpha ),Q(\alpha ,g\alpha )\seq 
Q(c,\beta_2)\} 
\end{align*}
and, hence, also the following sequents 
\begin{align*}
\mathcal{B}_2:=\;\; \{ & P(\alpha ,f\alpha ),Q(\alpha ,g\alpha )\seq 
E(\alpha ,f\alpha ),E(\alpha ,g\alpha ); \\ 
& E(c,\beta_1)\seq P(c,\beta_1); \\ 
& E(c,\beta_2)\seq Q(c,\beta_2)\} .
\end{align*}
That we can drop $P(c,\beta_1)$ and $Q(c,\beta_2)$ in the first two lines of $\mathcal{B}_1$ and $P(\alpha ,f\alpha ),$ $Q(\alpha ,g\alpha )$ in the last two lines of $\mathcal{B}_1$ to obtain the sequents in $\mathcal{B}_2$ is obvious (Neither $E(\alpha ,f\alpha ),E(\alpha ,g\alpha )$ can contain an atom depending on $\beta_1$ or $\beta_2$ nor $E(c,\beta_1)$ and $E(c,\beta_2)$ can contain an atom depending on $\alpha$). To prove that we can also ignore $E(c,\beta_2)$ in the third line we assume that $T:=E(c,\beta_1)\seq P(c,\beta_1)$ is not provable. Hence, there is a non-tautological branch $\Lambda_1\seq \Theta_1,P(c,\beta_1)$ in every maximal \textbf{G3c}-derivation $\psi$ of $T$. Given that $E(c,\beta_2)$ has the same logical structure as $E(c,\beta_1)$ we can apply the same \textbf{G3c}-rules of $\psi$ to $E(c,\beta_2)$ and get the sequent $\Lambda_2\seq \Theta_2$. The atoms of the sets $\Lambda_1$ and $\Theta_1$ are the same as the atoms in $\Lambda_2$ and $\Theta_2$ except for those which depend on the second argument of $E$, i.e.\ they contain $\beta_1$ or $\beta_2$. Thus, the sequent $\Lambda_1,\Lambda_2\seq \Theta_1, \Theta_2$ is not a tautology and also the atom $P(c,\beta_1)$ is not an element of $\Lambda_1\cup \Lambda_2$. Then also $S:=\Lambda_1,\Lambda_2\seq P(c,\beta_1), \Theta_1,\Theta_2$ is not a tautology. But $S$ is a leaf of every proof tree of $E(c,\beta_1),E(c,\beta_2)\seq P(c,\beta_1)$. This is a contradiction and, therefore, $T$ has to be a tautology. Analogously we can prove that $E(c,\beta_2)\seq Q(c,\beta_2)$ has to be a tautology if $E(c,\beta_1), E(c,\beta_2)\seq P(c,\beta_1)$ is a tautology. 

If the sequents in $\mathcal{B}_2$ are provable then we can replace in their proofs $\alpha$ with $c$, $\beta_1$ with $gc$, and $\beta_2$ with $fc$ to get the provable sequents 
\begin{align*}
\{ & P(c ,fc ),Q(f ,gc )\seq E(c ,fc ),E(c ,gc ); \\ 
& E(c,gc)\seq P(c,gc); \\ 
& E(c,fc)\seq Q(c,fc)\} .
\end{align*}
Now we can apply two times the cut-rule 
\begin{center}
\begin{fCenter}
\Axiom$ \;\fCenter \pi_l $
\Axiom$ E(c,gc) \;\fCenter\seq P(c,gc) $ 
\LeftLabel{\textit{Cut}}
\BinaryInf$ P(c,fc),Q(c,gc) \;\fCenter\seq Q(c,fc),P(c,gc) $ 
\DisplayProof
\end{fCenter}
\end{center}
with $\pi_l:=$ 
\begin{center}
\begin{fCenter}
\Axiom$ P(c,fc),Q(c,gc) \;\fCenter\seq E(c,fc),E(c,gc) $ 
\Axiom$ E(c,fc) \;\fCenter\seq Q(c,fc) $
\LeftLabel{\textit{Cut}}
\BinaryInf$ P(c,fc),Q(c,gc) \;\fCenter\seq Q(c,fc),E(c,gc) $ 
\DisplayProof
\end{fCenter}
\end{center}
and derive the sequent $P(c,fc),Q(c,gc)\seq Q(c,fc),P(c,gc)$. But this sequent is not valid and, by contradiction, there is no cut-formula. 
\end{proof}
In general this example suffices to show that there is not always a solution for an 
\seHs\!\!. But at this point one can argue that we have to refine the definition of 
schematic $\Pi_2$-grammars. If production rules have to be unique then the given example 
would be inappropriate (the production rules with $\beta_1$ and $\beta_2$ on the left map on the same terms and are, therefore, not unique). The \seHs of Lemma \ref{lem:counterExample} contains only unique production rules and now we are able to prove this lemma.
\begin{proof}[Proof of Lemma \ref{lem:counterExample}]
To prove the lemma we give a model in which $c$ and $d$ are equal because for this case Lemma \ref{lem:counterExamplesimple} shows the non-existence of a cut-formula. Assume the natural numbers modulo $2$. We interpret $c$ as $0$, $d$ as $2$, $\lambda x.f$ as the successor function $\lambda x.sx$, and $\lambda x.g$ as $\lambda x.ssx$. In this model $c$ is equal to $d$ and, hence, there cannot be a cut-formula. 
\end{proof}
\begin{remark}
If we take a sequent calculus with equality and add the formula $\neg c=d$ to the left of the end-sequent, i.e.\ an additional assumption, then $\forall x\exists y.(x=c\to P(x,y))\land (\neg x=c\to Q(x,y))$ is a valid cut-formula that corresponds to the given schematic $\Pi_2$-grammar. Let 
\begin{align*}
\forall x\exists y.C(x,y) := &\; \forall x\exists y.(x=c\to P(x,y))\land (\neg x=c\to Q(x,y)) 
\end{align*}
\vspace{-11mm}
\begin{center}
\begin{minipage}{.49\textwidth}
\begin{align*}
\Gamma_1 := &\; \{\forall x.P(x,fx)\land Q(x,gx)\} \\ 
\Gamma_2 := &\; \{\exists y.C(c,y)\}\cup\Gamma_1 \\ 
\Gamma_3 := &\; \{\forall x\exists y.C(x,y)\}\cup\Gamma_2 \\ 
\Gamma_4 := &\; \{\exists y.C(d,y)\}\cup\Gamma_3  
\end{align*}
\end{minipage}
\begin{minipage}{.49\textwidth}
\begin{align*}
\Delta_1 := &\; \{\exists x,y.P(c,x)\land Q(d,y)\} \\
\Delta_2 := &\; \{\forall x\exists y.C(x,y)\}\cup\Delta_1 \\
\Delta_3 := &\; \{\exists y.C(\alpha ,y)\}\cup\Delta_2 \\
\Delta_4 := &\; \{\exists y.P(c,\beta_1)\land Q(d,y)\}\cup\Delta_1 
\end{align*}
\end{minipage}
\end{center}
Then 
\begin{center}
\begin{fCenter}
\Axiom$ \;\fCenter \pi_1$
\Axiom$ \;\fCenter \pi_2$
\BinaryInf$\neg c=d,\forall x.P(x,fx)\land Q(x,gx) \;\fCenter\seq \exists x,y.P(c,x)\land Q(d,y)$
\DisplayProof
\end{fCenter}
\end{center}
where $\pi_1=$ 
\begin{center}
\begin{fCenter}
\Axiom$\;\fCenter\vdots$
\UnaryInf$\neg c=d,P(\alpha ,f\alpha)\land Q(\alpha ,g\alpha ),\Gamma_1  \;\fCenter\seq \Delta_3,C(\alpha ,g\alpha ),C(\alpha ,f\alpha )$
\RightLabel{\alll}
\UnaryInf$\neg c=d,\forall x.P(x,fx)\land Q(x,gx)  \;\fCenter\seq \Delta_3,C(\alpha ,g\alpha ),C(\alpha ,f\alpha )$
\RightLabel{\exr}
\UnaryInf$\neg c=d,\forall x.P(x,fx)\land Q(x,gx)  \;\fCenter\seq \Delta_2,\exists y.C(\alpha ,y),C(\alpha ,f\alpha )$
\RightLabel{\exr}
\UnaryInf$\neg c=d,\forall x.P(x,fx)\land Q(x,gx)  \;\fCenter\seq \Delta_2,\exists y.C(\alpha ,y)$
\RightLabel{\allr}
\UnaryInf$\neg c=d,\forall x.P(x,fx)\land Q(x,gx) \;\fCenter\seq \Delta_1,\forall x\exists y.C(x,y)$
\DisplayProof
\end{fCenter}
\end{center}
with the axiomatic leaves 
\begin{align*}
&\alpha =c,P(\alpha ,f\alpha ), Q(\alpha ,g\alpha ) \seq \Delta_3,c=d,P(\alpha ,f\alpha ),C(\alpha ,g\alpha ) , \\
&\alpha =c,P(\alpha ,f\alpha ), Q(\alpha ,g\alpha ) \seq \Delta_3,c=d, \alpha =c, Q(\alpha ,f\alpha ),P(\alpha ,g\alpha ) ,\text{ and} \\
&P(\alpha ,f\alpha ), Q(\alpha ,g\alpha ) \seq \Delta_3,c=d, \alpha =c, Q(\alpha ,f\alpha ), \alpha =c, Q(\alpha ,g\alpha )
\end{align*}
and $\pi_2=$
\begin{center}
\begin{fCenter}
\Axiom$\;\fCenter\vdots$
\UnaryInf$\neg c=d,\Gamma_4,C(c,\beta_1),C(d,\beta_2) \;\fCenter\seq \Delta_4,P(c,\beta_1)\land Q(d,\beta_2)$
\RightLabel{\exr}
\UnaryInf$\neg c=d,\Gamma_4,C(c,\beta_1),C(d,\beta_2) \;\fCenter\seq \Delta_1,\exists y.P(c,\beta_1)\land Q(d,y)$
\RightLabel{\exr}
\UnaryInf$\neg c=d,\Gamma_4,C(c,\beta_1),C(d,\beta_2) \;\fCenter\seq \exists x,y.P(c,x)\land Q(d,y)$
\RightLabel{\exl}
\UnaryInf$\neg c=d,\Gamma_3,C(c,\beta_1),\exists y.C(d,y) \;\fCenter\seq \exists x,y.P(c,x)\land Q(d,y)$
\RightLabel{\alll}
\UnaryInf$\neg c=d,\Gamma_2,C(c,\beta_1),\forall x\exists y.C(x,y) \;\fCenter\seq \exists x,y.P(c,x)\land Q(d,y)$
\RightLabel{\exl}
\UnaryInf$\neg c=d,\Gamma_1,\exists y.C(c,y),\forall x\exists y.C(x,y) \;\fCenter\seq \exists x,y.P(c,x)\land Q(d,y)$
\RightLabel{\alll}
\UnaryInf$\neg c=d,\Gamma_1,\forall x\exists y.C(x,y) \;\fCenter\seq \exists x,y.P(c,x)\land Q(d,y)$
\DisplayProof
\end{fCenter}
\end{center}
with the axiomatic leaves 
\begin{align}
&\Gamma_4,(\neg c=c\to Q(c,\beta_1)),C(d,\beta_2) \seq \Delta_4,c=c,c=d, P(c,\beta_1),\label{equ:cutproof with equality axiom 1} \\
&\Gamma_4,P(c,\beta_1),(\neg c=c\to Q(c,\beta_1)),C(d,\beta_2) \seq \Delta_4,c=d, P(c,\beta_1), \\
&\Gamma_4,C(c,\beta_1),d=c,(d=c\to P(d,\beta_2)) \seq \Delta_4,c=d, Q(d,\beta_2),\text{ and}\label{equ:cutproof with equality axiom 3} \\
&\Gamma_4,C(c,\beta_1),Q(d,\beta_2),(d=c\to P(d,\beta_2)) \seq \Delta_4,c=d, Q(d,\beta_2) 
\end{align}
is a valid proof with $\Pi_2$ cut corresponding to the schematic $\Pi_2$-grammar of Lemma \ref{lem:counterExample}. Note that the sequents \eqref{equ:cutproof with equality axiom 1} and \eqref{equ:cutproof with equality axiom 3} are only provable in a sequent calculus with equality.
\end{remark}

Both examples show that, in general, we cannot expect to find a solution for an 
\seHs\!\!. Moreover, it is difficult to give an easy restriction to the grammar 
such that the solvability is guaranteed. 
\\
We start now to characterize some conditions for the introduction of $\Pi_2$ cuts. We begin with a, so called, {\em starting set}. It may contain a set of clauses that is interpreted as a formula in DNF a solution for the \seHs\!\!, i.e.\ the \seHs where $X$ is replaced with this formula is a tautology. Later, we will define starting sets that always contain a solution as a subset for certain classes of solutions.
\begin{definition}[Starting set]\label{def:starting set}
Let $\mathcal{O}$ be a set of variables. We call a finite set of finite sets of literals $\Ccal^{\mathcal{O}}$ s.t. $V(\Ccal^{\mathcal{O}})\subseteq \{x,y\}\cup\mathcal{O}$ for designated variables $x,y$ a {\em starting set}. The variables $\beta_1,\ldots ,\beta_m$, and $\alpha$ may not occur in $\mathcal{O}$. If $\mathcal{O}=\emptyset$, we abbreviate $\Ccal^{\emptyset}$ by $\Ccal$.
\end{definition}
In general, we assume that the set of variables in the reduced representation contains only the eigenvariables $\alpha,\beta_1,\ldots ,\beta_m$. This is not a restriction because all other variables can be treated as constants. Hence, we can treat the variables in $\mathcal{O}$ as constants such that $\mathcal{O}$ can be considered empty. Thus, we will always consider $\mathcal{O}$ to be empty. \\
Now we define a normal form for the representation of the leaves of a reduced representation. Therefore we need the following proposition. 
\begin{proposition}\label{pro:non-tautological axioms}
Let $R$ be a reduced representation of an \seHs as in Definition~\ref{def:Schematic extended Herbrand sequent} and $\psi$ be a maximal \textbf{G3c}-derivation of $R$. Let NTA$(\psi )$ be the set of non-tautological axioms of $\psi$. Let $S\in\text{NTA}(\psi )$. Then $S$ is of the form $ A(S)\circ B(S)\circ N(S)$ where $A(S)$ is the sequent of all atoms in $S$ containing $\alpha$, $B(S)$ the sequent of all atoms in $S$ containing a non-empty subset of the variables $\{\beta_1,\ldots ,\beta_m\}$, and $N(S)$ ($N$ stands for ``neutral'') the sequent of all atoms in $S$ neither containing $\alpha$ nor $\beta_i$-s. 
\end{proposition}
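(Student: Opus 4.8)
The plan is to reduce the statement to a single structural observation about \textbf{G3c}. Since the reduced representation $R$ is quantifier-free, the maximal \textbf{G3c}-derivation $\psi$ uses only propositional rules, and each such rule merely decomposes the principal formula of its conclusion into its immediate subformulas while copying the side formulas. In particular no rule ever manufactures a new atom. Hence every atom occurring anywhere in $\psi$ is already an atom of the root sequent $R$, and the whole proposition reduces to inspecting the variable pattern of the atoms of $R$.

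First I would record that pattern. By Definition~\ref{def:Schematic extended Herbrand sequent} we have $R = F[\bar{x}\backslash U_1]\seq G[\bar{y}\backslash U_2]$ with $V(\bar{u}_i)\subseteq\{\alpha\}$ for every $\bar{u}_i\in U_1$ and $V(\bar{v}_j)\subseteq\{\beta_1,\ldots,\beta_m\}$ for every $\bar{v}_j\in U_2$. Because $F$ and $G$ are quantifier-free with free variables among $\bar{x}$ and $\bar{y}$ respectively, every atom of the antecedent of $R$ has its variables contained in $\{\alpha\}$, and every atom of the succedent of $R$ has its variables contained in $\{\beta_1,\ldots,\beta_m\}$. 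Consequently no atom of $R$ contains both $\alpha$ and some $\beta_i$: an antecedent atom contains no $\beta_i$ at all, and a succedent atom contains no $\alpha$.

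Next I would carry out an easy induction on $\psi$, from the root $R$ downward to the leaves. The root has exactly the atoms of $R$, and for the inductive step each premise of a rule has its atoms contained in those of the conclusion, hence (by the hypothesis) in those of $R$. Therefore the atoms occurring in any node of $\psi$ — in particular those of an $S\in\text{NTA}(\psi)$ — form a subset of the atoms of $R$, so by the previous paragraph no atom of $S$ contains both $\alpha$ and a $\beta_i$. Since $\psi$ is maximal, the leaf $S$ cannot be the conclusion of a rule and therefore consists only of atoms. Partitioning these atoms into those containing $\alpha$ (the set $A(S)$), those containing some $\beta_i$ (the set $B(S)$, which as just shown then contain no $\alpha$), and those containing neither (the set $N(S)$) yields three disjoint classes that exhaust $S$; this is precisely the decomposition $S = A(S)\circ B(S)\circ N(S)$.

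The induction is routine, and I do not expect a genuine obstacle. The only point needing care is the disjointness of $A(S)$ and $B(S)$, i.e.\ that no atom mixes $\alpha$ with a $\beta_i$; this is exactly what the restrictions $V(\bar{u}_i)\subseteq\{\alpha\}$ and $V(\bar{v}_j)\subseteq\{\beta_1,\ldots,\beta_m\}$ buy us, and it is the very reason the end-sequent was put into the separated form $\forall\bar{x}.F\seq\exists\bar{y}.G$ (cf.\ the discussion following Definition~\ref{def.extHseq-grammar}). The essential content is simply that \textbf{G3c} acts purely propositionally on the quantifier-free sequent $R$ and therefore never transports a term from an $\alpha$-atom into a $\beta$-atom or vice versa.
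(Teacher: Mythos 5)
Your argument is correct and is exactly the justification the paper leaves implicit: the proposition is stated without proof, relying on the observation (made explicitly after Definition~\ref{def.extHseq-grammar}) that no atom of $R$ depends on both $\alpha$ and a $\beta_i$, together with the fact that a maximal \textbf{G3c}-derivation of the quantifier-free sequent $R$ only decomposes formulas into subformulas, so its leaves consist of atoms of $R$. Your careful handling of the disjointness of $A(S)$ and $B(S)$ is precisely the point the separated form $\forall\bar{x}.F\seq\exists\bar{y}.G$ is designed to secure.
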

\begin{proof}
Assume an arbitrary atom $P$ of $S$. We know that $P$ is a subformula of the reduced representation $R$. The reduced representation $R=F[\tuples{x}\backslash U_1]\seq G[\tuples{y}\backslash U_2]$ can be divided into two parts: $F[\tuples{x}\backslash U_1]\seq$ and $\seq G[\tuples{y}\backslash U_2]$. In the first part, neither of the variables $\beta_1,\ldots ,\beta_m$ appear; in the second part, the variable $\alpha$ does not appear. $P$ is either a subformula occurring in the first or second part, i.e.\ it cannot contain both, variables of the set $\{\beta_1,\ldots ,\beta_m\}$ and the variable $\alpha$. 
\end{proof}
The proposition gives us a representation of the leaves, but in this form 
we are not able to distinguish between atoms occurring on the left hand-side of a 
sequent and atoms occurring on the right hand-side of the sequent. 
\begin{definition}
\label{def:dual sequent operator}
Let $S:=P_1,\ldots ,P_i\seq Q_1,\ldots ,Q_j$ be a sequent containing only atoms. Then 
we define the literal normal form $D(S)$ of the sequent $S$ as $\neg Q_1, 
\ldots ,\neg Q_j,P_1,\ldots ,P_i\seq$. 
\end{definition}
Now each literal carries the information on which side of the sequent it occurs. 
If it is an atom it occurs on the left hand-side. If it is a negated atom it occurs 
on the right hand-side. Hence, we can define a normal form of the sequents. 
\begin{definition}
\label{def:DNTA}
Let NTA$(\psi )$ be the set of non-tautological axioms of a maximal \textbf{G3c}-derivation $\psi$ of a reduced representation $R$. We define the set of non-tautological axioms in literal normal form 
\begin{align*}
& \text{DNTA}(\psi ):=\{ D(S)\; |\; S\in\text{NTA}(\psi )\} .
\end{align*}
Let $S\in\text{DNTA}(\psi )$. Then $S$ is also of the form $ A(S)\circ B(S)\circ N(S) $ where 
\begin{itemize} 
\item $A(S)$ is the sequent of all literals in $S$ containing $\alpha$, 
\item $B(S)$ the sequent of all literals in $S$ containing a non-empty subset of the variables $\{\beta_1,\ldots ,\beta_m\}$, and 
\item $N(S)$ the sequent of all literals in $S$ neither containing $\alpha$ nor $\beta_i$-s. 
\end{itemize}

Let ${\rm LIT}$ be the set of all literals. For all literals $L\in A(S)$ let 
\[ \xi (L) :=  \{ Q\; |\; Q \in {\rm LIT} \textbf{ and }V(Q)\subseteq\{ x,y\}\textbf{ and }\exists i\in\mathbb{N}_p.Q[x\backslash\alpha ,y\backslash t_i]=L\}.\] 
then
\begin{align*}
A'(S):= &\; \bigcup\limits_{L\in A(S)}\xi (L)
\end{align*}
denotes the set of all literals that can be mapped  to an element of $A(S)$. 
\end{definition}
Now we reconsider the main problem of $\Pi_2$-cut introduction and reformulate the 
necessary conditions. Instead of finding a substitution for 
$X$ such that the \seHs
\begin{align*}
& F[\tuples{x}\backslash U_1],\bigvee\limits_{i=1}^pX\alpha t_i\to 
\bigwedge\limits_{j=1}^mXr_j\beta_j\seq 
G[\tuples{y}\backslash U_2]
\end{align*}
is a tautology we have to find a substitution $\theta\colon [X \setminus \hat{A}]$ such that, for all leaves $S\in \text{DNTA}(\psi )$ of the reduced 
representation $R$, the sequent 
\begin{align*}
S \circ (\bigvee\limits_{i=1}^p\hat{A}\alpha t_i\to \bigwedge\limits_{j=1}^m \hat{A}r_j\beta_j\seq ) 
\end{align*}
is a tautology. Hence, we can divide it into two problems: 
\begin{itemize}
\item the {\em $\beta$-problem}  of $S$, $S \circ (Xr_1\beta_1,\ldots ,Xr_m\beta_m \seq )$ and 
\item the {\em $\alpha$-problem} of $S$, $S \circ (\; \seq  X\alpha t_1,\ldots ,X\alpha t_p )$.
\end{itemize}
and say that $V$ is a solution of the $\beta$-problem and the $\alpha$-problem if there is a substitution $\theta$ for $X$ such that $\theta$ is of the form $\lambda xy.V$ where in $V$ may not occur $\beta_1,\ldots ,\beta_m$, or $\alpha$ and the sequents of the $\beta$-problem and $\alpha$-problem for all $S\in\text{DNTA}(\psi)$ become tautologies. A shared solution for the $\beta$-problem and the $\alpha$-problem is also a solution for the \seHs\!\!.

Now we want to find formulas in disjunctive normal form that are solutions. Therefore, we assume an arbitrary starting set $\mathcal{A}^\mathcal{O}$ that is a collection of literals not containing $\beta_1,\ldots ,\beta_m$, or $\alpha$ (see Definition \ref{def:starting set}). Again, we can consider $\mathcal{O}$ as being the empty set $\emptyset$. The characterization we give in this section finds for a given starting set all possible solutions of the $\beta$- and $\alpha$-problem and, therefore, of the \seHs\!\!. Assuming a finite starting set, we can implement a terminating algorithm to find all solutions that can be built by the literals in the starting set based on this characterization. Hence, after defining the characterization, we have to construct starting sets containing solutions. In Section \ref{sec.Gunify}, we define a method constructing finite starting sets, such that this method finds always a solution if there is a \emph{balanced solution} (see Definition \ref{def:Balanced solution}). However, the concept of balanced solution is not needed in the characterization below.

A solution of the \seHs has to solve the $\beta$-problem as well as the $\alpha$-problem. Therefore, we formulate the restrictions given by them and eliminate gradually all subsets of $\mathcal{A}$ that are not solutions. First we consider the $\beta$-problem. In Definition \ref{def:Set of possible sets of clauses}, we eliminate all subsets of $\mathcal{A}^\emptyset =\mathcal{A}$ that do not turn the $\beta$-problem into a tautology. Consider the sequent of the $\beta$-problem: If we substitute a possible solution in DNF for $X$ then the sequent branches into all possible sequents with one clause for each $Xr_1\beta_1,\ldots ,Xr_m\beta_m$ on the left hand-side of the sequent. In Definition~\ref{def:Set of possible sets of clauses}, the choice of these $m$ arbitrary clauses is represented by the $m$-tuples $(C_1,\ldots ,C_m)$ where $C_i$ is instantiated with $r_i$ and $\beta_i$ for $i\in\mathbb{N}_m$. For each choice we guarantee the provability by demanding an axiomatic constant ($T_1$), an axiomatic literal ($T_2$), or an interactive literal ($T_3$). These literals cover every possible case in which there is a literal and its dual on the left hand-side of the sequent. Finally we can shift the negated literal to the right and receive a tautological axiom. 
\begin{definition}[Set of possible sets of clauses]
\label{def:Set of possible sets of clauses}
Let $R$ be a given reduced representation of an \seHs $S(X)$ and $\psi$ be a maximal 
\textbf{G3c}-derivation of $R$. Let $S\in\text{DNTA}(\psi )$, $m$ be the 
\AllMul (see Definition \ref{def.schem-pi2-grammar}), $\mathcal{C}$ be a set of clauses\footnote{A clause is a set of literals.}. Let $\vec{\mathcal{C}}_m$ be the set of all $m$-tuples $(C_1,\ldots ,C_m)$ where $C_i\in\mathcal{C}$ for $i\in\mathbb{N}_m$. If $\vec{C}\in\vec{\mathcal{C}}_m$, $\vec{C}=(C_1,\ldots ,C_m)$, and $i\in\mathbb{N}_m$ we write $\vec{C}(i)$ for $C_i$. Furthermore, let $\mathcal{A}$ be a starting set and $N(S),B(S)$ as in Definition \ref{def:DNTA}. \\[1ex]
We define the three conditions - ($T_1$) axiomatic constant, ($T_2$) axiomatic 
literal, ($T_3$) interactive literal - 
\begin{align*}
T_1(\vec{C},S):= &\; \exists i\in\mathbb{N}_m\exists L\in\vec{C}(i). 
L[x\backslash r_i]\in \overline{N(S)} \\ 
&\; \text{ where }\overline{N(S)}\text{ denotes the dualized set }N(S), \\
T_2(\vec{C},S):= &\; \exists i\in\mathbb{N}_m\exists L\in \vec{C}(i). 
L[x\backslash r_{i},y\backslash\beta_{i}]\in \overline{B(S)}, \\
T_3(\vec{C}):= &\; \exists i,j\in\mathbb{N}_m 
\exists L\in\vec{C}(i)\exists Q\in\vec{C}(j).L[x\backslash r_i,y\backslash\beta_i]= 
\overline{Q}[x\backslash r_j,y\backslash\beta_j], \\ 
&\; \text{and} \\ 
T(\vec{C},S):= &\; T_1(\vec{C},S)\,\textbf{or}\, 
T_2(\vec{C},S)\,\textbf{or}\, T_3(\vec{C}).
\end{align*}
Then 
\begin{align*}
Cl(\mathcal{A}):= &\; \{\mathcal{C}\subseteq\mathcal{A}\; |\;\forall \vec{C}\in 
\vec{\mathcal{C}}_m\forall S\in\text{DNTA}(\psi ).T(\vec{C},S)\} 
\end{align*}
is the set of possible sets of clauses. 
\end{definition}
The next step guarantees that the sequent of the $\alpha$-problem becomes a tautological axiom. Consider the following example. 

\begin{example}
\label{exa.explaining}
Let 
\begin{center}
\begin{fCenter}
\Axiom$ \fCenter $
\LeftLabel{\textit{Ax}}
\UnaryInf$P(c,fc), Q(c,gc),\Gamma \;\fCenter\seq \Delta ,P(c,fc),Q(c,fc)$
\LeftLabel{$\lor\colon$r}
\UnaryInf$P(c,fc), Q(c,gc),\Gamma \;\fCenter\seq \Delta ,P(c,fc)\lor Q(c,fc)$
\LeftLabel{$\exr$}
\UnaryInf$P(c,fc), Q(c,gc),\Gamma \;\fCenter\seq \exists x. P(c,x)\lor Q(c,x)$
\LeftLabel{$\land\colon$l}
\UnaryInf$P(c,fc)\land Q(c,gc),\Gamma \;\fCenter\seq \exists x. P(c,x)\lor Q(c,x)$
\Axiom$ \;\fCenter \pi_r $
\LeftLabel{$\lor\colon$l}
\BinaryInf$(P(c,fc)\land Q(c,gc))\lor (P(c,gc)\land Q(c,fc)),\Gamma
\,\fCenter\seq\! \exists x. P(c,x)\lor Q(c,x)$
\LeftLabel{$\alll$}
\UnaryInf$\forall x.(P(x,fx)\land Q(x,gx))\lor (P(x,gx)\land Q(x,fx)) 
\,\fCenter\seq\! \exists x. P(c,x)\lor Q(c,x)$
\DisplayProof
\end{fCenter}
\end{center}
with $\pi_r:=$
\begin{center}
\begin{fCenter}
\Axiom$ \fCenter $
\LeftLabel{\textit{Ax}}
\UnaryInf$P(c,gc), Q(c,fc),\Gamma \;\fCenter\seq \Delta ,P(c,gc), Q(c,gc)$
\LeftLabel{$\lor\colon$r}
\UnaryInf$P(c,gc), Q(c,fc),\Gamma \;\fCenter\seq \Delta ,P(c,gc)\lor Q(c,gc)$
\LeftLabel{$\exr$}
\UnaryInf$P(c,gc), Q(c,fc),\Gamma \;\fCenter\seq \exists x. P(c,x)\lor Q(c,x)$
\LeftLabel{$\land\colon$l}
\UnaryInf$P(c,gc)\land Q(c,fc),\Gamma \;\fCenter\seq \exists x. P(c,x)\lor Q(c,x)$
\DisplayProof
\end{fCenter}
\end{center}
and 
\begin{align*}
\Gamma := &\; \forall x.(P(x,fx)\land Q(x,gx))\lor (P(x,gx)\land Q(x,fx)) \\
\Delta := &\; \exists x. P(c,x)\lor Q(c,x)
\end{align*}
be a given proof of the sequent 
\begin{align*}
\forall x.(P(x,fx)\land Q(x,gx))\lor 
(P(x,gx)\land Q(x,fx))\seq\exists x. P(c,x)\lor Q(c,x).
\end{align*} 
Furthermore we assume the schematic $\Pi_2$-grammar $G=\langle \tau ,N,\Sigma ,\Pr\rangle$ with $N=\{\tau ,\alpha ,\beta\}$ and $\Pr =  \{ \tau\to h_\Gamma \alpha ,\tau\to h_\Delta \beta ,\alpha\to c,\beta\to fc\; |\; gc\} $ where $\tau\to h_\Gamma$ is the only $\Gamma$-production and $\tau\to h_\Delta \beta$ is the only $\Delta$-production according to Definition \ref{def.extHseq-grammar}. Hence, the reduced representation of the \seHs is given by 
\begin{align*}
(P(\alpha ,f\alpha )\land Q(\alpha ,g\alpha ))\lor (P(\alpha ,g\alpha ) 
\land Q(\alpha ,f\alpha )) &\; \seq P(c,\beta )\lor Q(c,\beta ).
\end{align*}
A maximal \textbf{G3c}-derivation $\psi$ gives us the set of non-tautological axioms 
\begin{align*}
\text{DNTA}(\psi ) =\{ S_1;S_2\}=\{ & P(\alpha ,f\alpha ),Q(\alpha ,g\alpha ),\neg P(c,\beta ), 
\neg Q(c,\beta )\seq ; \\ 
& P(\alpha ,g\alpha ),Q(\alpha ,f\alpha ),\neg P(c,\beta ),\neg Q(c,\beta ) 
\seq \; \} .
\end{align*}
Now we consider the starting set $\mathcal{A}=\{ \{ P(x,y),Q(x,y)\}\}$ and compute 
$Cl(\mathcal{A})$. The only subsets of $\mathcal{A}$ are the empty set and $\mathcal{A}$ itself. The empty set does not fulfil any of the conditions of a possible set of clauses. The only clause in $\mathcal{A}$ is $\{ P(x,y),Q(x,y)\}$ which contains for each $S\in\text{DNTA}(\psi )$ an axiomatic literal, i.e.\ $P(c,\beta )$ and $Q(c,\beta )$. Thus, $Cl(\mathcal{A})=\{\mathcal{A}\}$. But the \seHs where $X$ is replaced with $\lambda xy.P(x,y)\land Q(x,y)$ is not a tautology. A maximal \textbf{G3c}-derivation of 
\begin{align*}
(P(\alpha ,f\alpha )\land Q(\alpha ,g\alpha ))\lor (P(\alpha ,g\alpha ) 
\land Q(\alpha ,f\alpha )), &\; \\ 
(P(\alpha ,f\alpha )\land Q(\alpha ,f\alpha ))\lor(P(\alpha ,g\alpha )\land 
Q(\alpha ,g\alpha )) &\; \\ 
\to P(c,\beta )\land Q(c,\beta ) &\; 
\seq P(c,\beta ), Q(c,\beta )
\end{align*}
gives us the the non-tautological leaves 
\begin{align*}
\{ 
&\; P(\alpha ,f\alpha ),Q(\alpha ,g\alpha )\seq P(c,\beta ),Q(c,\beta ) 
,Q(\alpha ,f\alpha ),P(\alpha ,g\alpha ); \\ 
&\; P(\alpha ,g\alpha ),Q(\alpha ,f\alpha )\seq P(c,\beta ),Q(c,\beta ) 
,P(\alpha ,f\alpha ),Q(\alpha ,g\alpha ) \; \} .
\end{align*}
This is due to the existence of a leaf $S$ in DNTA$(\psi )$ that fulfils the following 
property: we find for each term $f\alpha$ and $g\alpha$ an atom $P(\alpha ,f\alpha )$ or $Q(\alpha ,g\alpha )$ that does not appear in the leaf $S$. 
\end{example}
In Definition \ref{def:allowed clauses} we generalize this property and define a 
set $I(S)$ for each leaf $S$ that contains only {\em allowed clauses}. Clauses as $\{P(x,y),Q(x,y)\}$ in the previous example are excluded. 

To understand the necessity of this property for all clauses, we have to examine to behaviour of a set of clauses on the right of a sequent, i.e.\ the $\alpha$-problem. Assume a single clause $\{ L\}\cup R$ and a single instantiation $p=1$ such that $L[x\backslash\alpha ,y\backslash t_1]\notin A(S)$ for the non-tautological leaf $S$. Then neither $S\circ (\,\seq L[x\backslash\alpha ,y\backslash t_1])$ nor $S\circ (\,\seq L[x\backslash\alpha ,y\backslash t_1]\land R[x\backslash\alpha ,y\backslash t_1])$ is provable. If we extend the number of instantiations $p$ without gaining an instantiation  $1\le j\le p$ such that for all literals $Q$ in $\{ L\}\cup R$ the substituted variant $Q[x\backslash\alpha ,y\backslash t_j]$ is not an element of $A(S)$ the sequent 
\begin{align*}
S\circ (\,\seq L[x\backslash\alpha ,y\backslash t_1]\land R[x\backslash\alpha ,y\backslash t_1])\circ\ldots\circ (\,\seq L[x\backslash\alpha ,y\backslash t_p]\land R[x\backslash\alpha ,y\backslash t_p])
\end{align*}
stays non-tautological. If we consider the case that there is more than a single clause and one clause does not fulfill the described property, we can eliminate this clause. Note that if you consider the clauses made of formulas that are solutions of the $\beta$-problem, those clauses are solutions of the $\beta$-problem themselves, i.e.\ we are allowed to eliminate all but one clause without making the solution invalid.

Definition \ref{def:allowed clauses} constructs the set of all clauses with the described property for a given leaf $S$.
\begin{definition}
\label{def:allowed clauses}
Let $R$ be a given reduced representation of an \seHs $S(X)$, $\psi$ be a maximal 
\textbf{G3c}-derivation of $R$, and $S\in\text{DNTA}(\psi )$. $A'(S)$ is defined 
as in Definition \ref{def:DNTA}. Let 
\begin{align*}
M(k)\subseteq &\; A'(S)\text{ such that }|M(k)|=k. \text{ Then}
\end{align*}
\begin{align*}
\AllCl{S}:= &\; \bigcup\limits_{k\le |A'(S)|}\{ M(k)\; |\;\exists i\in\mathbb{N}_p 
\forall L\in M(k).L[x\backslash\alpha ,y\backslash t_i]\in A(S) \} .
\end{align*}
is the set of {\em allowed clauses}. \\[1ex]
Let $\Ccal$ be a starting set as defined in Definition \ref{def:starting set}. We denote the set of \emph{refined allowed clauses} $\RI{S}$ as 
\begin{align*}
\RI{S}\colon =\AllCl{S}\cap \Ccal .
\end{align*} 
\end{definition}
A useful tool for the application of the set of allowed clauses in practice can be 
obtained from the following proposition.
\begin{proposition}
\label{pro.allowedclauses}
Let $R$ be a given reduced representation of an \seHs $S(X)$, $\psi$ be a maximal 
\textbf{G3c}-derivation of $R$, $S\in\text{DNTA}(\psi )$ and \AllCl{S} the set of 
allowed clauses. If $I$ is an element of \AllCl{S} and $J$ is a subset of $I$ then $J$ is an element of \AllCl{S}. 
\end{proposition}
\begin{proof}
The claim trivially holds.
\end{proof}
Now we can formulate the conditions that guarantee the provability of the sequent of the $\alpha$-problem. Again, we need for each non-tautological leaf an axiomatic constant, an axiomatic literal, or an interactive literal. The differences to Definition \ref{def:Set of possible sets of clauses} are due to the different behaviour of formulas in disjunctive normal form on different sides of a sequent in a proof in sequent calculus. We define two sets of solution candidates; one by using the allowed clauses and the other by using the refined allowed clauses (see Definition \ref{def:allowed clauses}).
\begin{definition}[Set of solution candidates]
\label{def:solution set}
Let $R$ be a given reduced representation of an \seHs $S(X)$ and $\psi$ be a maximal 
\textbf{G3c}-derivation of $R$. Let $S\in\text{DNTA}(\psi )$, $p$ be the 
\ExMul\!\!, and $\mathcal{C}$ be a set of clauses. 
Let $\vec{\mathcal{L}}_p(C)$ be the set of all $p$-tuples $(L_1,\ldots ,L_p)$ where 
$L_i\in C$ for $i\in\mathbb{N}_p$ and $C\in\mathcal{C}$. If $\vec{L}\in 
\vec{\mathcal{L}}_p(C)$, $\vec{L}=(L_1,\ldots ,L_p)$, and $i\in\mathbb{N}_p$ we write 
$\vec{L}(i)$ for $L_i$. Let $\vec{\mathcal{C}}=\prod_{C\in\mathcal{C}} 
\vec{\mathcal{L}}_p(C)$ be the Cartesian product of the subspaces 
$\vec{\mathcal{L}}_p(C)$ where $C\in\mathcal{C}$. If $\vec{C}\in\vec{\mathcal{C}}$ and 
$\vec{L}\in\vec{\mathcal{L}}_p(C)$ is the element of $\vec{C}$ that corresponds to 
the subspace $\vec{\mathcal{L}}_p(C)$ we write $L(C,i)$ for $\vec{L}(i)$. 
Furthermore, let $\mathcal{A}$ be a starting set and $\mathcal{D}$ be either the set of allowed clauses or the set of refined allowed clauses. \\[1ex]
We define the three 
conditions - ($T'_1$) axiomatic constant, ($T'_2$) axiomatic 
literal, ($T'_3$) interactive literal - 
\begin{align*}
T'_1(\mathcal{C},\vec{C},S):= &\; \exists C\in\mathcal{C}\exists i\in\mathbb{N}_p. 
L(C,i)[y\backslash t_i]\in N(S), \\
T'_2(\mathcal{C},\vec{C},S):= &\; \exists C\in\mathcal{C}\exists I\in \mathcal{D} \forall i\in 
\mathbb{N}_p. L(C,i) \in I , \\ 
T'_3(\mathcal{C},\vec{C}):= &\; \exists C,D\in\mathcal{C}\exists i,j\in\mathbb{N}_p. 
L(C,i)[x\backslash\alpha ,y\backslash t_i]=\overline{L(D,j)} 
[x\backslash\alpha ,y\backslash t_j], \\ 
&\; \text{and} \\ 
T'(\mathcal{C},\vec{C},S):= &\; T'_1(\mathcal{C},\vec{C},S)\,\textbf{or}\, 
T'_2(\mathcal{C},\vec{C},S)\,\textbf{or}\, T'_3(\mathcal{C},\vec{C}). 
\end{align*}
Then, for $\mathcal{D} =  \AllCl{S}$, the set
\begin{align*}
Sol(\mathcal{A}):= &\; \{\mathcal{C}\in Cl(\mathcal{A})\; |\;\forall 
\vec{C}\in\vec{\mathcal{C}}\forall S\in\text{DNTA}(\psi ). 
T'(\mathcal{C},\vec{C},S)\}
\end{align*}
is called the {\em set of solution candidates}, and for $\mathcal{D} =  \RI{S}$ the {\em set of refined solution candidates}  (for a given starting set and a given \seHs 
in DNF). 
\end{definition}
\begin{theorem}
\label{the:Sol=Sol}
The set of refined solution candidates coincides with the set of solution candidates.
\end{theorem}
\begin{proof}
If $\Ccal$ is a refined solution candidate then $\Ccal$ is a solution candidate by definition. 

\smallskip

Assume $\Ccal$ is a solution candidate. The only difference to a refined solution candidate is the axiomatic literal. $\Ccal$ being a solution candidate, there is a clause $C$ in $\Ccal$ and an allowed clause $I$ such that for all $i\in\mathbb{N}_p$ the literal $L(C,i)$ is an element of $I$. Furthermore, $\Ccal$ is an element of $Cl(\mathcal{A})$, i.e.\ $\Ccal\subseteq\mathcal{A}$. Altogether, $L(C,i)$ is a literal occurring in $\mathcal{A}$ and there is a subset $J$ of $I$ such that $I\supseteq J=\bigcup_{i\in\mathbb{N}_p}L(C,i)$. By Proposition \ref{pro.allowedclauses}, $J$ is an element of \AllCl{S} and, therefore, $J$ is a refined allowed clause. Since it is always possible to construct a refined allowed clause for a given axiomatic literal, $\Ccal$ is also an element of the set of refined solution candidates. 
\end{proof}
\begin{example}
If we consider Example \ref{exa.explaining} again and compute $Sol(\mathcal{A})$ we will get the empty set. $\RI{S}$ consists of all clauses $C$ that are an element of the starting set $\mathcal{A}$ such that there is an index $i\in\mathbb{N}_p$ for all literals $L$ in the clause $C$ with $L[x\backslash\alpha ,y\backslash t_i]\in A(S)$. For the two non-tautological leaves $S_1$ and $S_2$, we get the sequents 
\begin{align*}
A(S_1): &\; P(\alpha ,f\alpha ),Q(\alpha ,g\alpha )\seq \; ,\\
A(S_2): &\; P(\alpha ,g\alpha ),Q(\alpha ,f\alpha )\seq .
\end{align*}
The only element of the starting set is $\{ P(x,y),Q(x,y)\}$, and $p=2$. For $i=1$, the substituted literals are $\{ P(\alpha ,f\alpha ),Q(\alpha ,f\alpha )\}$ and for $i=2$ the substituted literals are $\{ P(\alpha ,g\alpha ),Q(\alpha ,g\alpha )\}$. In both cases and independent from the chosen leaf ($j\in\mathbb{N}_2$), one of the substituted literals is not an element of $A(S_j)$. For instance: Since $Q(\alpha ,f\alpha )$ of the substituted literals $\{ P(\alpha ,f\alpha ),Q(\alpha ,f\alpha )\}$ does not appear in 
\begin{align*}
A(P(\alpha ,f\alpha ),Q(\alpha ,g\alpha ),\neg P(c,\beta),\neg Q(c,\beta )\seq)= P(\alpha ,f\alpha ),Q(\alpha ,g\alpha )\seq 
\end{align*}
and $P(\alpha ,g\alpha )$ of the substituted literals $\{ P(\alpha ,g\alpha ),Q(\alpha ,g\alpha )\}$ does not appear in 
\begin{align*}
P(\alpha ,f\alpha ),Q(\alpha ,g\alpha )\seq 
\end{align*}
we conclude 
\begin{align*}
\RI{P(\alpha ,f\alpha ),Q(\alpha ,g\alpha ),\neg P(c,\beta),\neg Q(c,\beta )\seq}= \emptyset .
\end{align*}
Hence, $Sol(\mathcal{A})$ is empty.
\end{example}
\begin{remark}
The condition of Definition \ref{def:allowed clauses} of allowed clauses is necessary. 
\end{remark}
\begin{proof}
Assume a solution $\mathcal{S}$ of an \seHs $S(X)$ in disjunctive normal form such that no clause fulfils the condition of Definition \ref{def:allowed clauses}, i.e.\ there is a leaf $S$ for all clauses $C$ and all $i\in\mathbb{N}_p$ such that we find literals $L_{i,C}$ where $L_{i,C}[x\backslash\alpha ,y\backslash t_i]\notin A(S)$. Let $\mathcal{L}$ be the set of all $L_{i,C}[x\backslash\alpha ,y\backslash t_i]$ with $C\in\mathcal{S}$ and $i\in\mathbb{N}_p$. Then $A(S)\circ\{\seq\mathcal{L}\}$ is a non-tautological sequent whose initial sequent appears in every proof of $S(\mathcal{S})$. Therefore, $\mathcal{S}$ cannot be a solution.
\end{proof}
We can show that each solution candidate is actually a solution. 
\begin{theorem}\label{theo.soundness}
Let 
\begin{align*}
& F[\tuples{x}\backslash U_1],\bigvee\limits_{i=1}^pX\alpha t_i\to 
\bigwedge\limits_{j=1}^mXr_j\beta_j\seq 
G[\tuples{y}\backslash U_2]
\end{align*}
be an \seHs \!\!, $Sol(\mathcal{A})\neq\emptyset$ be 
defined as in Definition \ref{def:solution set} for a given starting set 
$\mathcal{A}$, and $\Ccal \in Sol(\mathcal{A})$. Let $E\colon \DNF(\Ccal)$ be the formula 
in DNF corresponding to $\Ccal$ and $\Ehat = \lambda x y.E$. Then 
\begin{align*}
F[\tuples{x}\backslash U_1],\bigvee\limits_{i=1}^p\Ehat\alpha t_i\to 
\bigwedge\limits_{j=1}^m\Ehat r_j\beta_j\seq 
G[\tuples{y}\backslash U_2]
\end{align*}
is a tautology, i.e.\ a solution candidate and a refined solution candidate are solutions. 
\end{theorem}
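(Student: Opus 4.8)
The plan is to reduce the claimed tautology to the two families of propositional sequents \eqref{equ:sequent beta} and \eqref{equ:sequent alpha} already isolated before the theorem, and then to read off their validity from the defining conditions of $Sol(\mathcal{A})$. First I would fix a maximal \textbf{G3c}-derivation $\psi$ of the reduced representation $F[\bar{x}\backslash U_1]\vdash G[\bar{y}\backslash U_2]$ and note that, since $\Ccal\subseteq\mathcal{A}$ and $V(\mathcal{A})\subseteq\{x,y\}$, the formula $E=\DNF(\Ccal)$ is quantifier-free with $V(E)\subseteq\{x,y\}$, so $\Ehat=\lambda x y.E$ is a legitimate cut candidate. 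Carrying the (quantifier-free) cut formula $\bigvee_{i=1}^p\Ehat\alpha t_i\to\bigwedge_{j=1}^m\Ehat r_j\beta_j$ unchanged through $\psi$ as a side formula, every tautological leaf of $\psi$ stays tautological, while each $S\in\text{DNTA}(\psi)$ gives rise to $A(S),B(S),N(S),\bigvee_{i}\Ehat\alpha t_i\to\bigwedge_{j}\Ehat r_j\beta_j\vdash$. Applying $\to\colon$l and then the immediate $\land\colon$l and $\lor\colon$r steps produces exactly \eqref{equ:sequent beta} and \eqref{equ:sequent alpha} with $X$ instantiated to $\Ehat$, so it suffices to prove both are tautologies for every such $S$. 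Since $\Ccal\in Sol(\mathcal{A})\subseteq Cl(\mathcal{A})$, I may use both the conditions $T(\vec{C},S)$ of Definition~\ref{def:Set of possible sets of clauses} and the conditions $T'(\Ccal,\vec{C},S)$ of Definition~\ref{def:solution set}, each holding for all admissible $\vec{C}$ and all $S\in\text{DNTA}(\psi)$.

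For \eqref{equ:sequent beta} I would expand the antecedent occurrence of $\bigwedge_{j=1}^m\Ehat r_j\beta_j$, where each $\Ehat r_j\beta_j=\bigvee_{C\in\Ccal}\bigwedge_{L\in C}L[x\backslash r_j,y\backslash\beta_j]$. The rule $\land\colon$l moves all $m$ disjunctions to the left, and the subsequent $\lor\colon$l-branching is indexed precisely by the $m$-tuples $\vec{C}=(C_1,\ldots,C_m)\in\vec{\Ccal}_m$: in the branch belonging to $\vec{C}$ the antecedent contains $A(S),B(S),N(S)$ together with all literals $L[x\backslash r_i,y\backslash\beta_i]$ for $L\in\vec{C}(i)$ and $i\in\mathbb{N}_m$. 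Now $T(\vec{C},S)$ supplies a closing pair: $T_1$ makes some $L[x\backslash r_i]$ dual to a literal of $N(S)$, $T_2$ makes some $L[x\backslash r_i,y\backslash\beta_i]$ dual to a literal of $B(S)$, and $T_3$ makes two of the newly introduced literals mutually dual; in each case an atom occurs on both sides after undualizing, so the branch is a tautological axiom. As $T(\vec{C},S)$ holds for every $\vec{C}\in\vec{\Ccal}_m$, all branches close and \eqref{equ:sequent beta} is a tautology.

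For \eqref{equ:sequent alpha} the propositional behaviour is dual. Here $\bigvee_{i=1}^p\Ehat\alpha t_i$ stands in the succedent, with $\Ehat\alpha t_i=\bigvee_{C\in\Ccal}\bigwedge_{L\in C}L[x\backslash\alpha,y\backslash t_i]$. The $\lor\colon$r rules place all conjunctions $\bigwedge_{L\in C}L[x\backslash\alpha,y\backslash t_i]$ (for $C\in\Ccal$, $i\in\mathbb{N}_p$) into the succedent, and $\land\colon$r now branches \emph{multiplicatively}: a branch selects one literal $\vec{C}(C,i)\in C$ from each conjunction, so the branches are indexed by $\vec{C}\in\prod_{C\in\Ccal}\vec{\mathcal{L}}_p(C)$, matching Definition~\ref{def:solution set}. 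In the branch for $\vec{C}$ the succedent carries the literals $\vec{C}(C,i)[x\backslash\alpha,y\backslash t_i]$. Condition $T'(\Ccal,\vec{C},S)$ closes it: $T'_1$ puts such a literal into $N(S)$ and $T'_3$ makes two succedent literals complementary, both yielding a tautological axiom, while $T'_2$ provides a clause $C$ and an allowed clause $I\in I(S)$ with $\vec{C}(C,i)\in I$ for all $i$; by Definition~\ref{def:allowed clauses} there is a \emph{single} index $i_0$ with $L[x\backslash\alpha,y\backslash t_{i_0}]\in A(S)$ for all $L\in I$, whence $\vec{C}(C,i_0)[x\backslash\alpha,y\backslash t_{i_0}]\in A(S)$ and this succedent literal yields a tautological axiom against $A(S)$. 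Since this holds for every $\vec{C}$, \eqref{equ:sequent alpha} is a tautology, and combining the two families over all $S\in\text{DNTA}(\psi)$ shows that the $\Ehat$-instance of the \seHs is a tautology.

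The main obstacle is the asymmetric bookkeeping of the two propositional expansions rather than any single hard estimate. On the antecedent the DNF cut formula branches over \emph{clauses} ($\lor\colon$l) and then collects all literals of the chosen clause ($\land\colon$l), whereas on the succedent it collects all clauses ($\lor\colon$r) and branches \emph{multiplicatively over literals} ($\land\colon$r); keeping these two indexing schemes, $\vec{\Ccal}_m$ versus $\prod_{C}\vec{\mathcal{L}}_p(C)$, aligned with the actual branch structure is where care is needed. The genuinely delicate point is $T'_2$: one must verify that forcing all chosen literals of a clause into a single allowed clause $I\in I(S)$ really produces an instantiated literal in $A(S)$ — this is exactly where the common index $i_0$ built into Definition~\ref{def:allowed clauses} is consumed — and that this requirement correctly excludes clauses such as $\{P(x,y),Q(x,y)\}$ from Example~\ref{exa.explaining}, which satisfy the weaker $Cl$-condition yet leave an $\land\colon$r-branch on the $\alpha$-side open.
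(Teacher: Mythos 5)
Your proposal is correct and takes essentially the same route as the paper's proof: both reduce the instantiated \seHs to the sequents \eqref{equ:sequent beta} and \eqref{equ:sequent alpha} at the leaves $\text{DNTA}(\psi)$ of a maximal \textbf{G3c}-derivation of the reduced representation, expand the DNF cut formula additively over clauses on the $\beta$-side and multiplicatively over literals on the $\alpha$-side, and close every branch using the conditions $T_1,T_2,T_3$ of $Cl(\mathcal{A})$ and $T'_1,T'_2,T'_3$ of $Sol(\mathcal{A})$. The only differences are cosmetic: the paper argues by contradiction (a non-closing branch would violate $T$ or $T'$) where you argue directly that every branch closes, and you make explicit the role of the common index $i_0$ in the definition of $I(S)$ when discharging $T'_2$, a step the paper leaves implicit.
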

\begin{proof}
If we want to prove that 
\begin{align*}
& F[\tuples{x}\backslash U_1],\bigvee\limits_{i=1}^p\Ehat \alpha t_i\to 
\bigwedge\limits_{j=1}^m\Ehat r_j\beta_j\seq G[\tuples{y}\backslash U_2]
\end{align*}
is a tautology we have to prove the sequents 
\begin{align}
F[\tuples{x}\backslash U_1],\Ehat r_1\beta_1, \ldots ,\Ehat r_m\beta_m & \seq G[\tuples{y} 
\backslash U_2] \text{ and} \label{equ:case 1 proof soundness} \\ 
F[\tuples{x}\backslash U_1] & \seq \Ehat \alpha t_1,\ldots ,\Ehat \alpha t_p, 
G[\tuples{y}\backslash U_2] \label{equ:case 2 proof soundness}.
\end{align}
First we will show that the sequent \eqref{equ:case 1 proof soundness} is a tautology. 
Assume it is not provable. The formulas $\Ehat r_j\beta_j$ for $j\in\mathbb{N}_m$ 
are formulas in DNF which can be interpreted as sets of sets of literals. In 
\textbf{G3c} a disjunction on the left 
\begin{align*}
\Delta ,\bigvee\limits_{i\in I}A_i & \seq\Gamma
\end{align*}
is considered to be true if the sequents $\Delta ,A_i \seq \Gamma$
are true for all $i\in I$. Hence, if the sequent 
\eqref{equ:case 1 proof soundness} is 
not provable then there are clauses $C_1,\ldots ,C_m$ in $\mathcal{C}$ such that, for 
$E_i = \lambda x y.\DNF(\{ C_i\})$, 
\begin{align}
F[\tuples{x}\backslash U_1],E_1 r_1\beta_1, \ldots ,E_m r_m\beta_m & \seq G[\tuples{y} 
\backslash U_2] \label{equ:interStepSoundness1}
\end{align}
is not provable. 

Now we apply the rules of a maximal \textbf{G3c}-derivation $\psi$ of 
$R$ and let the instantiations $E_1r_1\beta_1,\ldots ,E_m r_m\beta_m$ be untouched. 
The non-tautological axioms of $R$ can be represented by DNTA$(\psi )$ where $A(S) 
\circ B(S)\circ N(S)$ for $S\in 
\text{DNTA}(\psi )$ is defined as in Definition \ref{def:DNTA}. 
Hence, we can add the literals of the clauses $E_1 r_1\beta_1,\ldots ,E_m r_m\beta_m$ 
to $B(S)$ and $N(S)$ to get a representation of the non-tautological axioms of a 
maximal \textbf{G3c}-derivation of the sequent \eqref{equ:interStepSoundness1}. The 
part of literals that has been added to $B(S)$ will be denoted by $B$ and the part 
that has been added to $N(S)$ will be denoted by $N$. If the sequent 
\eqref{equ:interStepSoundness1} is not provable there has 
to be a non-tautological axiom $S'$, i.e.\ 
\begin{align*}
\forall L,Q\in S'\circ B\circ N.L\neq \overline{Q}.
\end{align*}
But this implies that there is no axiomatic constant ($T_1$), axiomatic literal 
($T_2$), or interactive literal ($T_3$). Thus it contradicts 
Definition \ref{def:Set of possible sets of clauses} and the sequent 
\eqref{equ:case 1 proof soundness} is provable. 

\smallskip

Now we have to prove that the sequent \eqref{equ:case 2 proof soundness} is a tautology. 
We will again assume that it is not a tautology and derive a contradiction. 
Let us assume there are $k$ clauses $C_1,\ldots ,C_k$ in $\mathcal{C}$. Thus, 
the sequent 
\begin{align*}
F[\tuples{x}\backslash U_1]\seq  E_1\alpha t_1,\ldots ,E_k\alpha t_1, \ldots 
,E_1\alpha t_p,\ldots ,E_k\alpha t_p,G[\tuples{y}\backslash U_2]
\end{align*}
where $E_i=\lambda xy.\DNF (\{ C_i\})$ is also not a tautology. Now we apply 
again the rules of a maximal 
\textbf{G3c}-derivation $\psi$ of $R$ and let the clauses be untouched. 
Given that the sequent above is not a tautology, there is also a leaf $S'$ in 
the derivation that is not a tautology. We find in each $E_i\alpha t_j$ for $i\in\{ 
1,\ldots ,k\}$ and $j\in\{ 1,\ldots ,p\}$ a literal $L_l$ with $l=(i-1)\cdot p+j$ such 
that 
\begin{align*}
S':= & S\circ (\;\seq L_1,\ldots ,L_{k\cdot p}) 
\end{align*}
is not a tautology. But this implies that there is neither an axiomatic constant 
($T'_1$), nor an axiomatic literal ($T'_2$), nor an interactive literal ($T'_3$) and 
contradicts Definition \ref{def:solution set}. Hence, the sequent 
\eqref{equ:case 2 proof soundness} is a tautology. 
\end{proof}
Furthermore we can show that the Definitions 
\ref{def:Set of possible sets of clauses} and \ref{def:solution set} do not eliminate 
solutions, i.e.\ if there is a subset in the starting set $\mathcal{A}$ that is a 
solution then this set will also be an element of $Sol(\mathcal{A})$. 
\begin{theorem}[Partial completeness]
\label{the:Partial completeness}
Let
\begin{align*}
& F[\tuples{x}\backslash U_1],\bigvee\limits_{i=1}^pX\alpha t_i\to 
\bigwedge\limits_{j=1}^mXr_j\beta_j\seq G[\tuples{y}\backslash U_2]
\end{align*}
be an \seHs\!\!, $\mathcal{A}$ be a starting set, and $\mathcal{C}\subseteq\mathcal{A}$. Let $E = \DNF(\Ccal)$ be the formula in DNF corresponding to $\Ccal$ and $\Ehat = \lambda x y.E$. If 
\begin{align*}
& F[\tuples{x}\backslash U_1],\bigvee\limits_{i=1}^p\Ehat\alpha t_i\to 
\bigwedge\limits_{j=1}^m\Ehat r_j\beta_j\seq G[\tuples{y}\backslash U_2]
\end{align*}
is a tautology then $\mathcal{C}\in Sol(\mathcal{A})$ where $Sol(\mathcal{A})$ is as in Definition \ref{def:solution set}.
\end{theorem}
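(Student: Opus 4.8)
This is the partial completeness theorem (Theorem~\ref{the:Partial completeness}), the converse direction to the soundness theorem. Soundness says every solution *candidate* is a solution; partial completeness says that whenever a subset $\mathcal{C}\subseteq\mathcal{A}$ of the starting set actually *is* a solution (i.e. $S(\hat{E})$ is a tautology for $\hat{E}=\lambda xy.\DNF(\mathcal{C})$), then $\mathcal{C}$ survives both filtering steps and lands in $Sol(\mathcal{A})$. So the claim is: being a genuine solution implies satisfying conditions $T$ and $T'$ across all leaves and all tuples.

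**The plan.** The natural strategy is the contrapositive, mirroring (in reverse) the structure of the soundness proof. I would assume $\mathcal{C}\notin Sol(\mathcal{A})$ and derive that the displayed sequent is *not* a tautology. There are two ways to fail membership in $Sol(\mathcal{A})$: either $\mathcal{C}\notin Cl(\mathcal{A})$ (the first filter from Definition~\ref{def:Set of possible sets of clauses}), or $\mathcal{C}\in Cl(\mathcal{A})$ but the second predicate $T'$ fails (Definition~\ref{def:solution set}). Correspondingly I would split into the two subgoals used in soundness: sequent~\eqref{equ:case 1 proof soundness} governs the $\beta$-side and is controlled by $Cl(\mathcal{A})$ via $T$, while sequent~\eqref{equ:case 2 proof soundness} governs the $\alpha$-side and is controlled by $Sol(\mathcal{A})$ via $T'$.

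**First subgoal (the $\beta$-side).** Suppose $\mathcal{C}\notin Cl(\mathcal{A})$. By Definition~\ref{def:Set of possible sets of clauses} there is a leaf $S\in\mathrm{DNTA}(\psi)$ and an $m$-tuple $\vec{C}\in\vec{\mathcal{C}}_m$ with $T(\vec{C},S)$ false, i.e. none of $T_1,T_2,T_3$ holds. I would take exactly the clauses $C_1=\vec{C}(1),\dots,C_m=\vec{C}(m)$ and, following the instantiation pattern in the soundness proof, form the branch of a maximal derivation of sequent~\eqref{equ:interStepSoundness1} obtained by leaving $E_1r_1\beta_1,\dots,E_mr_m\beta_m$ untouched and applying $\psi$'s rules to the reduced representation, reaching the leaf $S$. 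The failure of $T_1,T_2,T_3$ is precisely the statement that no literal added to $B(S)$ or $N(S)$ from the clauses meets its dual — so the augmented leaf $A(S),B(S),B,N(S),N\vdash$ contains no complementary pair and is non-tautological. Hence~\eqref{equ:case 1 proof soundness}, and thus $S(\hat{E})$, is not a tautology, contradicting the hypothesis. The key point to verify carefully is that the correspondence between the failing tuple $\vec C$ and an actual accessible leaf of the augmented derivation is tight — that the very clauses witnessing the failure are the ones whose instances populate that leaf.

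**Second subgoal (the $\alpha$-side) — the main obstacle.** Now suppose $\mathcal{C}\in Cl(\mathcal{A})$ but $T'$ fails: there is a leaf $S$ and a choice function $\vec{C}\in\vec{\mathcal{C}}=\prod_C\vec{\mathcal{L}}_p(C)$ with $T'_1,T'_2,T'_3$ all false. I expect this to be the harder half, because the bookkeeping is more delicate: the $\alpha$-side puts $\hat{E}\alpha t_j$ on the *right*, so each clause $C_i$ contributes all its literals (via $\vec{C}(C,i)$) as potential succedent atoms, and the condition $T'_2$ involves the allowed-clause set $I(S)$ and Proposition~\ref{pro.allowedclauses} rather than a plain membership test. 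The crux is to show that the negation of $T'_2$ — that no clause $C$ has all its selected literals $\vec{C}(C,1),\dots,\vec{C}(C,p)$ lying in a single allowed clause $I\in I(S)$ — correctly translates into the existence of a surviving non-tautological leaf in the derivation of~\eqref{equ:case 2 proof soundness}. I would argue that failure of $T'_1$ kills the neutral-atom escape, failure of $T'_3$ kills the internal complementary-pair escape, and failure of $T'_2$ is exactly the combinatorial condition (cf. the ``property'' identified at the end of Example~\ref{exa.explaining}) guaranteeing that for some $E_i$ no instantiation $E_i\alpha t_j$ lands among the leaf's antecedent atoms $A(S)$, so the succedent disjunction cannot be closed. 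Assembling these into a genuine non-tautological leaf $S'$ of the full derivation then contradicts the tautology hypothesis. The main work, and the likely source of subtlety, is getting the $I(S)$-semantics aligned: one must confirm that the definition of $I(S)$ via $A'(S)$ and the $t_i$-instantiation faithfully captures ``this clause's literals can simultaneously be discharged against $A(S)$,'' and that a failing $\vec{C}$ really yields an untautological leaf rather than one that happens to close for an unrelated reason.
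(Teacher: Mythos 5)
Your proposal matches the paper's own proof in both structure and substance: the paper also argues by contradiction in two stages, first using a failing $m$-tuple $\vec{C}$ and leaf $S$ to exhibit a non-tautological leaf of the $\beta$-side sequent (showing $\mathcal{C}\in Cl(\mathcal{A})$), then using a failing choice function in $\vec{\mathcal{C}}$ to exhibit a non-tautological leaf $A(S),B(S),N(S)\vdash \hat{L}_1\alpha t_1,\ldots,\hat{L}_{k\cdot p}\alpha t_p$ on the $\alpha$-side. The $I(S)$-alignment you flag as the delicate point is treated no more explicitly in the paper, which likewise concludes directly from the negations of $T'_1,T'_2,T'_3$ that the constructed leaf cannot close.
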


\begin{proof}
At first we assume that there is a solution $\mathcal{C}$ for the \seHs 
that is a subset of the starting set $\mathcal{A}$ but 
$\mathcal{C}$ is not an element of $Cl(\mathcal{A})$ of Definition 
\ref{def:Set of possible sets of clauses}. Let $\psi$ be a maximal 
\textbf{G3c}-derivation. If $\mathcal{C}$ is not an element of 
$Cl(\mathcal{A})$ but $\mathcal{C}\subseteq\mathcal{A}$ then 
\begin{align*}
&\; \exists\vec{C}\in\vec{\mathcal{C}}_m\exists S\in\text{DNTA}(\psi ). 
\neg T(\vec{C},S) \\ 
&\; \text{with } \\ 
\neg T(\vec{C},S):= &\; \neg T_1(\vec{C},S)\,\textbf{and}\, 
\neg T_2(\vec{C},S)\,\textbf{and}\,\neg T_3(\vec{C}), \\
\neg T_1(\vec{C},S):= &\; \forall i\in\mathbb{N}_m\forall L\in\vec{C}(i), 
L[x\backslash r_i]\notin \overline{N(S)} \\
&\; \text{ where }\overline{N(S)}\text{ denotes the dualized set }N(S), \\
\neg T_2(\vec{C},S):= &\; \forall i\in\mathbb{N}_m\forall L\in \vec{C}(i). 
L[x\backslash r_{i},y\backslash\beta_{i}]\notin \overline{B(S)}, \\ 
&\; \text{and} \\ 
\neg T_3(\vec{C}):= &\; \forall i,j\in\mathbb{N}_m 
\forall L\in\vec{C}(i)\forall Q\in\vec{C}(j).L[x\backslash r_i,y\backslash\beta_i]\neq 
\overline{Q}[x\backslash r_j,y\backslash\beta_j] 
\end{align*}
where $\vec{\mathcal{C}}_m$ is defined as in Definition 
\ref{def:Set of possible sets of clauses}.
Let $S$ be an element of $\text{DNTA}(\psi )$ of the form 
$A(S)\circ B(S)\circ N(S)$. There is a $m$-tuple of clauses 
$(C_1,\ldots ,C_m)$ with $C_i\in\mathcal{C}$ for 
$i\in\mathbb{N}_m$ fulfilling the following property. Let $E_k:=\DNF (\{ C_k\})$ 
and $\Ehat_k:=\lambda xy.E_k$ for $k\in\mathbb{N}_m$ then 
\begin{align*}
& S\circ (\Ehat_1r_1\beta_1,\ldots ,\Ehat_mr_m\beta_m\seq )
\end{align*}
is not a tautology. But then also 
\begin{align*}
& F[\tuples{x}\backslash U_1], 
\bigwedge\limits_{j=1}^m\Ehat r_j\beta_j\seq G[\tuples{y}\backslash U_2]
\end{align*}
is not a tautology, i.e.\ $\mathcal{C}$ is not a solution and by contradiction 
$\mathcal{C}\in Cl(\mathcal{A})$. \\ 
Now we assume $\mathcal{C}\notin Sol(\mathcal{A})$. Given that 
$\mathcal{C}\in Cl(\mathcal{A})$ we find an element $\vec{C}\in\vec{\mathcal{C}}$ and 
a leaf $S\in\text{DNTA}(\psi )$ such that 
\begin{align*}
&\; \forall C\in\mathcal{C}\ \forall i\in\mathbb{N}_p.\; L(C,i)[y\backslash t_i] 
\notin N(S) \\
\textbf{and}\, &\; \forall C\in\mathcal{C}\ \forall I\in \RI{S}\ \exists i\in\mathbb{N}_p. \;
L(C,i)\notin I \\ 
\textbf{and}\, &\; \forall C,D\in\mathcal{C}\ \forall i,j\in\mathbb{N}_p.\; 
L(C,i)[x\backslash\alpha ,y\backslash t_i]\neq\overline{L(D,j)} 
[x\backslash\alpha ,y\backslash t_j] 
\end{align*}
where $\vec{\mathcal{C}}$ is defined as in Definition \ref{def:solution set} and $\RI{S}$ is defined as in Definition \ref{def:allowed clauses}. Let $k$ be the number of clauses in 
$\mathcal{C}$ then we find for all of them $p$ literals 
\begin{align*}
L(C_1,1),\ldots ,L(C_1,p),\ldots ,L(C_k,1),\ldots ,L(C_k,p)
\end{align*}
where $C_1,\ldots ,C_k$ are the $k$ clauses such that the sequent 
\begin{align*}
S\circ (\; \seq &\; \hat{L}_1\alpha t_1,\ldots ,\hat{L}_p\alpha t_p,\ldots 
,\hat{L}_{(k-1)\cdot p+1}\alpha t_1,\ldots ,\hat{L}_{k\cdot p}\alpha t_p )
\end{align*}
with $\hat{L}_q:= \lambda xy.\DNF (\{ L(C_i,j)\} )$ for $q\in\mathbb{N}_{k\cdot p},$ $q=(i-1)\cdot p+j,$ $i\in\mathbb{N}_k,$ and $j\in\mathbb{N}_p$ does not contain an axiomatic constant ($T'_1$), an axiomatic literal ($T'_2$), or an interactive literal ($T'_3$). Furthermore, $S$ is not a tautology and the literals 
\begin{align*}
\hat{L}_1\alpha t_1,\ldots ,\hat{L}_p\alpha t_p,\ldots 
,\hat{L}_{(k-1)\cdot p+1}\alpha t_1,\ldots ,\hat{L}_{k\cdot p}\alpha t_p
\end{align*} 
do not contain the eigenvariables $\beta_1,\ldots ,\beta_m$. Hence, none of the literals occurs in $A(S),B(S),$ or $N(S)$ and the found sequent is not a tautology. 
This contradicts the assumption that $\mathcal{C}$ is a solution and is not an element 
of $Sol(\mathcal{A})$. Thus, $\mathcal{C}\in Sol(\mathcal{A})$. 
\end{proof}
To prove full completeness we need a starting set for every possible reduced representation. In Section \ref{sec.Gunify} we show that we can define starting sets, provided a {\em balanced solution} of the \seHs exists. The general case is not treated in this paper. But the characterization is complete in that it will always compute a solution if by the clauses of the starting set a solution can be constructed. So the problem reduces to find starting sets.

Finally we show that, whenever  $Sol(\mathcal{A}) \neq\emptyset$ for a given starting set $\mathcal{A}$, the problem of $\Pi_2$-cut introduction is solvable.

\begin{theorem}\label{the.main}
Let 
\begin{align*}
& F[\tuples{x}\backslash U_1],\bigvee\limits_{i=1}^pX\alpha t_i\to 
\bigwedge\limits_{j=1}^mXr_j\beta_j\seq 
G[\tuples{y}\backslash U_2]
\end{align*}
be an \seHs \!\! corresponding to a Herbrand sequent of a cut-free proof of $S$ and 
a grammar $\mathcal{G}$ covering the Herbrand term set of $S$.  Let $Sol(\mathcal{A}) 
\neq\emptyset$ be 
defined as in Definition \ref{def:solution set} for a given starting set 
$\mathcal{A}$, and $\Ccal \in Sol(\mathcal{A})$. Let $E = \DNF(\Ccal)$ be the formula 
in DNF corresponding to $\Ccal$ and $V(E) \subseteq \{x,y\}$.  Then there exists a proof of $S$ with one cut and the cut formula $\forall x \exists y.E$ 
\end{theorem}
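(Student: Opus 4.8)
The plan is to prove Theorem~\ref{the.main} as a composition of two results already at hand: the Soundness theorem (Theorem~\ref{theo.soundness}), which turns a solution candidate into an actual solution of the \seHs\!\!, and the structural equivalence of Theorem~\ref{the:Pi2-cut extended Herbrand sequent}, which turns a tautological extended Herbrand sequent into a proof with a single $\Pi_2$-cut. In other words, once we know that $\Ehat$ solves the \seHs\!\!, the desired proof is delivered by the right-to-left construction of Theorem~\ref{the:Pi2-cut extended Herbrand sequent} with cut formula $\forall x\exists y.E$.

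First I would invoke Theorem~\ref{theo.soundness}. Since $\Ccal\in Sol(\mathcal{A})$ and $E=\DNF(\Ccal)$, $\Ehat=\lambda xy.E$, the theorem yields that the sequent
\begin{align*}
F[\bar{x}\backslash U_1],\bigvee\limits_{i=1}^p\Ehat\alpha t_i\to
\bigwedge\limits_{j=1}^m\Ehat r_j\beta_j\vdash
G[\bar{y}\backslash U_2]
\end{align*}
is a tautology. Next I would $\beta$-reduce the instances of the cut predicate: because $\Ehat=\lambda xy.E$ we have $\Ehat\alpha t_i = E[x\backslash\alpha,y\backslash t_i]$ and $\Ehat r_j\beta_j = E[x\backslash r_j,y\backslash\beta_j]$. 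Hence the tautology above is literally the sequent $\EH{E}$ of Definition~\ref{def.extHseq}, taking the quantifier-free cut formula to be $A:=E$.

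It then remains to check that $\EH{E}$ genuinely qualifies as an extended Herbrand sequent, i.e.\ that the variable conditions of Definition~\ref{def.extHseq} are met. These are inherited directly from the schematic $\Pi_2$-grammar $\Gcal$: the grammar guarantees $V(U_1)\subseteq\{\alpha\}$, $V(U_2)\subseteq\{\beta_1,\ldots,\beta_m\}$, $V(t_i)\subseteq\{\alpha\}$, $V(r_1)=\emptyset$ and $V(r_j)\subseteq\{\beta_1,\ldots,\beta_{j-1}\}$ for $j\ge 2$, while $V(E)=\{x,y\}$ is assumed in the hypothesis and $E$, being a $\DNF$ of clauses over $\{x,y\}$, is quantifier-free. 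With these conditions satisfied and $\EH{E}$ a tautology, $\EH{E}$ is an extended Herbrand sequent of $S$; applying the right-to-left direction of Theorem~\ref{the:Pi2-cut extended Herbrand sequent} then produces a proof of $S$ with a single $\Pi_2$-cut whose cut formula is $\forall x\exists y.E$.

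The hard part is not combinatorial but is the bookkeeping that makes the two theorems compose cleanly. One must confirm that the instance sets $U_1,U_2$ occurring in the reduced representation are exactly the $F$- and $G$-instances, so that the weak quantifier inferences reintroducing $\forall\bar{x}F$ and $\exists\bar{y}G$ in the construction of Theorem~\ref{the:Pi2-cut extended Herbrand sequent} close off to the end-sequent $S$ rather than to a weaker sequent. This is precisely what the hypothesis that $\Gcal$ covers the Herbrand term set of $S$ (and corresponds to a Herbrand sequent of a cut-free proof of $S$) provides, namely $H_s(S)\subseteq L(\Gcal)$; so $F[\bar{x}\backslash U_1]\vdash G[\bar{y}\backslash U_2]$ is a genuine reduced representation and the reassembled proof has end-sequent $S$, as required.
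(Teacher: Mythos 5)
Your proposal is correct and follows essentially the same route as the paper's own proof: the paper likewise combines Theorem~\ref{theo.soundness} (to turn the solution candidate $\Ccal$ into a tautological instance of the \seHs) with the construction underlying Theorem~\ref{the:Pi2-cut extended Herbrand sequent} (explicitly drawing the cut inference that reassembles $\pi_l$ and $\pi_r$ into a proof of $S$ with cut formula $\forall x\exists y.\DNF(\Ccal)$). Your additional bookkeeping on the variable conditions of Definition~\ref{def.extHseq} and on $H_s(S)\subseteq L(\Gcal)$ is a welcome refinement of details the paper leaves implicit, but it does not change the argument.
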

\begin{proof}
If there is an element $\mathcal{C}$ in $Sol(\mathcal{A})$ for a given starting set 
$\mathcal{A}$ and a given \seHs\!\!, we are able to construct a proof with a 
$\Pi_2$-cut. 
\\
Let 
\begin{align*}
\mathbb{F}:=&\; F[\tuples{x}\backslash U_1], \\
\mathbb{G}:=&\; G[\tuples{y}\backslash U_2], \\
\mathbb{A}:=&\; \forall\tuples{x}.F,\text{ and} \\
\mathbb{B}:=&\; \exists\tuples{y}.G. 
\end{align*}
Assume an \seHs 
\begin{align*}
\mathbb{F},\bigvee\limits_{i=1}^pX[x\backslash\alpha 
,y\backslash t_i]\to 
\bigwedge\limits_{j=1}^mX[x\backslash r_j,y\backslash \beta_j]\seq\mathbb{G}
\end{align*}
and the clause set $\mathcal{C}\in Sol(\mathcal{A})$ for the starting set $\mathcal{A}$. 
Then there are maximal \textbf{G3c}-derivations $\pi_l$ and $\pi_r$ with axioms as 
leaves for the sequents 
\begin{align*}
\mathbb{F}\seq\bigvee\limits_{i=1}^p\lambda xy.(\DNF (\mathcal{C}))
\alpha t_i,\mathbb{G}
\end{align*}
and
\begin{align*}
\mathbb{F},\bigwedge\limits_{j=1}^m\lambda xy.(\DNF (\mathcal{C}))r_j\beta_j 
\seq\mathbb{G},
\end{align*}
respectively. Furthermore, the following proof is valid and contains a single 
$\Pi_2$-cut: 
\begin{center}
\begin{fCenter}
\Axiom$ \;\fCenter \pi_l $
\UnaryInf$\mathbb{A}',\mathbb{F} \;\fCenter\seq 
\bigvee\limits_{i=1}^p\lambda xy.(\DNF (\mathcal{C}))
\alpha t_i,\mathbb{G},\mathbb{B}'$
\UnaryInf$ \;\fCenter\vdots $
\UnaryInf$\mathbb{A} \;\fCenter\seq 
\forall x\exists y.\DNF (\mathcal{C}),\mathbb{B}$
\Axiom$ \;\fCenter \pi_r $
\UnaryInf$\mathbb{A}',\mathbb{F},\bigwedge\limits_{j=1}^m\lambda xy.(\DNF (\mathcal{C}))r_j\beta_j 
\;\fCenter\seq \mathbb{G},\mathbb{B}'$
\UnaryInf$ \;\fCenter\vdots $
\UnaryInf$\mathbb{A},\forall x\exists y.\DNF (\mathcal{C}) 
\;\fCenter\seq \mathbb{B}$
\LeftLabel{\textit{Cut}}
\BinaryInf$\mathbb{A} \;\fCenter\seq \mathbb{B}.$
\DisplayProof
\end{fCenter}
\end{center}
This is guaranteed by the Theorems \ref{theo.soundness} and 
\ref{the:Pi2-cut extended Herbrand sequent} and, hence, solves the main problem of 
our paper.
\end{proof}

\section{$G^*$-Unifiability}\label{sec.Gunify}

In the previous section we developed a method to check whether a given starting set 
contains a solution for an \seHs\!\!. However, we did not explain how such starting sets can be constructed. In this section we present a method that produces a starting set for a given reduced representation of an 
\seHs\!\!. This starting set will contain a solution if there is a so-called {\em balanced solution}. 

To understand the construction of the starting set we have a look on the leaves DNTA$(\psi )$ of a maximal \textbf{G3c}-derivation 
$\psi$ of a given reduced representation $R$. A solution of the corresponding 
\seHs contains for each leaf $S$ in DNTA$(\psi )$ 
at least one literal $L$ with V$(L)\subseteq\{ x,y\}$, that is an element of 
$A(S),B(S),$ or $N(S)$ with the correct substitutions for $x$ and $y$. Hence, the first approach is to collect all literals that can be substituted such that they become at least one element of $A(S),B(S),$ and $N(S)$. Then we consider all possible sets containing a subset of these literals (see the naive starting set in Definition 
\ref{def:naive starting set}).
\begin{definition}
\label{def.interact}
A literal $L$ with V$(L)\subseteq\{ x,y\}$ {\em interacts} with a literal in $A(S),B(S),$ or $N(S)$ if there are substitutions $[x\backslash u,y\backslash v]$ corresponding to the \seHs such that $L[x\backslash u,y\backslash v]$ is an element of $A(S),B(S),$ or $N(S)$. We say $[x\backslash u,y\backslash v]$ corresponds to the \seHs if $u=\alpha\land v=t_i$ for some $i\in\mathbb{N}_p$ or $u=r_j\land v=\beta_j$ for some $j\in\mathbb{N}_m$.
\end{definition}
Let us assume that a literal $L$ of the solution interacts twice with a literal 
in $A(S),B(S),$ or $N(S)$: with a literal $L_\alpha$ when we replace $x$ and $y$ with $\alpha$ and $t_i$, and with a literal $L_\beta$ when we replace $x$ and $y$ with $r_j$ and $\beta_j$ for $i\in\mathbb{N}_p$ and $j\in\mathbb{N}_m$. We call $L_\alpha$ and $L_\beta$ {\em interacting literals}. We will prove in this section that there are replacements making $L_\alpha$ ans $L_\beta$ equal in a way that the resulting literal does not contain the variables $\alpha,\beta_1,\ldots,\beta_m$. The basic idea of the unification method to be defined below is to find all interacting literals and use them for the constructing of the starting set.

In the first step we collect for each $S \in\text{DNTA}(\psi )$ all pairs of literals which are potential candidates for interacting literals.

\begin{definition}[Unification candidates]
\label{def:Unification candidates}
Assume an \seHs with the corresponding reduced 
representation $R$. Let $S,S'\in\text{DNTA}(\psi )$ for a maximal 
\textbf{G3c}-derivation $\psi$ of $R$. Then 
\begin{align*}
\UC (S, S')=&\{ (L,Q) \; |\; 
L\in A(S)\cup N(S)\textbf{ and }Q\in B(S')\cup N(S')\}
\end{align*}
is the set of {\em unification candidates} for the leaves $S$ and $S'$. 
\end{definition}

To be able to unify them we introduce a specific type-0-grammar \cite{chomskyN1956aa}. 

\begin{definition}
\label{def:type-0-grammar}
Let $\Gcal=\langle\tau ,N,\Sigma ,\Pr\rangle$ be a schematic $\Pi_2$-grammar with the 
non-terminals $\tau ,\alpha ,$ $\beta_1,\ldots ,\beta_m$. We define the 
type-0-grammar 
$\Gcal^*=\langle\tau ,N,\Sigma^*,\Pr^*\rangle$ by 
\begin{align*}
& \Sigma^*= \Sigma\cup\{ x,y\} \text{ and } 
\text{Pr}^*=  \Upsilon_1\cup\Upsilon_2\cup\Upsilon_3 \text{ with} \\
\Upsilon_1= & \{ q\; |\; q\text{ is a }F\text{-production or a } 
G\text{-production}\} \\ 
\Upsilon_2= & \{\alpha\to x,r_1\to x,\ldots ,r_m\to x\} ,\text{ and} \\ 
\Upsilon_3= & \{ t_1(\alpha )\to y,\ldots ,t_p(\alpha )\to y,\beta_1\to y,\ldots 
,\beta_m\to y\} . 
\end{align*}
\end{definition}

For the definition of the unification method we need a notion of a derivation 
applied to a literal. A derivation $d$ is a finite number of positions $p_1,\ldots ,p_n$ 
and production rules $\theta_1\to s_1,\ldots ,\theta_n\to s_n$. If we apply 
$d$ to a literal $L$, i.e.\ $L|d$ then we replace sequentially the terms $\theta_i$ with $s_i$ at the positions $p_i$ for $i=1$ until $i=n$.

\begin{definition}[$\Gcal^*$-unifiability]
\label{def:G*-unifiability}
Assume an \seHs with the corresponding reduced representation $R$ and schematic $\Pi_2$-grammar $\Gcal$. Let $S,S' \in\text{DNTA}(\psi )$ for a maximal \textbf{G3c}-derivation $\psi$ of $R$, $(L,Q)\in\UC (S,S')$, and $\Gcal^*=\langle\tau ,N,\Sigma^*,\Pr^*\rangle$ as in Definition \ref{def:type-0-grammar}. \\
We say $(L,Q)$ is $\Gcal^*$-{\em unifiable} if there are derivations $d$ and $b$ in 
$\Gcal^*$ such that $L|d= \overline{Q}|b$ and V$(L|d) \subseteq \{ x,y\}$. Furthermore 
we call $L|d$ the $\Gcal^*$-{\em unified literal} of $(L,Q)$. \\
We call $R$ $\Gcal^*$-unifiable if we find for every $S\in\text{DNTA}(\psi )$ a 
$S'\in\text{DNTA}(\psi )$ such that there is a 
$\Gcal^*$-unifiable unification candidate in $\UC (S,S')$. 

Formally we define the {\em maximal set of $\Gcal^*$-unified literals} as
$$\text{MGUL}(S,S'):= \{ L\; |\; L\text{ is a }\Gcal^*\text{-unified literal of } 
(L_1,L_2)\in\UC (S,S')\} .$$
\end{definition}
In the construction of a starting set for a unifiable reduced representation $R$ we use all possible clauses that consist of $\mathcal{G}^*$-unified literals.
\begin{definition}[Starting set for $\Gcal^*$-unifiable sequents]
\label{def:starting set for G*-unifiable sequents}
Let $R:=F[\tuples{x}\backslash U_1]\seq G[\tuples{x}\backslash U_2]$ be a $\Gcal^*$-unifiable reduced representation $R$ of an \seHs with a corresponding schematic $\Pi_2$-grammar $\Gcal$. Let $\psi$ be a fixed maximal \textbf{G3c}-derivation. For each pair of leaves $S,S'\in \text{DNTA}(\psi )$ we have the maximal set of $\Gcal^*$-unifiable terms $\text{MGUL}(S,S')$. Then the {\em starting set for the $\Gcal^*$-unifiable reduced  representation } $R$ is defined as 
\begin{align*}
& \mathcal{U}:=\{ U\; |\;U\subseteq\bigcup\limits_{S,S'\in 
\text{DNTA}(\psi )} 
\text{MGUL}(S,S') \} .
\end{align*}
\end{definition}
\begin{lemma}
Let $R$ be the reduced representation of a given \seHs with grammar $\Gcal$, $n$ be the number of atoms occurring in $R$, and $k$ be the length of an encoding or $R$. Let $m$ and $p$ be the numbers given by the to $\Gcal$ corresponding grammar $\Gcal^*$ (see Definition \ref{def:starting set for G*-unifiable sequents}). Then the starting set for $\Gcal^*$-unifiable sequents $\mathcal{U}$ can be constructed in polynomial time $\mathcal{O}(n^2\cdot k\cdot (p+m))$.
\end{lemma}
\begin{proof}
Note that the set of pairs we can build by concatenating two atoms of $R$ is a superset of the set of all unification candidates. The size of this set is $n^2$. For each pair we have to compare at most $k$ symbols in order to unify them. The unification itself compares two symbols with each other or checks whether the symbols can be replaced simultaneously with $x$ (there are $2\cdot (m+1)$ cases) or $y$ (there are $2\cdot (p+m)$ cases). Altogether, there exists a constant $c$ such that $c\cdot (n^2\cdot k\cdot (p+m))$ is an upper bound to the number of operations to construct the starting set for $\Gcal^*$-unifiable sequents $\mathcal{U}$.
\end{proof}
The starting set for $\Gcal^*$-unifiable sequents suffices to find 
{\em balanced solutions}. 

\begin{definition}[Balanced solution]
\label{def:Balanced solution}
Let 
\begin{align*}
& S(X):=F[\tuples{x}\backslash U_1],\bigvee\limits_{i=1}^pX\alpha t_i\to 
\bigwedge\limits_{j=1}^mXr_j\beta_j\seq G[\tuples{y}\backslash U_2]
\end{align*}
be an \seHs\!\!, $\mathcal{C}$ a finite set of sets of 
literals not containing the variables $\alpha,\beta_1,\ldots,\beta_m$, and $\hat{E}:=\lambda xy.\DNF (\mathcal{C})$ such that 
\begin{align*}
& S(\hat{E}):=F[\tuples{x}\backslash U_1],\bigvee\limits_{i=1}^p\hat{E} 
\alpha t_i\to\bigwedge\limits_{j=1}^m\hat{E}r_j\beta_j\seq 
G[\tuples{y}\backslash U_2]
\end{align*}
is a tautology. Let $\psi$ be a maximal \textbf{G3c}-derivation of $S(\hat{E})$. 
We say $\mathcal{C}$ is a {\em balanced solution} if in all axioms of 
$S(\hat{E})$ at least 
one of the active formulas is not an ancestor of $\hat{E}$ in $\psi$. 
\end{definition}

A balanced solution does not contain interactive literals (not to be confused with 
interacting literals) as described in Definition \ref{def:Set of possible sets of clauses} by $T_3$ and in Definition \ref{def:solution set} by $T_3'$. 

\begin{theorem}
\label{the:balanced solution implies G*-unifiable solution}
Let $S$ be $\forall\tuples{x}.F\seq\exists\tuples{y}.G$, $\mathcal{G}$ be a schematic 
$\Pi_2$-grammar, and 
\begin{align*}
& S(X):=F[\tuples{x}\backslash U_1],\bigvee\limits_{i=1}^pX\alpha t_i\to 
\bigwedge\limits_{j=1}^mXr_j\beta_j\seq G[\tuples{y}\backslash U_2]
\end{align*}
be an \seHs for $S$ and $\mathcal{G}$. Assume that $S(X)$ has a balanced solution 
$\mathcal{C}$. Then the set of solution candidates $Sol(\mathcal{U})$ (defined as in Definition \ref{def:solution set}) is not empty where $\mathcal{U}$ is the 
starting set for the $\Gcal^*$-unifiable sequent $R$ as in Definition \ref{def:starting set for G*-unifiable sequents}. 
\end{theorem}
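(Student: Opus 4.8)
The plan is to reduce the claim to partial completeness (Theorem~\ref{the:Partial completeness}): it suffices to exhibit one clause set $\mathcal{C}^{*}$ with $\mathcal{C}^{*}\subseteq\mathcal{U}$ that is a solution, since then $\mathcal{C}^{*}\in Sol(\mathcal{U})$ and hence $Sol(\mathcal{U})\neq\emptyset$. Because $\mathcal{U}$ consists of \emph{all} subsets of $\bigcup_{S,S'}\text{MGUL}(S,S')$ (Definition~\ref{def:starting set for G*-unifiable sequents}), the inclusion $\mathcal{C}^{*}\subseteq\mathcal{U}$ is equivalent to the statement that every literal occurring in $\mathcal{C}^{*}$ is a $\Gcal^{*}$-unified literal of some unification candidate. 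So the whole problem becomes: read the closures of the given balanced solution $\mathcal{C}$ off the derivation of $S(\Ehat)$, and show that the literals responsible for them are $\Gcal^{*}$-unified literals.

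First I would fix $\Ehat=\lambda xy.\DNF(\mathcal{C})$ and analyse a maximal \textbf{G3c}-derivation $\psi$ of $S(\Ehat)$, which splits through the $\to\colon$l on the cut formula into the two families of sequents \eqref{equ:sequent beta} and \eqref{equ:sequent alpha} over the leaves $\text{DNTA}(\psi)$ of the reduced representation $R$. Since $\mathcal{C}$ is \emph{balanced} (Definition~\ref{def:Balanced solution}), in every axiom at least one active formula is not an ancestor of $\Ehat$; as every leaf in $\text{DNTA}(\psi)$ is non-tautological, the partner active formula must be an instance of $\Ehat$. Thus every closure is an interaction of a cut literal with an end-sequent literal, never a cut--cut interaction. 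This is exactly the discarding of $T_3$ (Definition~\ref{def:Set of possible sets of clauses}) and $T'_3$ (Definition~\ref{def:solution set}), so every closure is witnessed by $T_1/T_2$ on the $\beta$-side and by $T'_1/T'_2$ on the $\alpha$-side, i.e.\ always against $A(S),B(S)$ or $N(S)$ — precisely the components paired in $\UC(S,S')$.

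Next I would extract, leaf by leaf, both interactions of a single solution literal. On the $\alpha$-side, closing \eqref{equ:sequent alpha} at a leaf $S$ forces, via $T'_2$ and the allowed-clause set $I(S)$ (Definition~\ref{def:allowed clauses}), a clause all of whose relevant literals $M$ satisfy $M[x\backslash\alpha,y\backslash t_i]\in A(S)$ for a common $i$; the derivation $d$ built from $\alpha\to x$ and $t_i(\alpha)\to y$ in $\Gcal^{*}$ (Definition~\ref{def:type-0-grammar}) then gives $L|d=M$ for $L\in A(S)$. Feeding this \emph{same} clause into the $m$-tuple of Definition~\ref{def:Set of possible sets of clauses} and closing \eqref{equ:sequent beta} at some leaf $S'$ forces, via $T_2$, a literal $M$ of that clause with $M[x\backslash r_j,y\backslash\beta_j]\in\overline{B(S')}$; the derivation $b$ built from $r_j\to x$ and $\beta_j\to y$ then gives $\overline{Q|b}=M$ for $Q\in B(S')$. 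For such an $M$ the pair $(L,Q)\in\UC(S,S')$ is $\Gcal^{*}$-unifiable with $\Gcal^{*}$-unified literal $M$, so $M\in\text{MGUL}(S,S')$; the neutral cases $T_1/T'_1$ are handled identically with $N(S),N(S')$. This already shows that $R$ is $\Gcal^{*}$-unifiable and produces a stock of unified literals.

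The main obstacle is the coordination step $M=M'$: a priori a literal may interact only in the $\beta$-direction while its $\alpha$-branch is covered by a \emph{different}, fully mapping clause, so it need not be $\Gcal^{*}$-unified, and then $\mathcal{C}$ itself may fail $\mathcal{C}\subseteq\mathcal{U}$. I would resolve this by taking, instead of $\mathcal{C}$, the set $\mathcal{C}^{*}$ assembled from the $\Gcal^{*}$-unified literals produced above — grouping per leaf the literals that simultaneously close an $\alpha$-branch and a $\beta$-branch — and then verifying that $\mathcal{C}^{*}$ still satisfies $T$ and $T'$ at every leaf, i.e.\ that passing from each $\alpha$-closing clause to its unified sub-literals preserves both closures; this is the part that genuinely uses the balanced hypothesis and the allowed-clause machinery of $I(S)$. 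A secondary technical point is that $d$ must be read as a \emph{positional} type-0 derivation: since $t_i(\alpha)$ contains $\alpha$, the rule $t_i(\alpha)\to y$ must be applied before $\alpha\to x$, which is exactly why $\Gcal^{*}$ is defined as a type-0 grammar with derivations applied at fixed positions. Once $\mathcal{C}^{*}\subseteq\mathcal{U}$ is established and $\mathcal{C}^{*}$ is seen to be a solution, Theorem~\ref{the:Partial completeness} yields $\mathcal{C}^{*}\in Sol(\mathcal{U})$ and hence $Sol(\mathcal{U})\neq\emptyset$.
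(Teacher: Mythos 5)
Your overall strategy matches the paper's in outline: produce a solution whose clauses lie in $\mathcal{U}$ and invoke Theorem~\ref{the:Partial completeness}, after using balancedness to rule out cut--cut closures so that every axiom above a non-tautological leaf pairs a cut literal with a literal of $A(S)$, $B(S')$ or the neutral parts. But there is a genuine gap exactly at the point you yourself flag as ``the main obstacle'': you never prove that the reassembled set $\mathcal{C}^{*}$ is a solution. Saying that one should ``group per leaf the literals that simultaneously close an $\alpha$-branch and a $\beta$-branch and then verify that $\mathcal{C}^{*}$ still satisfies $T$ and $T'$'' restates the problem rather than solving it. Two concrete difficulties remain open. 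First, the coordination problem itself: from $T'_2$ you obtain a clause $C$ and a set $I\in I(S)$ absorbing an adversarial $p$-tuple of literals of $C$, and from $T_2$ you obtain \emph{some} literal of \emph{some} component of an adversarial $m$-tuple with a dual in $\overline{B(S')}$; nothing in your argument forces these to be the same literal $M$, so the stock of $\Gcal^{*}$-unified literals you collect may fail to cover some leaf. Second, even granting a sufficient supply of unified literals, membership in $Sol$ depends essentially on how they are partitioned into clauses ($T'_2$ and $T'_3$ quantify over tuples drawn from a single clause, $T$ over $m$-tuples of clauses), and your construction does not specify this partition, let alone verify the conditions for it.

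The paper closes this gap by deletion rather than reassembly (Corollary~\ref{cor:naive solution} combined with Lemma~\ref{lem:N to U}): first $\mathcal{C}\in Sol(\mathcal{N})$ for the naive starting set, by partial completeness; then one repeatedly removes any literal $L$ with $\{L\}\notin\mathcal{U}$ --- deleting $L$ from its clause when $L\in\NA (S)$, deleting the whole clause when $L\in\NB (S)$ --- and shows each removal preserves validity of both sequent families. The decisive step is a contrapositive that resolves your coordination problem: if removing $L$ broke, say, the $\beta$-side at some leaf $S'$, then $\overline{(\lambda xy.L)r_i\beta_i}$ would have to lie in $B(S')\cup N(S')$, which together with $(\lambda xy.L)\alpha t_j\in A(S)\cup N(S)$ exhibits a $\Gcal^{*}$-unifiable pair with unified literal $L$, contradicting $\{L\}\notin\mathcal{U}$; balancedness is needed precisely to exclude the alternative that the missing dual is another cut literal. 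Since deletions only shrink the clause structure of $\mathcal{C}$, the partition problem never arises. To salvage your bottom-up construction you would have to supply an argument of exactly this contrapositive shape; as written, the proposal asserts the conclusion of that argument without giving it.
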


To prove the theorem we show the same result for the naive starting set instead 
of the starting set for $\Gcal^*$-unifiable sequents $\mathcal{U}$ and conclude that 
$Sol(\mathcal{U})$ is also not empty. 

\begin{definition}[Naive starting set]
\label{def:naive starting set}
Let $R$ be a reduced representation and $\psi$ a maximal \textbf{G3c}-derivation of 
$R$. We define for each leaf $S\in\text{DNTA}(\psi )$ of the form $A(S)\circ B(S)\circ 
N(S)$ the sets 
\begin{align*}
\NA (S) = &\; \{L \mid \ex j \in \mathbb{N}_p. (\lambda x y.L)\alpha t_j \in A(S) 
\cup N(S),V(L) \subseteq \{x,y\} \}.\\
\NB (S) = &\; \{L \mid \ex j \in \mathbb{N}_m. (\lambda x y.L)r_j \beta_j \in 
\overline{B(S)}\cup\overline{N(S)} 
,V(L) \subseteq \{x,y\} \}.
\end{align*}
\begin{align*}
\mathcal{N}:= &\; \{ N\; |\; N\subseteq\bigcup\limits_{S\in\text{DNTA}(\psi )} 
\NA (S)\cup \NB (S)\}
\end{align*}
is then called the {\em naive starting set}. 
\end{definition}

\begin{corollary}
\label{cor:naive solution}
Let $S$ be $\forall\tuples{x}.F\seq\exists\tuples{y}.G$, $\mathcal{G}$ be a schematic 
$\Pi_2$-grammar, and 
\begin{align*}
S(X):= &\; F[\tuples{x}\backslash U_1],\bigvee\limits_{i=1}^pX\alpha t_i\to 
\bigwedge\limits_{j=1}^mXr_j\beta_j\seq G[\tuples{y}\backslash U_2]
\end{align*}
be an \seHs for $S$ and $\mathcal{G}$. Assume there is a balanced solution 
$\mathcal{C}$. Then $\mathcal{C}\in Sol(\mathcal{N})$ where $\mathcal{N}$ is the naive 
starting set and $Sol()$ is defined as in Definition \ref{def:solution set}. 
\end{corollary}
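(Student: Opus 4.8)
The plan is to reduce the statement to the partial completeness result (Theorem~\ref{the:Partial completeness}). Since $\mathcal{C}$ is a balanced solution, $S(\Ehat)$ is by Definition~\ref{def:Balanced solution} a tautology, so $\mathcal{C}$ is a solution of the \seHs\!\!. Hence, instantiating Theorem~\ref{the:Partial completeness} with the starting set $\mathcal{A}=\mathcal{N}$, it suffices to establish the single inclusion $\mathcal{C}\subseteq\mathcal{N}$; membership in $Cl(\mathcal{N})$ and the conditions $T'$ then follow automatically. Unfolding Definition~\ref{def:naive starting set}, the inclusion $\mathcal{C}\subseteq\mathcal{N}$ means precisely that every literal occurring in a clause of $\mathcal{C}$ lies in $\bigcup_{S\in\text{DNTA}(\psi)}\NA(S)\cup\NB(S)$, so the whole argument comes down to showing that no literal of the solution escapes the naive starting set.

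To prove this inclusion I would exploit the defining feature of a balanced solution: by the remark following Definition~\ref{def:Balanced solution}, no axiom of $S(\Ehat)$ is closed by two ancestors of $\Ehat$, i.e.\ the conditions $T_3$ and $T'_3$ are never used. Consequently every closure pairs an instance of an $\Ehat$-literal with a literal coming from $F[\bar{x}\backslash U_1]$ or $G[\bar{y}\backslash U_2]$, that is, with a literal of $R$. I would then analyse the two halves of $S(\Ehat)$ separately, as in the proof of Theorem~\ref{theo.soundness}. On the $\beta$-side \eqref{equ:sequent beta} the conjunction $\bigwedge_{j}\Ehat r_j\beta_j$ is decomposed on the left, so selecting a clause $C$ at a position $j$ forces all of its literals $L[x\backslash r_j,y\backslash\beta_j]$ into the leaf, and balanced closure yields a literal in $\overline{B(S)}\cup\overline{N(S)}$, placing it in $\NB(S)$. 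On the $\alpha$-side \eqref{equ:sequent alpha} the disjunction of clauses sits on the right, and here the absence of $T'_3$ is decisive: it rules out any complementary coverage of two $\Ehat$-instances, so the leaf $A(S),B(S),N(S)\vdash\Ehat\alpha t_1,\ldots,\Ehat\alpha t_p$ can only be valid because some clause $C$ is \emph{fully} matched at some instance $t_i$, i.e.\ $L[x\backslash\alpha,y\backslash t_i]\in A(S)\cup N(S)$ for every $L\in C$. Each such $L$ then lies in $\NA(S)$.

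Putting the two analyses together, for every non-tautological leaf $S\in\text{DNTA}(\psi)$ there is a clause fully matched on the $\alpha$-side, while on the $\beta$-side every clause closes its disjunctive branch, and this is what I would use to collect the literals of $\mathcal{C}$ inside $\bigcup_{S}\NA(S)\cup\NB(S)$. Once $\mathcal{C}\subseteq\mathcal{N}$ is secured, Theorem~\ref{the:Partial completeness} immediately gives $\mathcal{C}\in Sol(\mathcal{N})$, which is the claim.

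I expect the main obstacle to be exactly the step ``every literal of every clause lies in the union''. The $\alpha$-side only forces \emph{some} clause to be fully matched at each leaf, and the $\beta$-side only forces \emph{some} literal per clause-branch to match, so when $\mathcal{C}$ contains several clauses a given literal could a priori be carried by its neighbouring clauses and never be made active against an $R$-literal. The careful point is that the balanced condition (no $T_3$, no $T'_3$) simultaneously forbids internal closures and the complementary-coverage phenomenon on the $\alpha$-side; it is precisely this rigidity that must be leveraged---most delicately in the multi-clause case---to guarantee that every individual literal of $\mathcal{C}$ is genuinely matched by an $R$-literal and is therefore collected by $\NA$ or $\NB$.
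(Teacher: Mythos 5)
Your proposal follows essentially the same route as the paper's own proof: reduce the corollary to Theorem~\ref{the:Partial completeness} by establishing the inclusion $\mathcal{C}\subseteq\mathcal{N}$, i.e.\ that every literal of every clause of the balanced solution lies in $\bigcup_{S\in\text{DNTA}(\psi)}\NA(S)\cup\NB(S)$. The step you flag as the main obstacle is exactly the step the paper disposes of by bare assertion --- its proof states that Definition~\ref{def:Balanced solution} ``implies'' this membership, adding only the remark that a literal $L$ not containing $x,y$ can still be read as $\lambda xy.L$ and placed in $N(S)$ or $\overline{N(S)}$ --- so your two-sided analysis of the sequents \eqref{equ:sequent beta} and \eqref{equ:sequent alpha}, and your explicit worry about unmatched literals in the multi-clause case, is if anything more detailed than what the paper itself provides.
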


\begin{proof}
The Definition \ref{def:Balanced solution} of a balanced solution implies that every 
literal $L$ of the balanced solution $\mathcal{C}$ is either an element of 
$N(S)\cup\overline{N(S)}$ for a leaf $S\in\text{DNTA}(\psi )$ of the maximal 
\textbf{G3c}-derivation $\psi$ of the \seHs or it is an element of the sets $\NA (S)$ and $\NB (S)$. For a literal $L$ of $N(S)\cup\overline{N(S)}$, we can define $\lambda x,y.L$ even though $L$ is variable-free. Hence, $L$ is an element of $\NA (S)$ or $\NB (S)$. By Theorem \ref{the:Partial completeness}, $\mathcal{C}\in Sol(\mathcal{N})$. 
\end{proof}

Given a solution which is a subset of the naive starting 
set, we define a new solution that is a subset of the starting set for $\Gcal^*$-unifiable sequents. 

\begin{lemma}
\label{lem:N to U}
Assume that $Sol(\mathcal{N})$ contain a balanced solution for a given \mbox{\seHs\!\!,} 
for a maximal \textbf{G3c}-derivation $\psi$ of its reduced representation $R$, and for 
the naive starting set $\mathcal{N}$. 
Let $\Gcal$ be the corresponding schematic $\Pi_2$-grammar. 
Then $Sol(\mathcal{U})\neq\emptyset$ for the starting set for $\Gcal^*$-unifiable 
sequents $\mathcal{U}$. 
\end{lemma}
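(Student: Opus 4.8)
The plan is to keep the balanced solution $\mathcal{C}\in Sol(\mathcal{N})$ only as a source of information and to \emph{build} from it a fresh clause set $\mathcal{C}'$ every literal of which is $\Gcal^*$-unified, so that $\mathcal{C}'\subseteq\mathcal{U}$; checking the conditions of Definition~\ref{def:solution set} for $\mathcal{C}'$ (equivalently, that $\mathcal{C}'$ is a genuine solution and applying Theorem~\ref{the:Partial completeness} with starting set $\mathcal{U}$) then gives $\mathcal{C}'\in Sol(\mathcal{U})$ and hence $Sol(\mathcal{U})\neq\emptyset$. The guiding observation is that the literal pool $\bigcup_{S,S'}\text{MGUL}(S,S')$ generating $\mathcal{U}$ is precisely the set of literals lying in some $\NA (S)$ \emph{and} in some $\NB (S')$: the productions $\Upsilon_2,\Upsilon_3$ of $\Gcal^*$ are exactly the reverses of the substitutions $[x\backslash\alpha,y\backslash t_j]$ and $[x\backslash r_{j'},y\backslash\beta_{j'}]$, so a literal $L$ is $\Gcal^*$-unifiable iff some $\alpha/t_j$-instance of $L$ occurs in $A(S)\cup N(S)$ and some $r_{j'}/\beta_{j'}$-instance of $L$ occurs dualized in $B(S')\cup N(S')$, i.e.\ iff $L\in\NA (S)\cap\NB (S')$ for suitable $S,S'$. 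By contrast, the naive pool generating $\mathcal{N}$ is the \emph{union} of the same two families, which is exactly why a literal of $\mathcal{C}$ may fail to be $\Gcal^*$-unified.

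First I would exploit that $\mathcal{C}$ is \emph{balanced}: by the remark after Definition~\ref{def:Balanced solution} the interactive conditions $T_3$ and $T_3'$ are never used, so every axiom of the two associated proofs closes against an \emph{original} atom drawn from some $A(S),B(S),N(S)$. Feeding this into the conditions of Definitions~\ref{def:Set of possible sets of clauses} and~\ref{def:solution set}, which hold for \emph{all} clause choices $\vec{C}$ since $\mathcal{C}\in Sol(\mathcal{N})$, and swapping the universal quantifier over $\vec{C}$ against the existential ones, I expect to extract two facts. On the $\alpha$-side (sequent \eqref{equ:sequent alpha}): for each leaf $S\in\text{DNTA}(\psi)$ there is a clause $C_S\in\mathcal{C}$ and a single instantiation index $i$ with $L[x\backslash\alpha,y\backslash t_i]\in A(S)\cup N(S)$ for all $L\in C_S$; thus $C_S\subseteq\NA (S)$ and its non-constant literals form an allowed clause in $I(S)$. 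On the $\beta$-side (sequent \eqref{equ:sequent beta}): for each leaf $S$ there is a copy index $i_S$ such that \emph{every} clause of $\mathcal{C}$ carries a literal clashing at $(r_{i_S},\beta_{i_S})$, hence a literal in $\NB (S)$. Intersecting these at a fixed $S$, the alpha-full clause $C_S$ itself must contain a literal $L^\ast\in\NA (S)\cap\NB (S)$, which by the observation above is $\Gcal^*$-unified, i.e.\ $L^\ast\in\text{MGUL}(S,S)$.

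I then set $\mathcal{C}':=\{\,C_S\cap\bigcup_{S'}\NB (S')\mid S\in\text{DNTA}(\psi)\,\}$, discarding from each alpha-full clause the purely alpha literals in $\NA\setminus\NB$ and keeping those that also do $\beta$-work. Each surviving clause is nonempty, as it retains $L^\ast$, and consists only of $\Gcal^*$-unified literals, so $\mathcal{C}'\subseteq\mathcal{U}$. That $\mathcal{C}'$ remains a solution is checked leafwise: on the $\alpha$-side $C_S\cap\bigcup_{S'}\NB (S')$ is a nonempty subset of the allowed clause $C_S$, hence again allowed by Proposition~\ref{pro.allowedclauses}, so one of $T_1',T_2'$ still fires at $S$; on the $\beta$-side the literal by which each clause clashed at $(r_{i_S},\beta_{i_S})$ lies in $\NB (S)\subseteq\bigcup_{S'}\NB (S')$ and therefore survives the intersection, so $T_1$ or $T_2$ still fires for every clause choice. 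Hence $\mathcal{C}'$ meets Definition~\ref{def:solution set}, i.e.\ $\mathcal{C}'\in Sol(\mathcal{U})$.

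The main obstacle is justifying that deleting the $\NA\setminus\NB$ literals harms neither sequent, and this is exactly where balancedness is indispensable: a deleted literal is not in $\NB$, so—interactive closures $T_3$ being forbidden—it can never by itself close a $\beta$-branch, while on the $\alpha$-side deletion only shrinks an already allowed clause. A secondary, more bureaucratic difficulty is separating literals matched into $A(S)$ (handled through $I(S)$ and $T_2'$) from constant literals matched into $N(S)$ (handled through $T_1'$); these must be tracked apart both when arguing that $C_S$ is allowed and when confirming that the reduced clause still triggers one of $T_1',T_2'$ for every pick. I expect the quantifier-swapping extraction of the two structural facts from a balanced solution to be the most delicate part of the argument.
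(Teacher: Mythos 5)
Your proof is correct in substance, but it takes a genuinely different route from the paper's. The paper argues by an \emph{iterative erasure} procedure: starting from the balanced solution $\mathcal{C}\in Sol(\mathcal{N})$, it repeatedly picks a literal $L$ of $\mathcal{C}$ that does not occur in the pool generating $\mathcal{U}$; if $L$ does $\alpha$-work (lies in some $\NA (S)$) it deletes $L$ from its clause, and if $L$ does $\beta$-work it deletes the entire clause containing $L$. At each step preservation of solutionhood is shown \emph{by contradiction}: if the deletion broke sequent \eqref{equ:sequent beta} or \eqref{equ:sequent alpha}, the offending leaf would exhibit precisely a $\Gcal^*$-unifiable pair whose unified literal is $L$, contradicting $\{L\}\notin\mathcal{U}$; balancedness enters there to exclude the possibility that $L$ clashed against another solution literal rather than against an original atom, and the lemma follows after exhaustive application. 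You instead give a \emph{one-shot construction}: you extract from balancedness (no $T_3$/$T_3'$) and the $Sol(\mathcal{N})$-conditions, by a quantifier swap, two uniform structural facts (per leaf an $\alpha$-full clause $C_S$ with one common instantiation index, and per leaf a copy index $i_S$ at which \emph{every} clause clashes), intersect them to get the clause set $\mathcal{C}'=\{C_S\cap\bigcup_{S'}\NB (S')\}$, and verify it leafwise. Both proofs rest on the same two pillars --- balancedness rules out cut-versus-cut closures, and a literal doing both $\alpha$- and $\beta$-work is $\Gcal^*$-unified --- but the decompositions differ: the paper avoids any quantifier-swap extraction at the price of an iteration/termination argument and a per-step contradiction; you avoid iteration entirely and exhibit the final solution explicitly (clauses indexed by leaves), at the price of carrying the extraction and the re-verification, which you correctly identify as the delicate steps. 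Your extractions do go through: negating them and choosing the adversarial tuple $\vec{C}$ violates $T_1'\vee T_2'$ (resp.\ $T_1\vee T_2$), which balancedness guarantees.

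Two caveats. First, your guiding observation that the pool $\bigcup_{S,S'}\text{MGUL}(S,S')$ \emph{equals} the set of literals lying in some $\NA (S)$ and some $\NB (S')$ is only needed, and only clearly true, in the direction $\NA (S)\cap\NB (S')\subseteq\text{MGUL}(S,S')$; grammar derivations rewrite occurrences position by position, so the converse inclusion is doubtful --- harmless here, since you use only the stated direction, but it should be claimed one-directionally. Second, the purely syntactic check of $\mathcal{C}'$ against $T_1'/T_2'$ has an index-binding defect in the degenerate case of a literal containing $y$ that lands in $N(S)$ via a ground $t_i$: $T_1'$ instantiates with the $t_i$ of the \emph{chosen} position, not of your witness index. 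Your parenthetical alternative --- verify semantically that $\mathcal{C}'$ solves sequents \eqref{equ:sequent beta} and \eqref{equ:sequent alpha} (at position $i_S$ the literal chosen from $C_S\cap\bigcup_{S'}\NB (S')$ clashes by construction) and then invoke Theorem~\ref{the:Partial completeness} with starting set $\mathcal{U}$ --- sidesteps this entirely and is the cleaner way to finish; it matches the level of rigor at which the paper itself handles these conditions.
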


\begin{proof}
Let $\mathcal{C}\in Sol(\mathcal{N})$ be a balanced solution. We choose an arbitrary literal $L$ of $\mathcal{C}$ that is not an element of any set of literals in $\mathcal{U}$. If there are none, all literals of $\mathcal{C}$ occur in $\mathcal{U}$. Since we consider in $\mathcal{U}$ all possible sets with a finite number of literals, $\mathcal{C}$ is an element of $\mathcal{U}$, $Sol(\mathcal{U})\neq\emptyset ,$ and we are done. Otherwise we distinguish between two cases 
\begin{align*}
L\in &\; \bigcup\limits_{S\in\text{DNTA}(\psi )}\NA (S) 
\text{ and }\{ L\}\notin\mathcal{U} &\text{(I)} \\ 
L\in &\; \bigcup\limits_{S\in\text{DNTA}(\psi )}\NB (S) 
\text{ and }\{ L\}\notin\mathcal{U}. & \text{(II)}
\end{align*}
First we consider (I). In this case there is a leaf $S\in 
\text{DNTA}(\psi )$ and there is a $j\in\{ 1,\ldots ,p\}$ such that 
\begin{align*}
(\lambda xy.L)\alpha t_j\in A(S)\cup N(S). 
\end{align*}
By $\{ L\}\!\notin\mathcal{U}$ there is no leaf $S'$ such that $Q\!\in B(S')
\cup N(S')$ where $((\lambda xy.L)\alpha t_j,Q)$ is $\Gcal^*$-unifiable 
with the $\mathcal{G}^*$-unifiable literal $L$. If $C \in \mathcal{C}$ and $C=\{ L\}$ is a unit clause then the sequent 
\begin{align*}
((\lambda xy.L)r_1\beta_1,\ldots ,(\lambda xy.L)r_m\beta_m) \circ S &\; 
\end{align*}
is not a tautology and $\mathcal{C}$ is not a solution. Thus if $C \in \mathcal{C}$ and $C$ contains $L$ it cannot be a unit clause. So we define the new clause $C' = C \setminus \{L\}$ and we know that $C'$ is not empty. A maximal \textbf{G3c}-derivation of the sequent $ J\circ (\;\seq \mathcal{C}')$\footnote{For a set of clauses $\mathcal{C}$ and a sequent $J$, we abbreviate $ J\circ (\;\seq \DNF (\mathcal{C}))$ with $ J\circ (\;\seq \mathcal{C})$} where $\mathcal{C}' = (\mathcal{C} \setminus \{C\}) \cup \{C'\}$ and $J$ is an arbitrary element of DNTA$(\psi )$ contains only axioms that also appear in $J\circ (\; \seq \mathcal{C}).$ Hence, the new sequent is a tautology, too. 

Now we consider the sequent $(\mathcal{C'}\seq )\circ J$ for an arbitrary $J\in{DNTA}(\psi )$. If it were not a tautology there would be a leaf $S'\in\text{DNTA}(\psi )$ and an $i\in\mathbb{N}_m$ such that 
\begin{align*}
\overline{(\lambda xy.L)r_i\beta_i}\in B(S')\cup N(S'). 
\end{align*}
(Note that we need here, that the given solution is a balanced 
solution. Otherwise we would have to consider the case that $(\lambda xy.L)r_i\beta_i$ 
appears in $\mathcal{C}'$, too). 
But then there exists the $\Gcal^*$-unifiable pair 
\begin{align*}
((\lambda xy.L)\alpha t_j, 
\overline{(\lambda xy.L)r_i\beta_i}) 
\end{align*}
and $L \in \mathcal{U}$ contradicting our assumption; we conclude that $(\mathcal{C'}\seq )\circ J$ is a tautology.

With this procedure we can erase all literals of $\mathcal{C}$ that are elements of 
\begin{align*}
\bigcup\limits_{S\in\text{DNTA}(\psi )}\NA (S) 
\end{align*}
and do not appear in a clause of $\mathcal{U}$. 

\smallskip

Now let us consider (II). In this case there is a leaf $S\in\text{DNTA}(\psi )$ and there is a $j\in\mathbb{N}_m$ such that $(\lambda xy.L)r_j\beta_j\in\overline{B(S)}\cup\overline{N(S)}$. Given that $\{ L\}\notin\mathcal{U}$ there is no leaf $S'$ such that $Q\in A(S')\cup N(S')$ where $(Q,\overline{(\lambda xy.L)r_j\beta_j} )$ is $\mathcal{G}^*$-unifiable with the $\mathcal{G}^*$-unifiable literal $L$. Let $C$ be a clause containing $L$. Assume $C$ is the only clause then $\mathcal{C}$ is not a solution because $S\circ (\seq \mathcal{C})$ contains the branch
\begin{align*}
S\circ ( \;\seq (\lambda xy.L)\alpha t_1,\ldots ,(\lambda xy.L)\alpha t_p)
\end{align*}
which is not a tautology. Therefore $\mathcal{C}$ does not only contain the clause $C$ and we can define $\mathcal{C}' = \mathcal{C} \setminus \{C\}$; moreover we know that $\mathcal{C} \neq \emptyset$. Since $\mathcal{C}$ contains more than one clause $\mathcal{C}'$ is not empty. A maximal \textbf{G3c}-derivation of the sequent $ J\circ (\mathcal{C}'\seq )$ where $J$ is an arbitrary element of DNTA$(\psi )$ contains only axioms that also appear in $J\circ (\mathcal{C}\seq ).$ Hence, the new sequent is a tautology, too. 

Now we consider the sequent $J\circ (\;\seq \mathcal{C}')$ for an arbitrary $J\in\text{DNTA}(\psi )$. If it were not a tautology there would 
be a leaf $S'\in\text{DNTA}(\psi )$ and an $i\in\mathbb{N}_p$ such that 
$(\lambda xy.L)\alpha t_i\in 
A(S')\cup N(S')$. But then there exists the $\Gcal^*$-unifiable pair 
\begin{align*}
((\lambda xy.L)\alpha t_j, 
\overline{(\lambda xy.L)r_i\beta_i}). 
\end{align*}
So we obtain $L \in \mathcal{U}$ contradicting our assumption; again we conclude that   $J\circ (\; \seq \mathcal{C'})$ is a tautology. 

With this procedure we can erase all literals of $\mathcal{C}$ that are elements of 
\begin{align*}
\bigcup\limits_{S\in\text{DNTA}(\psi )}\NB (S) 
\end{align*}
and do not appear in a clause of $\mathcal{U}$.

\smallskip

By an exhaustive application of these two methods we 
get a solution that is a subset of $\mathcal{U}$. 
\end{proof}

\begin{proof}[Proof of Theorem \ref{the:balanced solution implies G*-unifiable solution}]
The proof can be obtained by combining Corollary \ref{cor:naive solution} and Lemma \ref{lem:N to U}. 
\end{proof}

\section{Generalizing the Cut Formulas}
\label{sec.tuples}
In the previous sections we considered (for the sake of simplicity)  only cut formulas of the form $\forall x\exists y.A(x,y)$ for single variables $x,y$. The purpose of this section is to generalize the approach to the construction of cut formulas of the form $\forall \vec{x}\exists \vec{y}.A(\vec{x},\vec{y})$ for variable tuples $\vec{x},\vec{y}$. Most definitions and proofs remain almost unchanged by replacing terms by tuples of terms. We indicate the changes in the most important definitions  and theorems and reformulate the crucial definitions of the paper.

We denote by $\lengthOfTuple{\tuples{t}}$ the arity of $\tuples{t}$. For simplicity we try to keep the arity implicit, i.e.\ we avoid to use to many indexes. Hence, by describing a substitution with $[\tuples{x}\backslash\tuples{t}]$, we assume that $\tuples{x}$ is a ordered set of variables of the same arity as $\tuples{t}$. The substitution is then given by $[x_1\backslash t_1],\ldots ,[x_{\lengthOfTuple{\tuples{x}}}\backslash t_{\lengthOfTuple{\tuples{t}}}]$. To extend the notion of grammars, we have to allow production rules to handle tuples. A production rule of the form $\tuples{\alpha }\to\tuples{t}$ applied to a term $s$ is the replacement of the non-terminals $\alpha_1,\ldots ,\alpha_{\lengthOfTuple{\tuples{\alpha}}}$ according to $[\tuples{\alpha}\backslash\tuples{t}]$, i.e.\ we substitute $t_i$ for $\alpha_i$ at a designated position. We say the tuples of non-terminals $\tuples{\alpha},\tuples{\beta}$ (not necessarily of the same length) are ordered with respect to $<$, i.e.\ $\tuples{\alpha}<\tuples{\beta}$, if for all production rules $\tuples{\alpha}\to\tuples{t}$ the terms of the tuples $\tuples{t}$ do not contain any non-terminal of $\tuples{\beta}$. The extended definition of schematic $\Pi_2$-grammars reads as follows:
\begin{definition}[Schematic $\Pi_2$-grammar with tuples]\label{def.schem-pi2-grammar tuples}
Let $\Gcal = \langle\tau ,N,\Sigma ,Pr\rangle$ be an acyclic tree grammar and $N = 
\{\tau ,\tuples{\alpha} ,\tuples{\beta}_1,\ldots ,\tuples{\beta}_m\}$. Let the variables (tuples of variables) be ordered according to 
\begin{align*}
\tuples{\beta}_1<\ldots  <\tuples{\beta}_m<\tuples{\alpha}<\tau 
\end{align*} 
and $\lengthOfTuple{\tuples{\beta}_i}=\lengthOfTuple{\tuples{\beta}_j}$ for $i,j\in\mathbb{N}_m$.
We call $\Gcal$ a {\em schematic $\Pi_2$-grammar} if the production rules are of the following form: 
\begin{align*}
\tau \to &\; s_1|\ldots |s_c|w_1|\ldots |w_d\ \text{ with }V(s_i)\subseteq V(\tuples{\alpha}) 
\text{ for }1\le i\le c
\text{ and }\\
&\; V(w_j)\subseteq V(\tuples{\beta}_1)\cup\ldots\cup V(\tuples{\beta}_m)\text{ for } 1\le j\le d \\ 
\tuples{\alpha} \to &\;  \tuples{r}_1|\ldots |\tuples{r}_m \\
&\; \text{ with }V(\tuples{r}_j)\subseteq V(\tuples{\beta}_1)\cup\ldots\cup V(\tuples{\beta}_j)\text{ for }2\le j\le m\text{ and }V(\tuples{r}_1)=\emptyset \\
\tuples{\beta}_j \to &\; \tuples{t}_1\tuples{r}_j|\ldots |\tuples{t}_p\tuples{r}_j \text{ for }1\le j\le m.
\end{align*}
By $\tuples{t}\tuples{s}$ we denote the tuple of terms $(\tuples{t}|_1[\tuples{\alpha}\backslash\tuples{s}],\ldots,\tuples{t}|_n[\tuples{\alpha}\backslash\tuples{s}])$ where $n=\lengthOfTuple{\tuples{t}}$ and $\tuples{t}|_k$ for $1\le k\le n$ being a term (possibly) containing variables of $\tuples{\alpha}$. We call $m$ the \mbox{\emph{\AllMul}\!\!,} $p$ the \mbox{\emph{\ExMul}}and denote $\lengthOfTuple{\tuples{\alpha}}$ by $q_\forall$ and $\lengthOfTuple{\tuples{\beta}_1}$ by $q_\exists$.
\end{definition}
\begin{example}
\label{exa:schematifPi2grammarTuples}
Let $c,d,e$ be constants, $f,g,h$ unary functions, $h_F$ a function with arity six, $h_G$ a function with arity four, and $\tuples{\alpha}=(\alpha_1,\alpha_2),\tuples{\beta}=(\beta_1,\beta_2),\tuples{\gamma}=(\gamma_1,\gamma_2)$. The grammar $\mathcal{G}=\langle \tau ,N,\Sigma ,\Pr\rangle$ with $\tau$ being the designated starting symbol, $N=\{\tau ,\tuples{\alpha},\tuples{\beta},\tuples{\gamma}\}$, $\lengthOfTuple{\tuples{\alpha}}=\lengthOfTuple{\tuples{\beta}}=\lengthOfTuple{\tuples{\gamma}}=2$, and $\Pr =$
\begin{align*}
\{\; \tau \to &\; h_F(\alpha_1,\alpha_1,\alpha_2,\alpha_2,\alpha_2,\alpha_2)|h_G(\beta_1,\beta_2,\gamma_1,\gamma_2) \\
\tuples{\alpha}\to &\; (c,d)|(e,e) \\
\tuples{\gamma}\to &\; (fe,ge)|(fe,he) \\
\tuples{\beta}\to &\; (fc,fd)|(fc,hd) \;\;\}
\end{align*}
with 
\begin{align*}
\tuples{\beta}<\tuples{\gamma}<\tuples{\alpha}<\tau
\end{align*}
is a schematic $\Pi_2$-grammar with tuples. The language consists of the words
\begin{align*}
h_F(c,c,d,d,d,d),h_F(e,e,e,e,e,e) \\
h_G(fc,fd,fe,ge),h_G(fc,fd,fe,he) \\
h_G(fc,hd,fe,ge),h_G(fc,hd,fe,he).
\end{align*}
A corresponding extended Herbrand sequent with tuples can be extracted from the proof 
\begin{center}
\begin{fCenter}
\Axiom$ \fCenter\vdots $
\UnaryInf$\Gamma \;\fCenter\seq \Delta ,\forall x_1,x_2\exists y_1,y_2.C$
\Axiom$ \fCenter\vdots $
\UnaryInf$\forall x_1,x_2\exists y_1,y_2.C,\Gamma \;\fCenter\seq \Delta$
\LeftLabel{\textit{Cut}}
\BinaryInf$\Gamma \;\fCenter\seq \Delta$
\DisplayProof
\end{fCenter}
\end{center}
where
\begin{align*}
\Gamma := &\; \forall x,y.P(x,fx)\lor Q(y,gy)\lor Q(y,hy) \\
\Delta := &\; \exists u,v,w,x.(P(c,u)\lor Q(d,v))\land (P(e,w)\lor Q(e,x)) \\
C := &\; P(x_1,y_1)\lor Q(x_2,y_2),
\end{align*}
and
\begin{align*}
&P(\alpha_1,f\alpha_1)\lor Q(\alpha_2,g\alpha_2), P(\alpha_1,f\alpha_1)\lor Q(\alpha_2,h\alpha_2), \\
&P(c,\beta_1)\lor Q(d,\beta_2),P(e,\gamma_1)\lor Q(e,\gamma_2) 
\end{align*}
are all quantifier-free instantiations of the cut formula in the proof,
\begin{align*}
&P(\alpha_1,f\alpha_1)\lor Q(\alpha_2,g\alpha_2)\lor Q(\alpha_2,h\alpha_2), \\
&(P(c,\beta_1)\lor Q(d,\beta_2))\land (P(e,\gamma_1)\lor Q(e,\gamma_2)) 
\end{align*}
are all quantifier-free instantiations of the context $\Gamma\seq\Delta$ in the proof.
\end{example}
Then the schematic extended Herbrand sequent for $\Pi_2$-cuts with blocks of quantifiers $S(X)$ has to be solved, i.e.\ we have to find a substitution $C$ for $X$ that makes $S(C)$ a tautology.
\begin{definition}[Schematic extended Herbrand sequent with blocks of quantifiers] 
\label{def:Schematic extended Herbrand sequent tuples}
Let $S\colon$ $\forall\tuples{x}F\seq\exists\tuples{y}G$ be a provable sequent and $H_s(S)$ be a Herbrand term set of $S$. Let  $\Gcal\colon \langle\tau ,N,\Sigma ,\Pr\rangle$ be a schematic $\Pi_2$-grammar as in Definition \ref{def.schem-pi2-grammar tuples} with the non-terminals $N=\{\tau ,\tuples{\alpha} ,\tuples{\beta}_1, 
\ldots ,\tuples{\beta}_m\}$, the order 
\begin{align*}
\tuples{\beta}_1<\ldots  <\tuples{\beta}_m<\tuples{\alpha}<\tau .
\end{align*} 
and the production rules 
\begin{align*}
\tau \to &\; h_F\tuples{u}_1|\ldots |h_F\tuples{u}_c\ |  \  h_G\tuples{v}_1|\ldots |h_G\tuples{v}_d \text{ with }V(\tuples{u}_i)\subseteq\tuples{\alpha}\text{ for }1\le i\le c \\ 
&\; \text{ and }V(\tuples{v}_j)\subseteq\tuples{\beta}_1\cup\ldots\cup\tuples{\beta}_m\text{ for }1\le j\le d \\ 
\tuples{\alpha} \to &\;  \tuples{r}_1|\ldots |\tuples{r}_m \text{ with }V(\tuples{r}_j)\subseteq\tuples{\beta}_1\cup\ldots\cup\tuples{\beta}_j\text{ for }2\le j\le m\text{ and }V(\tuples{r}_1)=\emptyset \\
\tuples{\beta}_j \to &\; \tuples{t}_1\tuples{r}_j|\ldots |\tuples{t}_p\tuples{r}_j \text{ for }1\le j\le m.
\end{align*}
Let $L(\Gcal)$ be the language of $\Gcal$ generated only by rigid derivations with respect to all non-terminals, and 
$H_s(S) \subseteq L(\Gcal)$. Let $U_1:=\{\tuples{u}_1,\ldots ,\tuples{u}_c\}$ and $U_2:=\{\tuples{v}_1,\ldots ,\tuples{v}_d\}$. Then 
we call the sequent 
\begin{align*}
S(X)\colon F[\tuples{x}\backslash U_1],\bigvee\limits_{i=1}^pX\tuples{\alpha }\tuples{t}_i\to \bigwedge\limits_{j=1}^mX\tuples{r}_j\tuples{\beta}_j\seq 
G[\tuples{y}\backslash U_2],
\end{align*}
where $X$ is a $(q_\forall +q_\exists )$-place predicate variable, a {\em schematic extended Herbrand 
sequent} corresponding to $\Gcal$ and $S$ (in the following abbreviated by {\em SEHS}). 
Furthermore, we call $F[\tuples{x}\backslash U_1]\seq G[\tuples{y}\backslash U_2]$ the 
{\em reduced representation} of $S(X)$. 
\end{definition}
\begin{example}
\label{exa:schematicExtendedTuples}
Let $\mathcal{G}$ be as in Example \ref{exa:schematifPi2grammarTuples}. Then we can define the schematic extended Herbrand sequent with blocks of quantifiers 
\begin{align*}
&P(\alpha_1,f\alpha_1)\lor Q(\alpha_2,g\alpha_2)\lor Q(\alpha_2,h\alpha_2), \\
&(X\alpha_1\alpha_2f\alpha_1g\alpha_2\lor X\alpha_1\alpha_2f\alpha_1h\alpha_2) \to (Xcd\beta_1\beta_2\land Xee\gamma_1\gamma_2) \\
& \seq (P(c,\beta_1)\lor Q(d,\beta_2))\land (P(e,\gamma_1)\lor Q(e,\gamma_2)) .
\end{align*}
The corresponding end sequent is $\Gamma\seq\Delta$ where $\Gamma$ and $\Delta$ is as in Example \ref{exa:schematifPi2grammarTuples}.
\end{example}
\begin{definition}
\label{def:solution of an sehs tuples}
Let $S$ be a provable sequent, $\mathcal{G}$ a schematic $\Pi_2$-grammar with the non-terminals $\{\tau ,\tuples{\alpha} ,\tuples{\beta}_1,\ldots ,\tuples{\beta}_m\}$ as in Definition \ref{def.schem-pi2-grammar tuples}, and $S(X)$ the corresponding \seHs\!\!. Let $S(X)[X \setminus \lambda \tuples{x} \tuples{y}.A]$ be a tautology where $A$ may not contain any variable in $\tuples{\alpha}$ or $\tuples{\beta}_j$ with $j\in\mathbb{N}_m$. Then we call $A$ a \emph{solution} of the \seHs\!\! $S(X)$.
\end{definition}
\begin{example}
A solution of the \seHs of Example \ref{exa:schematicExtendedTuples} is $P(x_1,y_1)\lor Q(x_2,y_2)$.
\end{example}
Note that this is a generalization of the previous sections. For $q_\forall =1$ and $q_\exists =1$, the generalization tallies with the already described method. Also the rest of the procedure has only to be adjusted to operate with tuples of variables. The \emph{starting set for $\Pi_2$-cuts with blocks of quantifiers} now contains the designated variables $x_1,\ldots ,x_k$ and $y_1,\ldots ,y_l$ where $k=q_\forall ,l=q_\exists $ and may not contain variables of $\tuples{\alpha}$ or $\tuples{\beta}_j$ with $j\in\mathbb{N}_m$ (compare to Definition \ref{def:starting set}). The set of \emph{possible sets of clauses with tuples of variables}, the set of \emph{refined allowed clauses with tuples of variables}, and the set of \emph{solution candidates for $\Pi_2$-cuts with blocks of quantifiers} can be defined accordingly. 

The main theorem for the characterization generalizes to the case of blocks of quantifiers.

\begin{theorem}\label{the.main tuples}
Let 
\begin{align*}
& F[\tuples{x}\backslash U_1],\bigvee\limits_{i=1}^pX\tuples{\alpha} \tuples{t}_i\to \bigwedge\limits_{j=1}^mX\tuples{r}_j\tuples{\beta}_j\seq 
G[\tuples{y}\backslash U_2]
\end{align*}
be an \seHs \!\! corresponding to a Herbrand sequent of a cut-free proof of $S$ and a grammar $\mathcal{G}$ covering the Herbrand term set of $S$. Let $Sol(\mathcal{A}) \neq\emptyset$ be the set of solution candidates for $\Pi_2$-cuts with blocks of quantifiers for a given starting set for $\Pi_2$-cuts with blocks of quantifiers $\mathcal{A}$, and $\Ccal \in Sol(\mathcal{A})$. Let $E = \DNF(\Ccal)$ be the formula in DNF corresponding to $\Ccal$ and $V(E) \subseteq \{\tuples{x},\tuples{y}\}$ where $|\tuples{x}|=q_\forall $ and $|\tuples{y}|=q_\exists$.  Then there exists a proof of $S$ with one cut and the cut formula $\forall \tuples{x} \exists \tuples{y}.E$ 
\end{theorem}
Furthermore, we can easily adjust the $\Gcal^*$unification method by replacing the production rules of Definition \ref{def:type-0-grammar} with 
\begin{align*}
\text{Pr}^*= & \Upsilon_1\cup\Upsilon_2\cup\Upsilon_3 \text{ with} \\
\Upsilon_1= & \{ q\; |\; q\text{ is a }F\text{-production or a } 
G\text{-production}\} \\ 
\Upsilon_2= & \{\tuples{\alpha}\to \tuples{x},\tuples{r}_1\to \tuples{x},\ldots ,\tuples{r}_m\to \tuples{x}\} ,\text{ and} \\ 
\Upsilon_3= & \{ \tuples{t}_1(\tuples{\alpha} )\to \tuples{y},\ldots ,\tuples{t}_p(\tuples{\alpha} )\to \tuples{y},\tuples{\beta}_1\to \tuples{y},\ldots ,\tuples{\beta}_m\to \tuples{y}\} . 
\end{align*}
Then we can prove the non-emptiness of the set of solution candidates for $\Pi_2$-cuts with blocks of quantifiers by assuming the existence of a balanced solution as in the simplified case.
\begin{theorem}
\label{the:balanced solution implies G*-unifiable solution with tuples}
Let $S$ be $\forall\tuples{x}.F\seq\exists\tuples{y}.G$, $\mathcal{G}$ be a schematic 
$\Pi_2$-grammar, and 
\begin{align*}
& S(X):=F[\tuples{x}\backslash U_1],\bigvee\limits_{i=1}^pX\tuples{\alpha} \tuples{t}_i\to \bigwedge\limits_{j=1}^mX\tuples{r}_j\tuples{\beta}_j\seq G[\tuples{y}\backslash U_2]
\end{align*}
be an \seHs for $S$ and $\mathcal{G}$. Assume that $S(X)$ has a balanced solution $\mathcal{C}$. Then the set of solution candidates for $\Pi_2$-cuts with blocks of quantifiers $Sol(\mathcal{U})$ is not empty where $\mathcal{U}$ is the starting set for the $\Gcal^*$-unifiable sequent $R$ (where we consider production rules for tuples). 
\end{theorem}

\section{Proof Compression}
\label{sec:app example}
In Section~\ref{sec.Gunify} we have defined a method to find balanced 
solutions for \seHs \!\!.  Here we demonstrate their potential of proof compression 
via $\Pi_2$ cuts.  Again we consider the example from Section~\ref{sec:Example}. 
\\
At the beginning we prove each sequent of the sequence by constructing a Herbrand 
sequent. Afterwards we measure the complexity in three different ways. We either count  
the number of weak quantifier inferences (\emph{quantifier complexity}), the number of 
inferences (\emph{logical complexity}), or the number of symbols (\emph{symbol 
complexity}). We know that for instance a compression in terms of weak quantifier 
inferences can be easily achieved by increasing the logical complexity or the symbol 
complexity of the cut-formula. By measuring all of them we ensure that the compression 
we achieve is largely independent of the measurement. 
\\
In the end we will see that by the method of $G^*$-unifiability we find for all 
sequents of the sequence $S_n$ defined in Section~\ref{sec:Example} proofs $\psi_n$ 
with the cut-formula $\forall x\exists y.P(x,fy)$ which are polynomially  bounded in 
$n$. We also show that {\em all} sequences of cut-free proofs of $S_n$ grow 
exponentially in $n$, which yields an exponential compression of proof complexity.
\\
We already defined the quantifier complexity (see Definition~\ref{def.quantifiercomp}). 
\begin{definition}[Logical complexity]
Let $\pi$ be a given {\LK} proof. If $\pi$ is of the form 
\begin{center}
\begin{fCenter}
\Axiom$ \;\fCenter $
\LeftLabel{Axiom}
\UnaryInf$\Delta \;\fCenter\seq \Gamma$
\DisplayProof
\end{fCenter}
\end{center}
then the \emph{logical complexity} $|\pi |_l$ is defined to be $0$. If $\pi$ is of 
the form 
\begin{center}
\begin{fCenter}
\Axiom$ \;\fCenter \pi_l$
\Axiom$ \;\fCenter \pi_r$
\LeftLabel{Binary rule}
\BinaryInf$\Delta \;\fCenter\seq \Gamma$
\DisplayProof
\end{fCenter}
\end{center}
with an arbitrary binary rule of {\LK} subproofs $\pi_l$ and $\pi_r$ then 
$|\pi |_l:=|\pi_l|_l+|\pi_r|_l+1$. If $\pi$ is of the form 
\begin{center}
\begin{fCenter}
\Axiom$ \;\fCenter \pi '$
\LeftLabel{Unary rule}
\UnaryInf$\Delta \;\fCenter\seq \Gamma$
\DisplayProof
\end{fCenter}
\end{center}
with the subproof $\pi '$ and an arbitrary unary rule of {\LK} then 
$|\pi |_l:=|\pi '|_l+1$.
\end{definition}
The symbol complexity counts the number of symbols in each sequent of the proof and the 
number of rules that connect these sequents with each other. Therefore, it can be 
defined with the help of the logical complexity which represents the number of 
{\LK}-rules. 
\begin{definition}[Symbol complexity]
Let $\pi$ be a given {\LK} proof and $\Sigma$ the corresponding signature. Let 
$S_1,\ldots ,S_n$ be the sequents occurring in $\pi$. The \emph{symbol complexity} 
$|S_i|_s$ of a sequent $S_i$ for $i\in\mathbb{N}_n$ is equal to the number of 
occurrences of the symbols of the set 
$\Sigma\cup\{$`$\lor$' , `$\land$' , `$\to$' , `$\neg$' , `$\exists$' , `$\forall 
$' , `$,$' , `$\seq$'$\}$ and of variables occurring in $S_i$. The symbol complexity 
$|\pi |_s$ of the proof is defined as 
\begin{align*}
|\pi |_s:=|\pi |_l + \sum\limits_{i\in\mathbb{N}_n}S_i.
\end{align*}
\end{definition}
It is easy to see that the different measurements follow an order. While the quantifier 
complexity is the most coarse one, the symbol complexity is the finest. 
\begin{proposition}
\label{pro:orderCompl}
Let $\pi$ be a given {\LK} proof. Then the following inequalities hold:
\begin{align*}
|\pi |_q\le |\pi |_l\le |\pi |_s . 
\end{align*}
\end{proposition}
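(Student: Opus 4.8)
The plan is to establish the two inequalities separately: $|\pi|_q \le |\pi|_l$ by structural induction on $\pi$, and $|\pi|_l \le |\pi|_s$ directly from the definition of symbol complexity.

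For the first inequality the guiding idea is that every weak quantifier inference is in particular an inference, so its count cannot exceed the total inference count recorded by $|\pi|_l$. In the base case $\pi$ is an axiom, whence $|\pi|_l = 0$; since an axiom contains no inferences we also have $|\pi|_q = 0$, and the inequality holds. In the unary case, where $\pi$ ends in a single unary rule above a subproof $\pi'$, I would use $|\pi|_l = |\pi'|_l + 1$ together with the observation that the concluding rule raises the weak-quantifier count by at most one, i.e.\ $|\pi|_q \le |\pi'|_q + 1$; combining this with the induction hypothesis $|\pi'|_q \le |\pi'|_l$ yields $|\pi|_q \le |\pi'|_l + 1 = |\pi|_l$. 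In the binary case the essential point is that all quantifier inferences in \textbf{G3c}, namely $\alll$, $\allr$, $\exl$ and $\exr$, are unary (see Definition~\ref{de.wsquant}); therefore the concluding binary rule is never a weak quantifier inference and contributes $0$ to $|\pi|_q$, so that $|\pi|_q = |\pi_l|_q + |\pi_r|_q$. Feeding in the two induction hypotheses and $|\pi|_l = |\pi_l|_l + |\pi_r|_l + 1$ gives $|\pi|_q \le |\pi_l|_l + |\pi_r|_l \le |\pi|_l$, closing the induction.

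For the second inequality I would simply invoke the definition $|\pi|_s = |\pi|_l + \sum_{i\in\mathbb{N}_n} S_i$. Each summand $S_i$ is the symbol complexity of a sequent occurring in $\pi$ and is therefore a non-negative integer (indeed each sequent contributes at least the symbol $\seq$), so the sum is non-negative and $|\pi|_s \ge |\pi|_l$ follows immediately.

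The only delicate part—more a matter of careful bookkeeping than a genuine obstacle—lies in the first inequality: one must confirm that the weak quantifier inferences are exactly the unary rules singled out in Definition~\ref{de.wsquant}, so that a binary inference never increments the quantifier count and a unary inference increments it by at most one. Once this is fixed, both bounds reduce to elementary arithmetic on non-negative integers.
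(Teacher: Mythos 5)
Your proposal is correct: the paper itself dismisses this proposition with the single word ``Obvious,'' and your argument is precisely the routine verification being omitted there --- weak quantifier inferences form a subset of all inferences (with binary rules never being quantifier rules), and $|\pi|_s$ is $|\pi|_l$ plus a non-negative sum. Nothing further is needed.
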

\begin{proof}
The claim trivially holds.
\end{proof}
Before we start to compute the different complexities of our example, we adjust the form of the end-sequents $A_n,B,C_n\seq D$. In the presented method we require a sequent of the form $\forall\tuples{x}F\seq\exists \tuples{y}G$. But as already mentioned we can transform each sequent into this format. Let $A_n',B',C_n',$ and $D'$ be the quantifier free part of $A_n,B,C_n,$ and $D$ (we rename the variables) 
\begin{align*}
A_n':= &\; P(x_1,f_1x_1)\lor\ldots\lor P(x_1,f_nx_1), \\
B':= &\; P(x_2,x_3)\to P(x_2,fx_3), \\
C_n':= &\; P(y_1,fy_2)\land P(fy_2,fy_3)\land\ldots\land P(fy_{n-1},fy_n) 
\to P(y_1,gy_n),\text{ and} \\
D':= &\; P(y_{n+1},gy_{n+2}). \\
\end{align*}
Furthermore, let $\tuples{x}=(x_1,x_2,x_3)$ be the tuple of the $3$ variables occurring in $A_n'\land B'$, and $\tuples{y}=(y_1,\ldots ,y_{n+2})$ be 
tuples  of the $n+2$ variables occurring in $\overline{C_n'}\lor D'$;  let $\overline{C_n'}$ be the negation of $C_n'$
\begin{align*}
\overline{C_n'}:= &\; P(y_1,fy_2)\land P(fy_2,fy_3)\land\ldots\land 
P(fy_{n-1},fy_n) \land\neg P(y_1,gy_n).
\end{align*}
Then we can define 
equivalent sequents 
\begin{align*}
S_n':= &\; \forall \tuples{x}.A_n'\land B'\seq \exists\tuples{y}.
\overline{C_n'}\lor D' .
\end{align*}
From now on $S_n'$ will always refer to the rewritten sequence of sequents that is in 
the correct form for the presented cut-introduction method. $S_n$ will refer to the 
original version.

\subsection{Minimal Cut-Free Proofs}
\label{sub:Minimal cut-free proofs}

In this section we consider cut-free proofs of $S_n$ for a fixed natural number $n$. 
For convenience we will compute lower bounds on the different complexities of 
minimal proofs of $S_n'$ in terms of the respective complexity measurement instead 
of computing the exact complexity. Moreover, we will show that minimal proofs of 
$S_n$ always have a smaller complexity than minimal proofs of $S_n'$ no matter 
which complexity measurement we choose. 
\begin{lemma}
\label{lem:SntoSn'}
Let $\pi$ be a minimal proof of the sequent $S_n$ of the example of Section \ref{sec:Example} in terms of quantifier, logical, or symbol complexity and $\pi'$ a minimal proof of the sequent $S_n'$ in prenex normal form in terms of quantifier, logical, and symbol complexity, respectively then 
\begin{align*}
|\pi |_\Diamond\le |\pi '|_\Diamond
\end{align*}
where $\Diamond\in\{ q,l,s\}$. 
\end{lemma}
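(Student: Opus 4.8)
The plan is to prove the inequality by exhibiting, for each fixed measure $\Diamond\in\{q,l,s\}$, a transformation $\pi'\mapsto\rho$ sending any cut-free proof $\pi'$ of $S_n'$ to a cut-free proof $\rho$ of $S_n$ with $|\rho|_\Diamond\le|\pi'|_\Diamond$. Taking $\pi'$ to be $\Diamond$-minimal for $S_n'$ and using the $\Diamond$-minimality of $\pi$ for $S_n$, this yields $|\pi|_\Diamond\le|\rho|_\Diamond\le|\pi'|_\Diamond$. In fact the single construction will be non-increasing in all three measures simultaneously, so one transformation serves for every $\Diamond$, and since it introduces no cuts the comparison stays within cut-free proofs as the subsection intends.

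The construction rests on the fact that the matrices grouped together in $S_n'$ are variable-disjoint: $A_n'$ mentions only $x_1$ and $B'$ only $x_2,x_3$, while $\overline{C_n'}$ mentions only $y_1,\ldots,y_n$ and $D'$ only $y_{n+1},y_{n+2}$. Hence $\forall\bar{x}.(A_n'\land B')$ is, inference for inference, recoverable from the two separate antecedent formulas $A_n$ and $B$ of $S_n$, and $\exists\bar{y}.(\overline{C_n'}\lor D')$ from $C_n$ (on the antecedent, negated) together with $D$. Since $S_n'$ is prenex, I may assume by Gentzen's midsequent theorem that $\pi'$ has all its weak quantifier inferences below a midsequent $S^*$ and all propositional ones above; the inferences below $S^*$ merely strip the blocks $\forall\bar{x}$ and $\exists\bar{y}$, and $S^*$ is a valid quantifier-free sequent.

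Now $\rho$ is built locally. Each instance of $\forall\bar{x}.(A_n'\land B')$ is produced in $\pi'$ by three $\alll$ inferences followed by a $\land\colon l$ that selects one conjunct; in $\rho$ I replace this by a single $\alll$ on $A_n$, or by two $\alll$ on $B$, discarding both the dummy instantiations of the absent variables and the $\land\colon l$. Dually, each instance of $\exists\bar{y}.(\overline{C_n'}\lor D')$ is produced by $n+2$ inferences $\exr$ and a $\lor\colon r$: a $D'$-instance is reproduced by two $\exr$ on $D$, while an $\overline{C_n'}$-instance, broken up above $S^*$ by a $\land\colon r$ tree (its positive conjuncts) and a $\neg\colon r$ (its conjunct $\neg P(y_1,gy_n)$), is reproduced from the instance $C_n'[\ldots]$ on the antecedent via one $\to\colon l$: the right premise of the $\to\colon l$ demands exactly the conjunction of positive atoms handled by the same $\land\colon r$ tree, and the left premise supplies the atom $P(r_1,gr_n)$ that $\neg\colon r$ had moved leftward. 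Branch for branch the decompositions coincide, no new branch or quantifier inference is created, and in every replacement the number of weak quantifier inferences, of logical inferences, and of symbols is no larger than in $\pi'$ (we strip shorter quantifier prefixes, drop the $\land\colon l$ and $\lor\colon r$ that only split the artificially merged formulas, and the decomposed sequents of $S_n$ carry no more symbols than the combined ones). This gives $|\rho|_q\le|\pi'|_q$, $|\rho|_l\le|\pi'|_l$, and $|\rho|_s\le|\pi'|_s$.

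The step I expect to be the real work is the succedent, where a single formula $\overline{C_n'}\lor D'$ packs, under one existential prefix, both the hypothesis $C_n$ (which lives on the left of $S_n$) and the goal $D$. One must check carefully that the right-hand decomposition of $\overline{C_n'}$ — the $\land\colon r$ tree together with the polarity flip performed by $\neg\colon r$ — matches, branch for branch and without extra inferences, the left-hand $\to\colon l$/$\land\colon r$ decomposition of $C_n'$, and that the dummy existential inferences over $y_{n+1},y_{n+2}$ (respectively $y_1,\ldots,y_n$) that $\pi'$ is forced to pay are what guarantee the quantifier count cannot go up. A secondary subtlety is that one bottom instance of the merged antecedent formula may feed both $A_n'$ and $B'$ in different parts of $\pi'$; because the two conjuncts are variable-disjoint, such shared uses untangle into independent instantiations of $A_n$ and $B$ without incurring more inferences than $\pi'$ already contains. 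Once these two matchings are verified the three inequalities follow, and the lemma is obtained by the minimality argument above.
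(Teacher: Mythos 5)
Your proof is essentially correct, but it runs the transformation in the opposite direction from the paper. The paper's proof (a two-line sketch) transforms a \emph{minimal} proof of $S_n$ into a proof of $S_n'$, arguing that the transformation ``at most adds inferences''; to conclude $|\pi|_\Diamond\le|\pi'|_\Diamond$ from that direction one must additionally know that the transformed proof is itself minimal for $S_n'$, which the paper asserts but does not justify. You instead map an arbitrary (in particular, a minimal) cut-free proof of $S_n'$ to a proof of $S_n$ without increasing any of the three measures, and then invoke only the minimality of $\pi$ for $S_n$; this is the logically tighter direction, since it needs no minimality-preservation claim. Your accounting of the local replacements is sound: the three $\alll$ plus $\land{:}l$ on $\forall\bar{x}.(A_n'\land B')$ become one $\alll$ on $A_n$ together with two on $B$ (note that in \textbf{G3c} the $\land{:}l$ yields \emph{both} conjuncts, so in general both instantiations are needed — which you correctly absorb in your final paragraph — and the count $3$ versus $3+1$ is still non-increasing), and the $(n-1)$-fold $\land{:}r$ tree plus $\neg{:}r$ decomposing $\overline{C_n'}$ on the right is matched, branch for branch, by the $\to{:}l$ plus $(n-2)$-fold $\land{:}r$ decomposition of $C_n$ on the left, with one inference to spare. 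The only caveat worth flagging is your appeal to the midsequent normal form: the paper guarantees this normalization does not increase proof length, and for the quantifier count it is harmless, but strictly speaking your local replacements do not require the midsequent form at all and the argument is cleaner without it. What your route buys is a self-contained inequality; what the paper's route would buy, if completed, is the companion statement that minimality is preserved under the rewriting between the two sequent forms, which it uses informally elsewhere in Section~\ref{sub:Minimal cut-free proofs}.
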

\begin{proof}[Proof sketch]
Each minimal proof of $S_n$ can be transformed into a minimal proof of $S_n'$. 
This transformation will at most add inferences and, therefore, the respective 
complexity can only increase. 
\end{proof}
Before we compute the complexities of minimal proofs of $S_n$ we have to show 
some properties of a potential minimal proof. In a first step we show that in a minimal 
proof (with respect to an arbitrary complexity measurement) all atoms that appear in an 
instantiation of $A_n,B,C_n,$ or $D$ are active in an axiom. 
\begin{lemma}
\label{lem:allACT}
Let $\pi$ be a minimal proof in terms of quantifier, logical, or symbol complexity of the sequent $S_n$ of the example of Section \ref{sec:Example} and 
\begin{align*}
& P(a,f_1a)\lor\ldots\lor P(a,f_na), \\ 
& P(b_1,b_2)\to P(b_1,fb_2), \\ 
& \bigr( P(c_1,fc_2)\land P(fc_2,fc_3)\land\ldots\land P(fc_{n-1},fc_n)\bigr)\to 
P(c_1,gc_n), \text{ and} \\ 
& P(d_1,gd_2)
\end{align*}
be instantiations of $A_n,B,C_n,$ and $D$ for some proof-specific terms $a,b_1,b_2,$ $c_1,\ldots ,c_n,d_1,$ and $d_2$. Then there are axioms for each atom  
\begin{align*}
& P(a,f_1a),\ldots ,P(a,f_na),P(b_1,b_2),P(b_1,fb_2), \\ 
& P(c_1,fc_2),P(fc_2,fc_3),\ldots ,P(fc_{n-1},fc_n),P(c_1,gc_n), \text{ and} \\ 
& P(d_1,gd_2)
\end{align*}
in which the respective atom is active. 
\end{lemma}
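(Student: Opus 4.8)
The plan is to derive a contradiction from the minimality of $\pi$: if one of the listed atoms were not active in any axiom, I would exhibit a proof of $S_n$ that is strictly smaller in whichever of the three measures $\pi$ was chosen to minimize. First I would move $\pi$ into midsequent normal form. Since $A_n$, $B$, $C_n$, $D$ are each prenex, the transformation recalled in Section~\ref{sec.prooftinfra} applies without increasing proof length, and more generally without increasing any of the measures $q,l,s$ (cf.\ Proposition~\ref{pro:orderCompl}); so I may assume $\pi$ has a midsequent $S^*$ below which only quantifier inferences occur and above which only propositional ones occur. Then $S^*$ is a tautological quantifier-free sequent whose antecedent is the multiset of instances of $A_n,B,C_n$ and whose succedent is the multiset of instances of $D$. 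The displayed instantiations are among these, and ``active in an axiom'' refers to the purely propositional subderivation $\psi^*$ of $S^*$ sitting above the midsequent. Suppose, for contradiction, that some listed atom $L$ is not active in any axiom of $\psi^*$.

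The core step is a deletion lemma: since $\psi^*$ is propositional and every rule of \textbf{G3c} is invertible, I may permute the decomposition of the host instance of $L$ down to the bottom of $\psi^*$, so that it is the last inference(s) applied. I would then treat the cases according to the host formula and the side into which the relevant rule sends $L$. If $L$ is a disjunct of an $A_n$-instance, the bottom rule is $\lor\colon l$ and $L$ lives in exactly one premise; if $L$ is the consequent of a $B$- or $C_n$-instance, the bottom rule is $\to\colon l$ and $L$ sits in the antecedent of the right premise; if $L$ is the antecedent atom of a $B$-instance it sits in the succedent of the left $\to\colon l$-premise; if $L$ is a conjunct of the hypothesis $Z$ of a $C_n$-instance, the left $\to\colon l$-premise carries $Z$ in its succedent, so after $\land\colon r$ the conjunct $L$ lives in one $\land\colon r$-branch; and if $L$ is the $D$-atom it is already atomic in the succedent of $S^*$ and is merely a possible axiom-partner. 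In every case $L$ occurs only as a side formula (it is never principal, being never active), so deleting it from every sequent of the branch on which it lives again yields a correct derivation, and, because the host instance has been permuted to the bottom, that pruned branch is precisely a derivation of $S^{*\prime}$, the midsequent with the whole host instance removed.

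Re-attaching below $S^{*\prime}$ the quantifier inferences for the remaining instances then gives a proof $\pi'$ of $S_n$. Crucially, passing from $\pi$ to $\pi'$ removes at least one weak quantifier inference (the $\alll$-block or $\exr$-block that created the deleted instance), removes all the propositional inferences and sibling branches that decomposed it, and removes its symbols; nothing is added. Hence $|\pi'|_q<|\pi|_q$, $|\pi'|_l<|\pi|_l$, and $|\pi'|_s<|\pi|_s$ simultaneously, contradicting minimality of $\pi$ in each measure $\Diamond\in\{q,l,s\}$, and by Lemma~\ref{lem:SntoSn'} it suffices to argue for $S_n$. One case warrants a separate remark: if the deleted $L$ is the $D$-atom and it is the only instance of $D$, then $S^{*\prime}$ has empty succedent, i.e.\ the instances of $A_n,B,C_n$ would be jointly unsatisfiable; but $A_n\land B\land C_n$ is satisfiable, so $S^{*\prime}$ cannot be a tautology, forcing $L$ to be active by soundness. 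In all remaining subcases $S^{*\prime}$ is provable (witnessed by the pruned branch) and the smaller proof exists.

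The main obstacle is the bookkeeping in the deletion lemma: justifying that the host instance can be permuted to the bottom of $\psi^*$ and that erasing the single occurrence of $L$ along its branch really produces a derivation of $S^{*\prime}$ rather than of some more decomposed sequent. Invertibility of \textbf{G3c} makes the permutation legitimate and makes the erasure locally sound at each rule, but one must check uniformly across the five positional cases that no other formula on that branch depends on the removed instance, so that its entire footprint disappears cleanly. Once this is established, the counting of the three complexities is routine and the contradiction with minimality is immediate.
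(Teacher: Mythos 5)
Your proposal is correct and follows essentially the same route as the paper's proof: assume some atom is never active, localize the propositional decomposition of its host instance by rule permutation (the paper pushes the $\lor\colon l$ inferences to the top, you push them to the bottom of the propositional part, which is immaterial), observe that the branch carrying the inactive atom already derives the sequent with that atom deleted, and hence drop the whole instance together with its quantifier inferences to obtain a proof strictly smaller in all of $q$, $l$, $s$, contradicting minimality. You are merely more explicit than the paper, which carries out only the $A_n$ case and declares the other three ``similar,'' whereas you spell out the positional case analysis and the degenerate $D$-instance subcase.
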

\begin{proof}
The proof works for all four formulas in a similar way. We will only consider the formula 
\begin{align*}
A_n^{\downarrow a}:= &\; P(a,f_1a)\lor\ldots\lor P(a,f_na).
\end{align*}
\\
Assume there is an $i\in\mathbb{N}_n$ such that $P(a,f_ia)$ is not active in any  
axiom. Then we can order $\pi$ such that the $\lor\colon$l-rules that apply to 
$A_n^{\downarrow a}$ 
are the rules in the new minimal proof $\pi '$ that appear at the top of the 
corresponding proof tree. Let $S:=A_n^{\downarrow a},\Delta\seq\Gamma$ be 
a sequent in which $A_n^{\downarrow a}$ appears. The provability implies that also 
$P(a,f_ia), 
\Delta\seq\Gamma$ is a tautological axiom. Hence, $\Delta\seq\Gamma$ is already 
tautological and we can drop all the $\lor\colon$l-rules applied to 
$A_n^{\downarrow a}$ (and 
even the instantiation rules). Thus, there is a proof with smaller quantifier, 
logical, and symbol complexity which contradicts the assumption that $\pi$ was already 
minimal in these terms. Hence, there is no such instantiation. 
\end{proof}
\begin{remark}
This is not a general property of minimal proofs. Consider, for instance, the proof 
\begin{center}
\begin{fCenter}
\Axiom$ \fCenter $
\RightLabel{\textit{Ax}}
\UnaryInf$\forall x.P(x)\land Q(x),P(c), Q(c) \;\fCenter\seq P(c)$
\RightLabel{$\land\colon$l}
\UnaryInf$\forall x.P(x)\land Q(x),P(c)\land Q(c) \;\fCenter\seq P(c)$
\RightLabel{\alll}
\UnaryInf$\forall x.P(x)\land Q(x) \;\fCenter\seq P(c)$
\DisplayProof
\end{fCenter}
\end{center}
of the sequent $\forall x.P(x)\land Q(x)\seq P(c)$. The proof is minimal, but the atom $Q(c)$ is not active.
\end{remark}
The next property guarantees that $A_n,B,C_n,$ and $D$ have to be instantiated at least 
once. 
\begin{lemma}
\label{lem:allINS}
Let $\pi$ be a proof of the sequent $S_n$ of the example of Section \ref{sec:Example} then the formulas 
\begin{align*}
& P(a,f_1a)\lor\ldots\lor P(a,f_na), \\ 
& P(b_1,b_2)\to P(b_1,fb_2), \text{ and} \\ 
& \bigr( P(c_1,fc_2)\land P(fc_2,fc_3)\land\ldots\land P(fc_{n-1},fc_n)\bigr)\to 
P(c_1,gc_n) 
\end{align*}
with some proof-specific terms $a,b_1,b_2,c_1,\ldots ,c_n$ appear on the left side of 
some sequents in $\pi$ and the formula 
\begin{align*}
& P(d_1,gd_2)
\end{align*}
with proof-specific terms $d_1,d_2$ appears on the right side of some sequent in $\pi$. 
\end{lemma}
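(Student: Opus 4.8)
The plan is to argue by contradiction. Suppose one of the four formulas fails to be instantiated in $\pi$. Since we are dealing with a cut-free proof, the universal antecedent formulas $A_n$, $B$, $C_n$ can be principal only in a $\forall\colon l$ inference, and the existential succedent formula $D$ can be principal only in an $\exists\colon r$ inference. Hence a formula that is never instantiated is never the principal formula of any inference of $\pi$: it occurs purely as a passive side formula in every sequent and, in particular, in every atomic axiom $\Gamma,P\seq P,\Delta$. Deleting such a formula from all sequents of $\pi$ maps axioms to axioms and leaves each logical inference correct, since the deleted formula merely stood in the context. This produces a cut-free \textbf{G3c}-proof of the corresponding reduced sequent. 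Thus a non-instantiated $A_n$ would yield a proof of $B,C_n\seq D$, a non-instantiated $B$ a proof of $A_n,C_n\seq D$, a non-instantiated $C_n$ a proof of $A_n,B\seq D$, and a $D$ never subject to $\exists\colon r$ a proof of $A_n,B,C_n\seq$ with empty succedent.

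It then suffices to refute each reduced sequent semantically, since by soundness of \textbf{G3c} a provable sequent is valid. I would evaluate the formulas in the term model over the Herbrand universe of the signature and interpret $P(s,t)$ through the outermost function symbol of its second argument $t$; in every atom occurring in an instance of $A_n,B,C_n,D$ this outer symbol is one of $f_1,\ldots,f_n,f,g$. Four interpretations do the job. Letting $P$ be empty (always false) makes $B$ and $C_n$ true, since their hypotheses $P(x,y)$ and $Z_n$ are built from $P$-atoms and hence false, while $D$ is false; so $B,C_n\seq D$ is not valid. Letting $P(s,t)$ hold exactly when the outer symbol of $t$ is some $f_i$ makes $A_n$ true and, since every conjunct of $Z_n$ has outer symbol $f$, makes $C_n$ true while $D$ (outer symbol $g$) is false, refuting $A_n,C_n\seq D$. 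Letting $P(s,t)$ hold exactly when the outer symbol of $t$ is not $g$ makes $A_n$ true and $B$ true (its consequent $P(x,fy)$ has outer symbol $f\neq g$) while $D$ is false, refuting $A_n,B\seq D$. Finally, letting $P$ be always true satisfies all of $A_n,B,C_n$ simultaneously, so $A_n,B,C_n\seq$ is not valid.

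Combining the two steps, each assumption of non-instantiation contradicts soundness, so $A_n$, $B$, $C_n$ must each be instantiated on the left at least once and $D$ introduced on the right at least once, which is exactly the claim. I expect the only delicate point to be the deletion step: one must check that a formula which is never principal really can be excised from the whole proof without invalidating any inference or axiom, which relies on the precise shape of the \textbf{G3c} rules, namely the weakening absorbed into the atomic axioms and the retention of the principal $\forall$-formula in $\forall\colon l$. The construction of the four countermodels is then essentially bookkeeping on outer function symbols, exploiting that $f_1,\ldots,f_n$, $f$ and $g$ are pairwise distinct.
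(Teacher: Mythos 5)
Your proposal is correct in substance but takes a genuinely different route from the paper. The paper argues syntactically: it invokes Lemma~\ref{lem:allACT} (in a minimal proof every atom of an instantiated formula is active in some axiom) and chains through the possible axiom partners — an atom $P(a,f_ia)$ coming from $A_n$ can only close an axiom against an atom coming from $B$, whose other atom in turn forces an instance of $A_n$ or $C_n$, and so on — so that instantiating any one formula propagates to all four. You argue semantically: excise the allegedly uninstantiated formula from the whole derivation (legitimate in \textbf{G3c}, since a formula that is never principal is pure context and both the axioms and the rules are stable under deleting context formulas), then refute the reduced sequent by a countermodel keyed to the outermost function symbol of the second argument of $P$. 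Both arguments ultimately exploit the same structural fact, namely that $f_1,\ldots,f_n$, $f$ and $g$ separate the atoms, but yours is more self-contained and, unlike the paper's, does not rely on Lemma~\ref{lem:allACT} and hence not on minimality of $\pi$ — which fits the statement better, since the lemma does not assume minimality. Your four interpretations all check out. Two points to tighten: (i) the Herbrand universe needs a constant to be nonempty, as none occurs in $S_n$ (add one, or use any term-generated domain); (ii) for the block-quantified formulas $B$, $C_n$ and $D$, the negation of the conclusion is that no \emph{full} instance appears, which is strictly weaker than ``never principal'': the formula could be partially instantiated. This is repaired by iterating your deletion step — a partial instance that is never principal can itself be excised and the $\forall\colon l$ (resp.\ $\exists\colon r$) that created it spliced out, and after finitely many such deletions the formula is never principal at all, so your argument applies.
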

\begin{proof}
First of all at least one formula has to be instantiated. Otherwise, there cannot 
be a valid proof. By showing that an instantiation of an arbitrary formula enforces all 
other formulas to be instantiated at least once we will complete the proof. This can 
easily be seen by Lemma \ref{lem:allACT} and the facts that all potential atoms of $A_n$ 
can only build valid axioms with potential atoms of $B$ ($P(a,f_ia),\Delta\seq\Gamma 
,P(b_1,b_2)$ with $a=b_1$ and $f_ia=b_2$), all potential atoms of $B$ has to build 
axioms with $A_n$ and $C_n$, and so on. In the end we have to 
instantiate $A_n,B,C_n,$ and $D$. 
\end{proof}
Now we can describe sets of instantiations that belong to a minimal proof of $S_n$. 
We will not write down the whole proof because of its large size. But by proving 
the minimality of this instantiations we will implicitly give a sketch of the proof 
and show its validity. 
\begin{theorem}
\label{the:minInstFree}
Let $n$ be a fixed natural number and $S_n=A_n,B,C_n\seq D$ be given. Then the sets 
\begin{align*}
\mathbf{A}_n^1:=&\; \{ c\} 
,\;\mathbf{A}_n^2:= \{ fh_1c\; |\; h_1\in\{ f_1,\ldots ,f_n\}\} , \\
\mathbf{A}_n^i:=&\; \{ fh_{i-1}\ldots fh_1c\; |\; h_1,\ldots ,h_{i-1}\in 
\{ f_1,\ldots ,f_n\}\}\text{ for }i\in\{ 3,\ldots ,n-1\} , \\
\mathbf{A}'_n:=&\; \bigcup\limits_{i=1}^{n-1}\mathbf{A}_n^i ,\; 
\mathbf{B}':= \{ (t,f_it)\; |\; t\in\mathbf{A}_n'\textbf{ and }i\in\{ 1,\ldots ,n\}\} , 
\\ 
\mathbf{C}'_n:=&\; \{ (t_1,\ldots ,t_{n+2})\; |\; t_1=c\textbf{ and } 
t_2=h_1t_1\textbf{ and } 
t_3=h_2ft_2\textbf{ and }\ldots \\ 
&\; \ldots \textbf{ and }t_n=h_{n-1}ft_{n-1} 
\textbf{ and }t_{n+1}=t_1\textbf{ and }t_{n+2}=t_n\textbf{ and } \\ 
&\; h_1,\ldots ,h_{n-1}\in\{ f_1,\ldots ,f_n\}\} , \\
\mathbf{D}':=&\; \{ (c,t)\; |\; t=t_n\textbf{ and }\exists t_1,\ldots ,t_{n-1}. 
(t_1,\ldots ,t_n)\in\mathbf{C}'_n\} .
\end{align*}
are instantiations of the formulas $A_n,B,C_n,$ and $D$ such that the corresponding 
fully instantiated sequent $S_n^{\downarrow}$ is tautological and there is a minimal 
(in terms of quantifier, logical, or symbol complexity) proof $\pi$ of $S_n$ with the 
midsequent $S_n^{\downarrow} $. 
\end{theorem}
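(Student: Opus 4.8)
The plan is to prove the statement in two stages: first that the fully instantiated sequent $S_n^{\downarrow}$ is a tautology (and carries an explicit cut-free proof $\pi$), and then that $\pi$ is minimal in each of the three measures simultaneously.

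For the first stage I would reason semantically. Let an assignment satisfy every antecedent instance and build a chain greedily. From $c\in\mathbf{A}_n^1$ the disjunction $A_n^{\downarrow c}$ forces some $P(c,f_{h_1}c)$ true; feeding this through the instance $B^{\downarrow(c,f_{h_1}c)}\in\mathbf{B}'$ yields the first link $P(c,ft_2)$ with $t_2=f_{h_1}c$. Since $ft_2\in\mathbf{A}_n^2\subseteq\mathbf{A}'_n$, the instance of $A_n$ is again present at that point, and iterating $n-1$ times produces a complete chain $P(t_1,ft_2),\dots,P(ft_{n-1},ft_n)$ determined by a word $h_1\cdots h_{n-1}$. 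The decisive point is that the instantiation sets are closed under every such run: $\mathbf{A}'_n$ contains each reachable point, $\mathbf{B}'$ each edge, and $\mathbf{C}'_n$ each complete chain, so $(t_1,\dots,t_n,t_1,t_n)\in\mathbf{C}'_n$ and $C_n^{\downarrow}$ delivers $P(c,gt_n)$, which is the succedent instance $D^{\downarrow(c,t_n)}$ with $(c,t_n)\in\mathbf{D}'$. Read bottom-up, the same branching — with $\lor\colon l$ splitting into $n$ premises at each point of $\mathbf{A}'_n$ — assembles a concrete proof $\pi$ whose midsequent is exactly $S_n^{\downarrow}$.

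For minimality I would establish a matching lower bound on the number of distinct terms, using Lemma~\ref{lem:SntoSn'} and Proposition~\ref{pro:orderCompl} to reduce all three measures to counting forced instances in a proof of $S_n$, and the correspondence between Herbrand complexity and quantifier complexity from the start of Section~\ref{sec.prooftinfra} to connect term counts to complexity. By Lemma~\ref{lem:allINS} all four formulas are instantiated, and by Lemma~\ref{lem:allACT} every atom of every instance is active in an axiom; tracing these axioms backwards fixes the dependencies. A $D$-instance $P(d_1,gd_2)$ can only be paired with the conclusion $P(x_1,gx_n)$ of a $C_n$-instance — the unique left occurrence of a $g$-headed atom — which forces the entire chain $Z_n$; each link $P(ft_j,ft_{j+1})$ can come only from $B$, whose premise can be supplied only by a disjunct of an $A_n$-instance. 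The core is the forced branching: using $A_n^{\downarrow t}$ is a left disjunction, so the proof must split into $n$ premises, and in premise $i$ the only consumable literal $P(t,f_it)$ forces the chain to continue from the new point $ff_it$. Induction on the depth $k$ then shows that the points at which $A_n$ must be used at level $k$ are exactly those of $\mathbf{A}_n^k$, which are pairwise distinct and $n^{k-1}$ in number; hence every proof of $S_n$ carries at least the $\sum_{k=1}^{n-1}n^{k-1}$ distinct terms of $\mathbf{A}'_n$ together with the $n^{n-1}$ distinct chain-tuples of $\mathbf{C}'_n$ and their endpoints in $\mathbf{D}'$. As $\pi$ realises precisely these, it is minimal in all three measures.

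The main obstacle is the lower bound — making the forced-branching argument airtight. I must show that no premise of a left-$A_n$ inference can close early: the literal $P(t,f_it)$ has head $f_i$, so it matches neither a $D$-atom nor a $C_n$-conclusion (both $g$-headed), leaving $B$'s premise as its only consumer and compelling the chain to run its full length $n-1$. I must also rule out detours that could lower the term count, such as discharging a link's premise through a second $C_n$-conclusion (which injects a $g$ into the chain and only enlarges the instance set). Carrying this case analysis out exhaustively, and checking that the same forced instances pin down minimality simultaneously for quantifier, logical, and symbol complexity, is the delicate part; by contrast the semantic tautology argument and the assembly of $\pi$ are routine.
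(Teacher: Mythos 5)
Your proposal is correct and follows essentially the same route as the paper's proof: both use Lemma~\ref{lem:allINS} and Lemma~\ref{lem:allACT} to chase the forced dependencies $D \leftrightarrow C_n \leftrightarrow B \leftrightarrow A_n$, observe that each left use of an $A_n$-instance branches into $n$ premises each forcing a new chain point, and iterate to show that every proof must contain exactly the instantiations $\mathbf{A}'_n, \mathbf{B}', \mathbf{C}'_n, \mathbf{D}'$, from which minimality in all three measures follows via Proposition~\ref{pro:orderCompl}. Your version is somewhat more explicit about the validity of $S_n^{\downarrow}$ and about the case analysis needed to rule out early closure and detours, which the paper's proof largely glosses over, but the underlying argument is the same.
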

\begin{proof}
By Lemma \ref{lem:allINS} we can assume an instantiation $(t_1,\ldots ,t_n)$ of $C_n$. 
Let $c:=t_1$. Given that atomic subformulas of an instantiated formula in a minimal 
proof have to be active (see Lemma \ref{lem:allACT}) we know that $P(c,ft_2)$ of 
\begin{align*}
\bigl( P(c,ft_2)\land P(ft_2,ft_3)\land\ldots\land P(ft_{n-1},t_n)\bigr)\to P(t_1,gt_n)
\end{align*}
has to be active in an axiom. In an axiom $P(c,ft_2)$ appears on the right side of the 
sequent and, hence, the only formula that can become $P(c,ft_2)$ on the left side of 
the sequent is $P(b_1,fb_2)$ of 
\begin{align*}
B=P(b_1,b_2)\to P(b_1,fb_2).
\end{align*}
Then $b_1$ has to be equal to $c$ and $b_2$ has to be equal to $t_2$. By applying Lemma \ref{lem:allACT} again we have to find the counterpart of $P(b_1,b_2)=P(c,t_2)$. 
Hence, there has to be an instantiation of $A_n$, i.e.\ 
\begin{align*}
P(c,f_1c)\lor\ldots\lor P(c,f_nc).
\end{align*}
Given that this is the only possibility we can conclude that there have to be 
instantiations of $B$ and $C_n$ where $t_2$ is equal to $f_1c,\ldots ,f_{n-1}c,$ and 
$f_nc$. 
\\
So far we described $\mathbf{A}_n^1$, the parts of $\mathbf{B}'$ where $\mathbf{A}_n'$ 
is replaced with $\mathbf{A}_n^1$, the first two elements of the tuples in 
$\mathbf{C}_n'$, and the first element of the tuples in $\mathbf{D}'$. With the 
second elements $f_1c,\ldots ,f_nc$ of the tuples in $\mathbf{C}_n'$ we have to go 
through the 
same procedure as we did with $c$. That is, we will get new instantiations of 
$A_n$, i.e.\ $\mathbf{A}_n^2$, a new part of $\mathbf{B}'$ and the third 
elements of tuples in $\mathbf{C}_n'$. After $n$ applications of this procedure 
we would have constructed the sets of the theorem such that each atom has exactly one 
necessary counterpart, i.e.\ all atoms appear as an active formula in an axiom 
and we cannot drop a single atom without making the proof invalid. Hence, the 
instantiation correspond to a minimal proof of $S_n$ in terms of quantifier 
complexity. Given that all proofs contain at least as many instantiations as 
the given one there also has to be a corresponding minimal proof in terms of logical 
and symbol complexity. 
\end{proof}
Now we can compute the quantifier complexity of $S_n$. Let $\mathbf{A}_n',\mathbf{B}', 
\mathbf{C}_n',\mathbf{D}',$ and $S_n^{\downarrow}$ be defined as in Theorem \ref{the:minInstFree} then 
\begin{align*}
|S_n^{\downarrow}|_q=(|\mathbf{A}_n'|)+(|\mathbf{B}'|+\sum\limits_{i=1}^{n}n^{i-1})+
(|\mathbf{C}_n'|+\sum\limits_{i=1}^{n}n^{i-1})+(|\mathbf{D}'|+1). 
\end{align*}
The additional instantiations, besides the ones covered by $|\mathbf{A}_n'|+
|\mathbf{B}'|+|\mathbf{C}_n'|+|\mathbf{D}'|$, derive from the number of variables 
in each formula $A_n,B,C_n,$ and $D$. Given that 
\begin{align*}
|\mathbf{A}_n'|=&\; \sum\limits_{i=1}^{n}n^{i-1} , \\
|\mathbf{B}'|=&\; n\cdot |\mathbf{A}_n'|=n\cdot 
\sum\limits_{i=1}^{n}n^{i-1}, \\ 
|\mathbf{C}_n'|=&\; n^{n-1} ,\text{ and} \\
|\mathbf{D}'|=&\; |\mathbf{C}_n'|=n^{n-1} 
\end{align*}
the quantifier complexity sums up to 
\begin{align*}
|S_n^{\downarrow}|_q=&\; n^n+6\cdot n^{n-1}+4\cdot n^{n-2}+\ldots +4\cdot n +5 >n^n 
\end{align*}
for $n\ge 3$ and 
\begin{align*}
|S_n^{\downarrow}|_q=&\; n^n+6\cdot n^{n-1}+5 >n^n 
\end{align*}
for $n=2$. 
\\
By Lemma \ref{lem:SntoSn'} we can give a lower bound for the quantifier complexity of $S_n'$. Moreover, the quantifier complexity is a lower bound for the logical complexity and the symbol complexity (see Proposition \ref{pro:orderCompl}). To summarize: the various complexities of minimal proofs of $S_n'$ are bigger than $n^n$.

\subsection{A Proof Scheme with a $\Pi$-2-cut}

After computing the complexity of a minimal cut-free proof, we want to generate a cut formula by the presented method and analyse the complexity of the corresponding proof with cut. We consider the scheme of schematic $\Pi_2$-grammars $\Gcal_n$. $\Gcal_n$ is defined by the starting symbol $\tau$, the non-terminals $\tau,\beta_1,\ldots ,\beta_{n-1},\alpha$, and the production rules 
\begin{align*}
\tau\to &\; h_{F_n}(\alpha ,\alpha ,f_1\alpha ) \; |\;\ldots\; |\; 
h_{F_n}(\alpha ,\alpha ,f_n\alpha ) \; |\; h_{G_n}(c,\beta_1,\ldots 
\beta_{n-1},c,\beta_{n-1}) 
, \\ 
\alpha\to &\; f\beta_{n-1}\; |\;\ldots\; |\; f\beta_1\; |\; c, \\ 
\beta_{n-1}\to &\; f_1f\beta_{n-2}\; |\;\ldots\; |\; f_nf\beta_{n-2}\; , \\ 
&\; \vdots \\ 
\beta_2\to &\; f_1f\beta_1\; |\;\ldots\; |\; f_nf\beta_1, \text{ and } \\
\beta_1\to &\; f_1c\; |\;\ldots\; |\; f_nc 
\end{align*}
where $h_{F_n}$ and $h_{G_n}$ are function symbols that correspond to the 
$\lambda$-terms $\gamma_n=\lambda \tuples{x}.A_n'\land B'$ and $\delta_n=\lambda 
\tuples{y}.C_n'\lor D'$. Note that the language $L(\Gcal_n)$ of $\Gcal_n$ covers the 
Herbrand term set that can be derived from the instantiations of Section \ref{sub:Minimal cut-free proofs}. The leaves of a maximal \textbf{G3c}-derivation $\psi_n$ of 
the reduced representation 
\begin{align*}
(\lambda\tuples{x}.A'_n\land B')\alpha\alpha (f_1\alpha ),\ldots , 
(\lambda\tuples{x}.A'_n\land B')\alpha\alpha (f_n\alpha ) \\ 
\;\;\;\;\;\;\;\;\;\;\;\;\;\;\; 
\seq (\lambda\tuples{y}.C'_n\lor D')c\beta_1\ldots \beta_{n-1}c\beta_{n-1} 
\end{align*}
can be represented in the normal form of Definition \ref{def:dual sequent operator}, 
i.e.\ 
\begin{align*}
\{ &\; P(\alpha ,h\alpha ), \\ 
&\;\{\neg P(\alpha ,f_i\alpha )\; |\; i\in I_1\} , 
\{ P(\alpha ,ff_i\alpha )\; |\; i\in I\backslash I_1\} , \\ 
&\;\{\neg P(c,f\beta_1)\; |\; \text{if }j=1\} , \\ 
&\;\{\neg P(f\beta_1,f\beta_2)\; |\; \text{if }j=2\} ,\ldots 
,\{\neg P(f\beta_{n-2},f\beta_{n-1})\; |\; \text{if }j=n-1\}, \\ 
&\;\{P(c,g\beta_{n-1})\; |\; \text{if }j=n\}, \\ 
&\;\neg P(c,g\beta_{n-1}) \seq \\
& |\; h\in \{ f_1,\ldots ,f_n\},I=\mathbb{N}_n,I_1\subseteq I,j\in I\;\} 
\end{align*}
and the non-tautological leaves are 
\begin{align*}
\text{DNTA}(\psi_n )= &\; \{\; P(\alpha ,f_i\alpha ), \\ 
&\;\;\;\; \{\neg P(\alpha ,f_l\alpha )\; |\; l\in I_1\}, 
P(\alpha ,ff_i\alpha ),\{ P(\alpha ,ff_k\alpha )\; |\; k\in I\backslash I_1\}, \\ 
&\;\;\;\; \neg P(f\beta_{j-1},f\beta_j), \\ 
&\;\;\;\; \neg P(c,g\beta_{n-1})\seq \\ 
&\;\;\; |\; i\in\mathbb{N}_n ,j\in\mathbb{N}_{n-1} , 
I_1\subseteq I=\mathbb{N}_n \backslash i,f\beta_0:=c \;\;\} . 
\end{align*}
In each leaf there is a least one conjunct of $A_n$ (first line of DNTA$(\psi )$). 
Hence, if we branch $B$ with the corresponding term of the chosen disjunct only the 
branch containing the succedent of this $B$ is not a tautology (second line of 
DNTA$(\psi )$). Given that each leaf contains the instantiation of $D$ (fourth 
line of DNTA$(\psi )$) we have to look at the branch containing the antecedent of 
the instantiation of $C_n$. Otherwise the leaf is a tautology. The antecedent is a 
conjunction that moves to the right of the sequent after branching $C_n$ and 
,therefore, we have to pick an arbitrary conjunct (third line of DNTA$(\psi )$). 
\\
Now we want to construct the maximal set of $G^*$-unified terms for each leaf. The only 
interactive literals are $(P(\alpha ,ff_i\alpha ),\neg P(f\beta_{j-1},f\beta_j))$ with 
$i\in\mathbb{N}_n$ and $j\in\mathbb{N}_{n-1}$. The maximal set of 
$G^*$-unified terms is $\{ P(x,fy)\}$ accordingly. The starting set of $G^*$-unifiable 
sequents is then given by 
\begin{align*}
& \mathcal{U}:=\{ U\; |\;U\subseteq\{ P(x,fy)\}\} =\{\{ P(x,fy)\} ,\emptyset\} .
\end{align*}
The empty set can be ignored. 
The set of possible sets of clauses is then $Cl(\{\{ P(x,fy)\}\} )= 
\{\{ P(x,fy)\}\}$ and the set of solution candidates is 
\begin{align*}
Sol(\{\{ P(x,fy)\}\} )= \{\{ P(x,fy)\}\} 
\end{align*} 
which is independent from $n$. Let 
\begin{align*}
[P(\alpha ,ff_i\alpha )]_{i=1}^n:=P(\alpha ,ff_1\alpha ),\ldots 
,P(\alpha ,ff_n\alpha ). 
\end{align*}
Then we find the correct cut-formula and the proof $\pi_n$ with cut can be sketched 
by 
\begin{center}
\begin{fCenter}
\Axiom$ \;\fCenter \vdots $
\UnaryInf$\Gamma^l,\mathcal{A}_n',\mathcal{B}_n' 
\;\fCenter\seq \Delta_{n+1}^l ,[P(\alpha ,ff_i\alpha )]_{i=1}^n$
\LeftLabel{\alll}
\UnaryInf$ \;\fCenter \vdots $
\LeftLabel{\alll}
\UnaryInf$\forall \tuples{x}.A_n'\!\land\! B' 
\;\fCenter\seq \Delta_{n+1}^l ,[P(\alpha ,ff_i\alpha )]_{i=1}^n$
\LeftLabel{\exr}
\UnaryInf$ \fCenter \vdots $
\LeftLabel{\exr}
\UnaryInf$\forall \tuples{x}.A_n'\!\land\! B' \;\fCenter\seq \Delta_{1}^l ,\exists y.P(\alpha ,fy)$
\LeftLabel{\allr}
\UnaryInf$\forall \tuples{x}.A_n'\!\land\! B' \;\fCenter\seq \exists\tuples{y}.\overline{C_n'}\!\lor\! D',\forall x\exists y.P(x,fy)$
\Axiom$ \fCenter \pi_n' $
\LeftLabel{\textit{Cut}}
\BinaryInf$\forall \tuples{x}.A_n'\!\land\! B' \;\fCenter\seq 
\exists\tuples{y}.\overline{C_n'}\!\lor\! D'$
\DisplayProof
\end{fCenter}
\end{center}
where $\pi_n'=$
\begin{center}
\begin{fCenter}
\Axiom$ \;\fCenter \vdots $
\UnaryInf$P(c,f\beta_1),\ldots ,P(f\beta_{n-2},f\beta_{n-1}),\Gamma_{2n-2}^l 
\;\fCenter\seq \Delta^r ,\overline{\mathcal{C}_n'},\mathcal{D}_n$
\LeftLabel{\exr}
\UnaryInf$ \;\fCenter \vdots $
\LeftLabel{\exr}
\UnaryInf$P(c,f\beta_1),\ldots ,P(f\beta_{n-2},f\beta_{n-1}),\Gamma_{2n-2}^l \;\fCenter\seq 
\exists\tuples{y}.\overline{C_n'}\!\lor\! D'$
\LeftLabel{\exl}
\UnaryInf$ \fCenter \vdots $
\LeftLabel{\alll}
\UnaryInf$P(c,f\beta_1),\Gamma_{2}^l \;\fCenter\seq 
\exists\tuples{y}.\overline{C_n'}\!\lor\! D'$
\LeftLabel{\exl}
\UnaryInf$\exists y.P(c,fy),\forall x\exists y.P(x,fy),\Gamma_{1}^l \;\fCenter\seq 
\exists\tuples{y}.\overline{C_n'}\!\lor\! D'$
\LeftLabel{\alll}
\UnaryInf$\forall x\exists y.P(x,fy),\forall \tuples{x}.A_n'\!\land\! B' \;\fCenter\seq 
\exists\tuples{y}.\overline{C_n'}\!\lor\! D'$
\DisplayProof
\end{fCenter}
\end{center}
and $\mathcal{A}_n',\mathcal{B}_n',\overline{\mathcal{C}_n'},$ and $\mathcal{D}_n$ are 
instantiations of $\forall \tuples{x}.A_n'\land B'$ and $\exists\tuples{y}. 
\overline{C_n'}\lor D'$. 
To check the correctness of the proofs we look at the instantiations and the leaves 
of a maximal \textbf{G3c}-derivation. Let $\beta_1,\ldots ,\beta_{n-1},$ and $\alpha$ 
be the derivations of the cut-formula then 
\begin{align*}
\mathbf{A}':= &\; \{\alpha\} ,\; 
\mathbf{B}_n':= \{ (\alpha,f_i\alpha )\; |\; i\in\{ 1,\ldots ,n\}\} ,\\
\mathbf{C}_n':= &\; \{ (c,\beta_1,\ldots ,\beta_{n-1})\} ,\;
\mathbf{D}_n':= \{ (c,\beta_{n-1})\} 
\end{align*}
are the sets of instantiations. We can define the substituted formulas as follows
\begin{align*}
\mathcal{A}_n':= &\; \{ (\lambda x.A_n')t\; |\; t\in\mathbf{A}'\} ,\; 
\mathcal{B}_n':= \{ (\lambda \tuples{x}.B')\tuples{t}\; |\; \tuples{t}\in\mathbf{B}_n'\} ,\; \\ 
\mathcal{C}_n':= &\; \{ (\lambda \tuples{x}.C_n')\tuples{t}\; |\; \tuples{t}\in\mathbf{C}_n'\} 
,\;\overline{\mathcal{C}_n'}:= \{ P\; |\; \overline{P}\in\mathcal{C}_n'\} ,\; 
\mathcal{D}_n':= \{ (\lambda \tuples{x}.D')\tuples{t}\; |\; \tuples{t}\in\mathbf{D}_n'\} .
\end{align*}
The leaves of the left branch $\mathbb{L}_l$ in literal normal form (see Definition 
\ref{def:dual sequent operator}) are 
\begin{align*}
\{ &\; P(\alpha ,h\alpha ), 
\{\neg P(\alpha ,f_i\alpha )\; |\; i\in I_1\} , 
\{ P(\alpha ,ff_i\alpha )\; |\; i\in I\backslash I_1\}, \\ 
&\; \{\neg P(\alpha ,ff_i\alpha )\; |\; i\in I\}  \seq\;\; |\; 
h\in \{ f_1,\ldots ,f_n\},I=\mathbb{N}_n,I_1\subseteq I \;\} 
\end{align*}
and the leaves of the right branch $\mathbb{L}_r$ in literal normal form are 
\begin{align*}
\{ &\; \{\neg P(c,f\beta_1)\; |\; \text{if }j=1\} , \\ 
&\; \{\neg P(f\beta_1,f\beta_2)\; |\; \text{if }j=2\} ,\ldots 
,\{\neg P(f\beta_{n-2},f\beta_{n-1})\; |\; \text{if }j=n-1\}, \\ 
&\;\{P(c,g\beta_{n-1})\; |\; \text{if }j=n\}, \\
&\;\neg P(c,g\beta_{n-1}), \\ 
&\;\{ P(f\beta_{j-1} ,f\beta_i)\; |\; i\in \mathbb{N}_{n-1}\text{ and } 
f\beta_0:=c\} \seq \\
& |\; h\in \{ f_1,\ldots ,f_n\},I=\mathbb{N}_n,I_1\subseteq I,j\in I\;\} .
\end{align*}
Let us assume a leaf $\mathcal{L}$ of the set $\mathbb{L}_l$. It contains an atom 
$P(\alpha ,f_k\alpha )$ for a given $k\in\mathbb{N}_n$. If $k\in I_1$ then 
$\mathcal{L}$ contains also $\neg P(\alpha ,f_k\alpha )$ and is therefore a tautology. 
Let us assume $k\in I\backslash I_1$. Then $P(\alpha ,ff_k\alpha )$ is an element of 
$\mathcal{L}$. But each leaf contains the set $\{\neg P(\alpha ,ff_i\alpha )\; 
|\; i\in I\}$, i.e.\ $\mathcal{L}$ contains also $\neg P(\alpha ,ff_k\alpha )$ and is 
a tautology.

Let us assume a leaf $\mathcal{L}$ of the set $\mathbb{L}_r$. Then it contains 
the set $\{ P(r_i,f\beta_i)\; |\; i\in\mathbb{N}_{n-1}\}$. If 
$j\in\mathbb{N}_{n-1}$ we get the dual of an 
element of $\{ P(r_i,f\beta_i)\; |\; i\in\mathbb{N}_{n-1}\}$. If $j=n$ the leaf 
contains $P(c,g\beta_{n-1})$ and $\neg P(c,g\beta_{n-1})$. Hence, all leaves in 
$\mathbb{L}_r$ are tautologies and thus, the proof scheme is a correct. 

\smallskip

Now we compute the quantifier complexity of the proof sequence. Let 
$|\mathcal{A}_n\land \mathcal{B}_n|$ denote the number of instantiations of 
$\forall \tuples{x}.A_n'\land B'$ and $|\mathcal{C}_n\lor\mathcal{D}_n|$ denote the number 
of instantiations of $\exists\tuples{y}.\overline{C_n'}\lor D'$.
The number of instantiations of the end-sequent is 
\begin{align*}
|\mathcal{A}_n\land\mathcal{B}_n|= &\; n+2 ,\; 
|\mathcal{C}_n\lor\mathcal{D}_n|= n+2 ,\; 
|\mathcal{A}_n\land\mathcal{B}_n|+|\mathcal{C}_n\lor\mathcal{D}_n|= 2\cdot n+4  
\end{align*}
and the number of instantiations of the cut-formula is $2\cdot n-1$, and therefore the 
quantifier complexity of the proof is 
\begin{align*}
|\pi_n|_q=|\mathcal{A}_n\land\mathcal{B}_n|+|\mathcal{C}_n\lor\mathcal{D}_n|+ 
(2\cdot n-1)= &\; 4\cdot n+3 \in O(n) . 
\end{align*}
The logical complexity can easily be verified by counting. Hence, the number of 
inferences is 
\begin{align*}
|\pi_n|_l= (n\cdot n^{1})\cdot n\cdot n+n+1+1=n^4+n+2 . 
\end{align*}
To give an upper bound on the symbol complexity we have to compute the maximal symbol 
complexity of the sequents appearing in the proofs. This depends heavily on the used 
sequent calculus and the order of the proofs. Therefore, we will assume a polynomial 
function $\mathcal{P}(\cdot )$ that maps from natural numbers to natural numbers such 
that the maximal size of each sequent in the proofs is smaller than $\mathcal{P}(n)$. 
The interested reader is invited to prove the existence of such a function. Given 
$\mathcal{P}$ we can define the upper polynomial bound  
\begin{align*}
|\pi_n|_s\le 2\cdot\mathcal{P}(n)\cdot |\pi_n|_q+|\pi_n|_q.
\end{align*}

\smallskip

While the complexity in terms of logical inferences, in terms of weak quantifier 
inferences, or in terms of symbol complexity is bigger than $n^n$ 
for the cut-free proofs the introduction of the 
$\Pi_2$-cut decreases the complexity by an exponential factor.

\section{Implementation and Experiments}\label{sec.experiments}

We already saw in the previous section that the starting set for the $\Gcal^*$-unified literals in the example of Section \ref{sec:Example} consists only of a single literal. For the purpose of measuring the size of this set and testing the applicability, we implemented the method in the \textbf{GAPT} framework~\cite{Ebner2016System} (\textbf{G}eneral \textbf{A}rchitecture for \textbf{P}roof \textbf{T}heory). A prototype of the algorithm is released with version $2.5$ of \textbf{GAPT} which is available at \href{https://www.logic.at/gapt}{https://www.logic.at/gapt}. Until the up-to-date version of the algorithm is contained in a new \textbf{GAPT} release, it can be found in the branch \href{https://github.com/gapt/gapt/tree/pi2-cut-intro-clean-up}{pi2-cut-intro-clean-up} of the \emph{Git repository} \href{https://github.com/gapt/gapt}{https://github.com/gapt/gapt}. The results displayed in this section are based on this new version. Note that due to the early stage of the implementation, we do not have many examples. Furthermore, schematic $\Pi_2$-grammars have to be computed by hand. All cuts introduced by the algorithm contain only cut formulas with at most one variable per quantifier. Therefore, we only give an intuition of the method's applicability by comparing the number of non-tautological leaves (not counting duplications and supersets of leaves) with the number of $\Gcal^*$-unified literals, and the runtime for computing the cut-formula and the proof with cut:
\begin{center}
\begin{tabular}{c|c|c}
\textbf{Limited number of} & $\Gcal^*$\textbf{-unified} & \textbf{Runtime} \\
\textbf{non-tautological leaves} & \textbf{literals} & \\
\hline
 2 & 2 & 884ms \\
 16 & 4 & 951ms \\
 44 & 4 & 1s 94ms \\
 86 & 3 & 1s 953ms \\
 308 & 6 & 2s 292ms \\
 1386 & 6 & 24s 652ms
\end{tabular}
\end{center}
This tests give just an impression of the algorithmic behaviour, but it shows that the number of $\Gcal^*$-unified literals stays small. The runtime for constructing the cut formula for the given set of $\Gcal^*$-unified literals is by far the largest part of the whole runtime. This is due to a naive implementation of the characterization (see Definitions \ref{def:Set of possible sets of clauses} and \ref{def:solution set}). Note that the characterization computes a set of solutions while only one is needed. Here, improvements have to be found. The construction of $\Gcal^*$-unified literals, however, is fast and shows that the method is feasible in practice.

\section{Conclusion}

In this paper we extended the current range of algorithmic cut introduction from 
$\Pi_1$-cuts to $\Pi_2$-cuts. While any $\Pi_1$-grammar specifying the set of Herbrand 
instances of a cut-free proof yields a solution of the corresponding 
$\Pi_1$-cut-introduction problem (the canonical solution) this does not hold 
for schematic $\Pi_2$-grammars and $\Pi_2$-cuts; in Section~\ref{sec.cutintro} we have 
presented a schematic $\Pi_2$-grammar specifying a set of Herbrand instances which is 
not solvable in the sense that the corresponding schematic extended Herbrand sequent 
(representing the cut-introduction problem) does not have a solution. As for the 
$\Pi_2$ case canonical solutions do not exist in general we have chosen a different 
approach to compute $\Pi_2$-cuts corresponding to given schematic $\Pi_2$-grammars by 
characterizing the validity of cut formulas. 
Given a starting set (a set of sets of literals with two designated free 
variables $x,y$ of the intended $\Pi_2$-cut formula $\forall x\exists y.A$) we have 
developed a method to decide whether this 
starting set contains a logical equivalent version of such a formula $A$. However, 
the general problem to decide whether such a 
starting set exists at all remains unsolved.  But in case {\em balanced solutions} exist 
appropriate starting sets can be defined. However, the straightforward method to 
construct naive starting sets for balanced problems is computationally  
inefficient. To improve the resulting cut-introduction method we developed a 
unification method which yields much smaller sufficient starting sets for the 
computation of the cut-formulas. We have shown how to generalize our cut-introduction methods to
cut-formulas with blocks of quantifiers. Finally we have shown that our method of introducing 
(single) $\Pi_2$-cuts is capable of achieving an exponential proof compression: there 
exists a sequence of sequents having only cut-free proofs of at least exponential 
size for which the method based on $G^*$-unification efficiently generates a sequence of 
proofs of polynomial size with $\Pi_2$ cuts.

\smallskip

Concerning future work, we plan to improve the implementation of the $G^*$-unification method developed in this paper and to test it on proof data bases. This would benefit from an algorithmic construction of schematic $\Pi_2$-grammars, one of the problems on our research plan for the near future. For practical applications the implementation should be enriched by the use of effective heuristics, especially the construction of a cut formula for a set of $\mathcal{G}^*$-unified literals. There are also several open theoretical questions: is it possible to construct starting sets whenever there is a solution to the cut-introduction problem and to decide whether the $\Pi_2$-cut-introduction problem is solvable at all?  A positive answer would yield a decision procedure for the $\Pi_2$ cut-introduction problem, and (in case of solvability) a complete method to construct proofs with $\Pi_2$-cuts. So far our method can only deal with a single cut formula of the form $\forall\tuples{x} \exists\tuples{y}.A(\tuples{x},\tuples{y})$. An extension of the method to the introduction of several $\Pi_2$-cuts promises the same compression as can be obtained by a single $\Pi_3$-cut, i.e.\ a {\em super-exponential} one. Finally, a method to introduce $\Pi_n$-cuts could be capable of a nonelementary proof compression and would represent a long term goal of this research.

\bibliographystyle{apalike}
\bibliography{bibPi2CutIntroduction}

\end{document}